\documentclass{article}
\usepackage{amsmath}
\usepackage{amsthm}
\usepackage{amssymb}
\usepackage{graphicx}
\usepackage{xcolor}

\newtheorem{theorem}{Theorem}[section]
\newtheorem{proposition}[theorem]{Proposition}
\newtheorem{lemma}[theorem]{Lemma}
\newtheorem{corollary}[theorem]{Corollary}
\newtheorem{definition}[theorem]{Definition}
\newtheorem{example}[theorem]{Example}
\newtheorem{remark}[theorem]{Remark}

\DeclareMathOperator{\dist}{dist}

\raggedbottom

\begin{document}

\title{On the space of compact diamonds of Lorentzian length spaces}

\author{Waldemar Barrera, Luis Montes de Oca, Didier A. Solis}

\date{\today}

\maketitle

\abstract{
In this work we introduce the taxicab and uniform products for Lorentzian pre-length spaces. We further use these concepts to endow the space $\mathcal{D}(\mathbb{R}\times_T X)$ of causal diamonds with a Lorentzian length space structure, closely relating its causal properties with its geometry as a metric space furnished with its associated Hausdorff distance. Among the general results, we show that this space is geodesic and globally hyperbolic for complete length space $(X,d)$.}

\medskip

\textbf{Keywords:} \textit{Lorentzian length spaces, Causality, Hyperspaces}

\medskip

\textbf{MSC Classification} \textit{53C23, 53C22, 51F30, 54F16, 83C99}

\section{Introduction}\label{sec1}

In the past few years there has been a remarkable interest in exploring synthetic alternatives to well established  notions arising in Lorentzian geometry.  The seminal work of Kunzinger and S\"amann on Lorentzian length spaces \cite{KS} has paved the way to introduce such methods in the realm of Relativity, in which other promising approaches are currently been developed, as is the case of Lorentzian metric spaces \cite{minsur} or almost Lorentzian length spaces \cite{Olaf2} just to mention a few. These developments are suitable candidates to provide useful tools for the  analysis of physical scenarios were the standard regular (at least $C^2$) framework is no longer assumed, either from the theoretical onset ---for instance, when considering $C^{1,1}$ \cite{Graf1,Graf2,Kun1,Kun2}, $C^1$ \cite{Graf3,Kun3}  or even $C^0$ \cite{galloling,grantc0,grantin,hevelin02,sbierski} metrics--- or in the interest of further our understanding of the most recent observation of black hole merging and gravitational waves \cite{ebh,ligo}.

A cornerstone in Kunzinger and S\"amann development consists in the axiomatization of the fundamental causal properties of spacetime through the notion of a \textit{Lorentzian pre-length space}, which can be regarded as a refinement of the concept of causal space first introduced by Kroenheimer and Penrose in the early years of Mathematical Relativity \cite{kpcs}. As an example of the value of building a causal theory on axiomatic grounds we cite the causet approach to Quantum Gravity (see \cite{surya} and references therein), in which no spacetime metric is considered at all. The existence of a time separation function, as proposed in \cite{KS}, opens the scope of applications to settings where no regular spacetime metric is given a priori, as is the case of Lorentzian manifolds with timelike boundary \cite{ACS}, causal completions of globally hyperbolic spacetimes \cite{ABS} or even some class of contact structures \cite{hendike}.

Moreover, by considering a suitable class of causal curves, a synthetic notion of curvature bounds is defined and its relation to its geodesic structure analyzed in a way akin to the well established theory of Alexandrov and CAT(k) metric length spaces \cite{burago}. In this regard, several results inspired in this theory are currently been pursued in the Lorentzian pre-length space scenario. For instance, Toponogov type triangle comparison \cite{solismontes}, or the existence of hyperbolic angles and exponential maps \cite{beran01}. Applications of these results include gluing and amalgamation \cite{beran02}, as well as a splitting theorem that generalizes the landmark result for Lorentzian manifolds with non-positive timelike sectional curvature \cite{splitting}.

Indeed, curvature bounds are essential in some of the most remarkable results pertaining to metric length spaces, such as the celebrated pre-compactness theorems of Gromov \cite{GromovBook,Petersen}. Recall that these results can be interpreted in terms of stability of curvature bounds with respect to Gromov-Hausdorff convergence.  Thus, it is natural to explore the stability of different aspects of the theory of Lorentzian length spaces under suitable notions of convergence. For instance,  in \cite{kunzinger02} Kunzinger and Steinbauer address the behaviour of curvature bounds under Gromov Hausdorff convergence of a class of warped product spacetimes via the null distance first introduced by Sormani and Vega \cite{Sormani}. Another key aspect of the theory closely tied to the notion of Gromov distance was studied in \cite{MCann}, where McCann and S\"amann establish notions of Hausdorff measure and dimension in the context of Lorentzian length spaces. Let us emphasize that in this work causal diamonds play a fundamental role just as (compact) metric balls do in the context of metric length spaces.

In this work we aim at providing a first glance on the structure of the space of causal diamonds of a globally hyperbolic Lorentzian pre-length space $(X,d,\ll ,\le ,\tau)$, furnished with the Hausdorff distance $d_H$ induced by $d$. Recall that the metric spaces thus obtained from families of compact subsets of a metric space --or isometry classes thereof-- are commonly referred to as \emph{hyperspaces} and their study is cornerstone in the theory of continua \cite{Nadler1, Nadler2} as well as infinite dimensional topological manifolds and function spaces \cite{Chapman,VanMill}. From the topological point of view, the celebrated theorem of Curtis establishes that the hyperspace $\mathcal{H}(\mathbb{R}^n)$ of compact subsets of  $\mathbb{R}^n$ is homeomorphic to the punctured Hilbert cube \cite{Curtis}, and a similar characterization holds for the hyperspace $\mathbb{GH}(\mathbb{R}^n)$ of isometry classes under the Gromov Hausdorff distance \cite{sergey1,sergey2}. In spite of these accurate topological descriptions, the geodesic structure of hyperspaces remains unexplored for the most part. For instance, only recently  the intrinsic nature of $\mathbb{GH}(\mathbb{R}^n)$ has been established \cite{Memoli,Ivanov}. Moreover, by the well known Gromov's Embedding Lemma, a family of compact metric spaces can be embedded in a $\ell^\infty$ space endowed with its Hausdorff distance \cite{Gromov}. Thus, exploring the geodesic structure of the space of causal diamonds with respect to the Hausdorff distance is an important step in the program of understanding the geometry of its corresponding Gromov Hasudorff hyperspace. By focusing on the class of Lorentzian taxicab spaces we provide a precise description of the space of causal diamonds of Minkowski space and establish that it has the structure of a globally hyperbolic Lorentzian length space, and thus it satisfies de Avez-Seifert property. Moreover, we give explicit parametrizations for maximal curves joining any pair of causally related diamonds.

This paper is organized as follows. In Section \ref{sec:prelim} we state the basic definitions and establish the notation to be used throughout this work. In Section \ref{sec:products} we describe the Lorentzian taxicab product of a Lorentzian pre-length space and a metric space; as well as the uniform Lorentzian product of two Lorentzian pre-length spaces, and study in more detail the case of the Lorentzian taxicab product $\mathbb{R}^1_1\times_T X$. In Section \ref{sec:diamonds} we prove that the hyperspace of compact subsets of a Lorentzian pre-length space can be endowed with such a structure. Finally, as application of this fundamental result, in Section \ref{sec:apps} we study the space of causal diamonds $\mathcal{D}(\mathbb{R}^1_1\times_T X)$ and prove that it admits a structure of globally hyperbolic Lorentzian Length space, give explicit parametrization for a family of maximal timelike curves and provide an explicit realization as a subspace of a Lorentzian uniform product.

\section{Preliminaries}\label{sec:prelim}

In this section we set up the basic notions on Lorentzian length spaces and the Hausdorff distance.

\subsection{Lorentzian length spaces}

There are two fundamental aspects in mathematical relativity that are rooted in Lorentzian geometry. In one hand we have the causal structure, and in the other, the locally Minkowskian nature of spacetime; the former coming from the causal character of vectors and the latter from properties of the exponential map. Any attempt to axiomatize  the fundamental properties of spacetime must take into account these issues in a way that does not depend on the existence of a smooth metric tensor. In their seminal work  Kunzinger and S\"amann develop an original notion of Lorentzian length of causal curves by encoding causality in the definition of Lorentzian pre-length space and by describing the local structure via the so called localizing neighborhoods (see \cite{KS} for a detailed account).

\begin{definition}\label{defi:lpls}
A Lorentzian pre-length space is quintuple $(X,d,\ll ,\le ,\tau)$ where $(X,d)$ is a metric space and
\begin{enumerate}
\item $\ll$ and $\le$ are two relations in $X$ that satisfy
\begin{itemize}
\item $\le$ is a pre-order.
\item $\ll$ is a transitive relation containing $\le$.\footnote{In \cite{kpcs} Kronheimer and Penrose define a \emph{causal space} as a triple $(X,\ll ,\le )$ that satisfies properties property (1) plus the causality axiom: $x\le y$ and $y\le x$ $\Leftarrow$ $x=y$.}
\end{itemize}
\item $\tau : (X,d)\to [0,\infty]$ is a lower semi-continuous function for which the reverse triangle inequality holds:
\[
\tau(x,z)\ge \tau (x,y)+\tau (y,z),\quad \forall x\le y\le z .
\]
\end{enumerate}
\end{definition}
In this setting, $\ll$ and $\le$ are the \textit{chronological} and \textit{causal} relations, respectively; whereas $\tau$ is called a \textit{time separation function.} The relations $\ll$, $\le$ enable us to define the chronological and causal sets in the standard way:
\begin{eqnarray*}
I^+(x)=\{y\in X\mid x\ll y \},&\quad& J^+(x)=\{y\in X\mid x\le y\},\\
I^-(x)=\{y\in X\mid y\ll x\},&\quad& J^+(x)=\{y\in X\mid y\le x\}.\\
\end{eqnarray*}

\begin{example}
The set of real numbers $X=\mathbb{R}$ is a Lorentzian pre-length space when we define the chronological and casual relations by $<$ and $\le$, respectively. Here $d(x,y)=\vert y-x\vert$ and
\[
\tau (x,y)=
\left\{ 
\begin{array}{cl}
y-x & \textrm{if } x<y,\cr
0 & \textrm{otherwise.}
\end{array}
\right.
\]
In what follows, we will denote this Lorentzian pre-length space by  $\mathbb{R}^1_1$.
\end{example}

\begin{example}
Any smooth spacetime $(M,g)$ is a Lorentzian pre-length space when we consider any metric $d$ inducing the manifold topology. The relations $\ll$ and $\le$, as well as the time separation function $\tau$ are the standard ones constructed from the Lorentzian metric $g$.   
\end{example}

Lorentzian pre-length spaces have just enough structure to guarantee the most basic results in causality theory. As straightforward consequence of the definition we have the following lemma.
\begin{lemma}
Let $(X,d,\ll ,\le ,\tau)$ be a Lorentzian pre-length space, then
\begin{enumerate}
\item The chronological sets $I^+(x)$, $I^-(x)$ are open in $(X,d)$.
\item If $x\le y\ll z$ or $x\ll y\le z$ then $x\ll z$. (Push-up property).
\end{enumerate}
\end{lemma}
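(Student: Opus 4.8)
The plan is to reduce both assertions to elementary properties of the time separation function $\tau$, using the standard compatibility between $\tau$ and the causal relations that I would make explicit at the outset: $\tau(x,y)>0$ if and only if $x\ll y$, and $\tau(x,y)=0$ whenever $x\not\le y$. With this in hand the statement becomes almost purely formal.

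For (1), I would first rewrite the chronological sets purely in terms of $\tau$, namely $I^+(x)=\{y\in X:\tau(x,y)>0\}$ and $I^-(x)=\{y\in X:\tau(y,x)>0\}$. Since $\tau\colon X\times X\to[0,\infty]$ is lower semicontinuous, each one-variable slice $\tau(x,\cdot)$ and $\tau(\cdot,x)$ is lower semicontinuous on $(X,d)$ (a convergent sequence in $X$ gives a convergent sequence in $X\times X$ with one coordinate held fixed). Hence the strict superlevel sets above are open, being preimages of the open ray $(0,\infty]$ under a lower semicontinuous map; equivalently, if $y_n\to y$ with $\tau(x,y_n)=0$ for all $n$, then $\tau(x,y)\le\liminf_n\tau(x,y_n)=0$, so the complement of $I^+(x)$ is closed. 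The same argument, applied to the first slot, handles $I^-(x)$.

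For (2), assume first $x\le y\ll z$. Because $\ll$ is contained in $\le$ we have $y\le z$, and transitivity of $\le$ gives $x\le z$; thus the reverse triangle inequality is applicable along $x\le y\le z$ and yields
\[
\tau(x,z)\ \ge\ \tau(x,y)+\tau(y,z)\ \ge\ \tau(y,z)\ >\ 0 ,
\]
where we used $\tau\ge 0$ and $\tau(y,z)>0$ (since $y\ll z$). Therefore $\tau(x,z)>0$, i.e.\ $x\ll z$. The case $x\ll y\le z$ is entirely symmetric: now $\tau(x,y)>0$, and $\tau(x,z)\ge\tau(x,y)+\tau(y,z)\ge\tau(x,y)>0$, so again $x\ll z$.

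The argument is essentially formal and I do not anticipate a serious obstacle. The one point that genuinely uses the hypotheses rather than mere bookkeeping is the appeal to lower semicontinuity in part (1): it is precisely this regularity of $\tau$ that forces the set $\{\tau>0\}$ to be open. For part (2) the only thing to watch is verifying $x\le z$ before invoking the reverse triangle inequality, so that the inequality is legitimately available.
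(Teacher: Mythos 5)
Your proof is correct. The paper itself offers no argument for this lemma --- it is stated as ``a straightforward consequence of the definition'' --- and what you have written is precisely the standard argument that fills that gap: openness of $I^{\pm}(x)$ from lower semicontinuity of the slices of $\tau$ together with the characterization $x\ll y\Leftrightarrow\tau(x,y)>0$, and push-up from the reverse triangle inequality applied along $x\le y\le z$. The one point worth emphasizing is the one you already flagged: the compatibility conditions $\tau(x,y)>0\Leftrightarrow x\ll y$ and $\tau(x,y)=0$ for $x\not\le y$ do not appear in the paper's written Definition~2.1 (they are part of the standard Kunzinger--S\"amann definition and are verified by the authors whenever they construct new pre-length spaces), and without them part (1) would actually fail, so making them explicit at the outset, as you do, is not optional bookkeeping but a necessary repair of the stated hypotheses; likewise your reading that $\ll$ is \emph{contained in} $\le$ (rather than ``containing,'' as the paper's definition literally says) is the correct and intended one, and is what makes $y\ll z\Rightarrow y\le z$ available in part (2).
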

The relations $\ll$, $\le$ can be used to define the class of causal curves in a Lorentzian pre-length space. We say that a Lipschitz continuous curve $\alpha : I\to (X,d)$ is \textit{future timelike (causal)} if for any $t<s$ we have $\alpha (t)\ll \alpha (s)$ ($\alpha (t)\le \alpha (s)$). Past timelike and causal curves are defined analogously. If a curve is causal and no pair of its points are chronologically related we call it \textit{null}\footnote{Observe that if a smooth spacetime $(M,g)$ does not have closed causal curves then this notion of causal character of curves coincides with the usual one.}. Notice that since the relations $\ll$, $\le$ are not defined in terms of curves, it might happen that causally related points can not be joined by any causal curve. A Lorentzian pre-length space $(X,d,\ll ,\le ,\tau )$ for which any chronologically (causally) related pair of points there exists a timelike (causal) curve connecting them is called \textit{causally path connected}.

Normal neighborhoods are essential for describing the local geometry of a smooth spacetime. The analog for Lorentzian length spaces is the concept of \textit{localizing neighborhoods}.

\begin{definition}
Let $(X,d\ll ,\le ,\tau )$ be a causally path connected Lorentzian pre-length space. A neighborhood $\Omega_x$ around  $x\in X$ is localizing if  there exists a continuous functions $\omega_x:\Omega_x\times\Omega_x\to [0,\infty)$ for which 
\begin{enumerate}
\item $(\Omega_x,d\vert_{\Omega_x\times\Omega_x},\ll_{\Omega_x},\leq_{\Omega_x}, \omega_x)$ is a Lorentzian pre-length space, where
$\le_{\Omega_x}$, $\ll_{\Omega_x}$ denote the relations induced by $\ll$, $\le$ in $\Omega_x$\footnote{That is, $p\ll_{\Omega_x}q$ ($p\le_{\Omega_x}q$) if and only if there exists a future timelike (causal) curve in $\Omega_x$ from $p$ to $q$}.
\item $I^\pm (y)\cap \Omega_x\neq\emptyset$, for all $y\in\Omega_x$.
\item All causal curves contained in $\Omega_x$ have uniformly bounded $d$-length.
\item For all $p\neq q\in \Omega_x$ with $p\le q$ there exists a future causal curve $\gamma_{pq}$ contained in $\Omega_x$ such that $L_\tau(\gamma_{pq})=\omega_x(p,q)$ and whose $\tau$-length is maximal among all future causal curves from $p$ to $q$ lying in $\Omega_x$.
\end{enumerate}
If in addition, the following holds: if $p\ll q$ then $\gamma_{pq}$ is timelike and $L(\gamma_{pq})>L(\gamma )$ for each causal curve containing a null segment, then we say the neighborhood is \emph{regular}. 
\end{definition}

Roughly speaking, a Lorentzian length space is a pre-length space with a local structure provided by localizing neighborhoods and whose time separation function can be recovered from the $\tau$-length of curves. In precise terms, we have the following (see Definitions 3.22 in \cite{KS} and 2.27 in \cite{ACS}).

\begin{definition}\label{defi:lls}
A causally path connected Lorentzian pre-length space for which:
\begin{enumerate}
    \item Every point $x$ has a localizing neighborhood $\Omega_x$.
    \item Every point has a neighborhood in which the relation $\le$ is closed\footnote{This condition is referred to as $(X,d,\ll, \le ,\tau)$ being \textit{locally causally closed.}}.
    \item $\tau (x,y)=\mathcal{T}(x,y)$, for all $x,y\in X$, where
\[
\mathcal{T}(x,y) = \sup\{L_{\tau}(\gamma): \gamma\mbox{ is future-directed causal curve from $x$ to $y$}\}.\footnote{We set $\mathcal{T}(x,y)=0$ when $y\not\in J^+(x)$.}
\]
is called a \emph{Lorentzian length space}.  
\end{enumerate}
\end{definition}

\begin{remark}
Notice that if there is a $\tau$-length realizing causal curve $\gamma $ (that is, $\tau (x,y)=L_\tau (\gamma )$) joining any two causally related points, then condition (3) in Definition \ref{defi:lls} is satisfied. If this is the case, such curves are called \textit{maximal} and we say that  $(X,d,\\ , \le, \tau )$ is a \textit{geodesic} Lorentzian pre-length space\footnote{In Lorentzian geometry, this condition is also known as the Avez-Seifert property}. 
\end{remark}

Examples of Lorentzian length spaces include strongly causal smooth spacetimes \cite{KS}, spacetimes with timelike boundary \cite{ACS},  causally plain spacetimes with $C^0$ metrics \cite{grantc0,hevelin02} and cone structures \cite{mingucone,samann03}.

A causal hierarchy for Lorentzian length spaces can be established in close resemblance to the standard causal hierarchy of smooth spacetimes \cite{ACS}. However, some of the most meaningful levels --like strong causality, non-imprisonment and globally hyperbolicity-- can still be defined for Lorentzian pre-length spaces in general \cite{KS}.

\begin{definition}
A Lorentzian pre-length space $(X,d, \ll ,\leq ,\tau)$ is  \emph{non-totally imprisoning} if there is an upper bound $C(K)$ for all the $d$-lengths of causal curves contained in a compact  set $K$. A non-totally imprisoning Lorentzian pre-length space is \emph{globally hyperbolic} if all its causal diamonds $J^+(p)\cap J^-(q)$ are compact.
\end{definition}

As expected, globally hyperbolic Lorentzian length spaces share in common with their smooth counterparts some remarkable features. For instance, the Avez-Seifert property and the continuity of the time separation \cite{ACS,KS}.

\subsection{Hausdorff distance}

Let $(X,d)$ be a metric space. We define the $d-$\textit{length} of a (continuous) curve $\gamma :[a,b]\to (X,d)$ as 
\[
L_d(\gamma ) =\sup_P\{ S(P)\}
\]
where the supremum is taken over all partitions $P$, $a=t_0<t_1<\cdots <t_{k-1}<t_k=b$, of the closed interval $[a, b]$ and
\[
S(P)=\sum_{i=1}^k d(\gamma (t_{i-1}), \gamma (t_i)) 
\]
is the length of the polygonal curve with vertices on $\gamma$.

A \textit{length space} is a metric space  in which the metric $d$ is induced by the length functional $L_d$. In precise terms,  a metric space $(X, d)$ is a length space if 
\[
d(p, q) = \inf\{L_d(\gamma ) \mid \gamma \textrm{ is a curve joining }p \textrm{ and }q\},\quad \forall p, q\in X.
\]
A curve $\gamma :[a,b]\to (X,d)$ that realizes distance between any pair of its points --that is, if $L_d(\gamma ) = d(p, q)$-- is called a \textit{geodesic segment}. If every pair of points in $(X,d)$ can be joined by a geodesic segment we say that the space is \textit{geodesic.} In such a case we consider all geodesic segments parameterized with respect to arc length. Let us recall that any complete and locally compact length space is geodesic (see for instance Thrm. 2.5.23 \cite{burago}).

The Hausdorff distance $d_H$ of the metric space $(X, d)$ is defined on the set of closed subsets of $(X,d)$ as
\[
d_H(A, B) = \inf \{r \mid A \subset U_r (B), B \subset U_r (A)\}
\]
where $B_r(a)$ is the (open) metric ball of radius $r$ centered at $a$ and  $U_r (A) = \{x \in X \mid \dist (x, A) <r\} = \cup_{a\in A}B_r (a)$ denotes the tubular neighborhood of $A$ of radius $r$ and $\dist (x, A) = \inf \{d(x, a) \mid a\in A\}$. We now list some standard results pertaining tubular neighborhoods and the Hausdorrf distance that are useful for explicit computations

\begin{lemma}\label{TubularCompact}
Let $X$ be a locally compact and complete length space. Then for every compact subset $A$ of $X$ we have:
\begin{enumerate}
\item $\overline{U}_{t}(A)$ is compact for every $t\geq 0$.
\item $\overline{U}_{t}(\overline{U}_{s}(A)) = \overline{U}_{t+s}(A)$ for every $t,s\geq 0$.
\item $d_H(A, B) = \inf \{r \mid A \subset \overline{U}_r (B), B \subset \overline{U}_r (A)\}$,
\item $d_H(A, B) = \max\{ \sup \{ \dist(a, B)\mid {a\in A}\} , \sup \{ \dist(b, A)\mid {b\in B}\} .$
\item There exist midpoints: for every $x,y\in X$ there is $z\in X$ with $d(x,z)=d(y,z)=d(x,y)/2$.
\item Closed $d$-balls are compact. 
\end{enumerate}
\end{lemma}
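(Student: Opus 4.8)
The plan is to reduce everything to the Hopf--Rinow theorem for length spaces (cf.\ Thm.~2.5.23 in \cite{burago} and its usual companion statement), which under the present hypotheses asserts that $(X,d)$ is a geodesic space and that every closed bounded subset of $X$ is compact. I would first fix the convention $\overline{U}_t(A)=\{x\in X:\dist(x,A)\le t\}$ and record that this set is closed, since $x\mapsto\dist(x,A)$ is $1$-Lipschitz; a short argument (for $t>0$, slide from a point $x$ with $\dist(x,A)=t$ along a geodesic toward a nearest point of the compact set $A$ to reach points of $U_t(A)$ arbitrarily close to $x$) identifies it with the topological closure of $U_t(A)$ for $t>0$, while $\overline{U}_0(A)=A$ since $A$ is closed. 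Items (1) and (6) then follow immediately: $\overline{U}_t(A)$ is closed and, because $A$ is bounded, satisfies $\operatorname{diam}\overline{U}_t(A)\le\operatorname{diam}A+2t<\infty$, hence is compact by Hopf--Rinow; item (6) is the case $A=\{p\}$, for which $\overline{U}_t(\{p\})=\overline{B}_t(p)$.

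For item (2), the inclusion $\overline{U}_t(\overline{U}_s(A))\subseteq\overline{U}_{t+s}(A)$ is a direct triangle-inequality estimate. For the reverse inclusion I would take $x$ with $\dist(x,A)\le t+s$; if $\dist(x,A)<t$ then $x\in U_t(A)\subseteq\overline{U}_t(\overline{U}_s(A))$ because $A\subseteq\overline{U}_s(A)$, and otherwise compactness of $A$ yields a nearest point $a\in A$, a unit-speed geodesic $\gamma$ from $x$ to $a$ exists by Hopf--Rinow, and $y:=\gamma(t)$ satisfies $d(x,y)=t$ and $d(y,a)=d(x,a)-t\le s$, so $y\in\overline{U}_s(A)$ and $\dist(x,\overline{U}_s(A))\le t$.

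For items (3) and (4), the key observation is that compactness of $A$ makes $a\mapsto\dist(a,B)$ attain its supremum, so writing $h(A,B):=\sup_{a\in A}\dist(a,B)$ one has $A\subseteq U_r(B)\iff h(A,B)<r$ and $A\subseteq\overline{U}_r(B)\iff h(A,B)\le r$. From this I would conclude that
\[
\inf\{r:A\subseteq U_r(B),\ B\subseteq U_r(A)\}
=\inf\{r:A\subseteq\overline{U}_r(B),\ B\subseteq\overline{U}_r(A)\}
=\max\{h(A,B),h(B,A)\},
\]
which is precisely the content of items (3) and (4). Item (5) is immediate from the geodesic property: a unit-speed geodesic $\gamma:[0,d(x,y)]\to X$ joining $x$ and $y$ has $z:=\gamma(d(x,y)/2)$ as a midpoint.

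I expect no step to be a serious obstacle; the only mildly technical points are the identification $\overline{U}_t(A)=\overline{U_t(A)}$ for $t>0$ and the reverse inclusion in (2), both of which rely on $A$ being compact so that nearest points are realized, together with the existence of geodesics from Hopf--Rinow. The degenerate case $t=0$ is dispatched by the convention $\overline{U}_0(A)=A$.
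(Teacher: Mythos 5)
Your proposal is correct. Note that the paper itself offers no proof of this lemma --- it is stated as a list of standard facts about tubular neighborhoods and the Hausdorff distance --- so there is no argument to compare yours against; the Hopf--Rinow-based proof you give (closed and bounded implies compact, existence of geodesics to nearest points of the compact set $A$ for the reverse inclusion in (2), and attainment of $\sup_{a\in A}\dist(a,B)$ by compactness for (3) and (4)) is the standard one and all steps check out. Your convention $\overline{U}_t(A)=\{x:\dist(x,A)\le t\}$ is also the one the paper implicitly uses later (e.g.\ in writing $\bigcap_k\overline{U}_{r+1/n_k}(B)=\overline{U}_r(B)$ in the proof of Proposition \ref{prop:Hgeo}), so your reconciliation of it with the topological closure of $U_t(A)$, while not strictly needed, is consistent with the text.
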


Given a metric space $(X,d)$, the set 
\[
\mathcal{H}(X)=\{C\subset X: \mbox{$C$ is a compact set of $(X,d)$}\}
\]
is commonly referred to as the \emph{hyperspace of compact sets} of $(X,d)$. When furnished with the Hausdorff distance, its topological structure is one of the main objects of study  in the theory of continua. Here we establish one of its most important geometric aspect, namely,  that $\mathcal{H}(X)$ is indeed geodesic.

\begin{proposition}\label{prop:Hgeo}
Let $X$ be a locally compact and complete length space. For every two compact sets $A,B\in\mathcal{H}(X)$ with $r=d_H(A,B)>0$, the path $\alpha:[0,r]\to \mathcal{H}(X)$ given by
\[
\alpha(t) = \overline{U}_{t}(A)\cap \overline{U}_{r-t}(B)
\]
is a geodesic segment connecting $A$ and $B$.
\end{proposition}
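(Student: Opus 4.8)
The plan is to show that $\alpha$ is distance preserving, i.e.\ $d_H(\alpha(s),\alpha(t))=|t-s|$ for all $s,t\in[0,r]$. Since the parameter interval $[0,r]$ has length exactly $r=d_H(A,B)$, this is precisely the assertion that $\alpha$ is a geodesic segment from $A$ to $B$ (for any $t_1<t_2$ one then gets $L_{d_H}(\alpha|_{[t_1,t_2]})=t_2-t_1=d_H(\alpha(t_1),\alpha(t_2))$), and it also makes $\alpha$ automatically continuous. Before that I must check the preliminary facts that each $\alpha(t)$ is a nonempty compact subset of $X$, so that $\alpha$ takes values in $\mathcal{H}(X)$, and that $\alpha(0)=A$, $\alpha(r)=B$. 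Throughout I will use that $X$ is a proper geodesic length space --- closed balls are compact by Lemma~\ref{TubularCompact}(6) and geodesicity is Thrm.~2.5.23 of \cite{burago} --- that $\overline{U}_t(C)$ is the closed metric $t$-neighborhood $\{x:\dist(x,C)\le t\}$, and the characterization of $d_H$ in Lemma~\ref{TubularCompact}(4).

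For the preliminaries: $\alpha(t)=\overline{U}_t(A)\cap\overline{U}_{r-t}(B)$ is closed and contained in the compact set $\overline{U}_t(A)$ (Lemma~\ref{TubularCompact}(1)), hence compact; and $d_H(A,B)=r$ forces $A\subseteq\overline{U}_r(B)$ and $B\subseteq\overline{U}_r(A)$ by Lemma~\ref{TubularCompact}(4), so $\alpha(0)=A\cap\overline{U}_r(B)=A$ and $\alpha(r)=\overline{U}_r(A)\cap B=B$. Nonemptiness is where the geometry of $X$ first enters: fixing $a\in A$, compactness of $B$ yields a nearest point $b\in B$ with $\ell:=d(a,b)=\dist(a,B)\le r$, and running a unit-speed minimizing geodesic from $a$ to $b$, the point $z$ at parameter $\min\{t,\ell\}$ satisfies $\dist(z,A)\le t$ and $\dist(z,B)\le r-t$, so $z\in\alpha(t)$.

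The core of the proof is the inequality $d_H(\alpha(s),\alpha(t))\le t-s$ for $s\le t$, which by Lemma~\ref{TubularCompact}(4) reduces to the two inclusions $\alpha(s)\subseteq\overline{U}_{t-s}(\alpha(t))$ and $\alpha(t)\subseteq\overline{U}_{t-s}(\alpha(s))$. For the first I would take $x\in\alpha(s)$ --- so $\dist(x,A)\le s$ and $\dist(x,B)\le r-s$ --- choose a nearest point $b\in B$ and a minimizing geodesic from $x$ to $b$, and let $y$ be the point at parameter $\min\{t-s,\dist(x,B)\}$ along it. Then $d(x,y)\le t-s$; moreover $\dist(y,A)\le\dist(x,A)+d(x,y)\le t$, while $d(y,b)=\dist(x,B)-d(x,y)$ forces $\dist(y,B)\le r-t$; hence $y\in\alpha(t)$ and $\dist(x,\alpha(t))\le t-s$. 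The reverse inclusion is proved symmetrically, moving a point of $\alpha(t)$ toward a nearest point of $A$. Granting this, the lower bound is immediate: applying the upper bound at the endpoints gives $d_H(A,\alpha(s))\le s$ and $d_H(\alpha(t),B)\le r-t$, and the triangle inequality for $d_H$ then yields
\[
t-s\;\le\;d_H(A,B)-d_H(A,\alpha(s))-d_H(\alpha(t),B)\;\le\;d_H(\alpha(s),\alpha(t)).
\]
Together with the upper bound this gives $d_H(\alpha(s),\alpha(t))=|t-s|$, completing the proof.

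I expect the upper bound to be the main obstacle, and within it the delicate point is to land the relocated point $y$ in the \emph{intersection} $\overline{U}_t(A)\cap\overline{U}_{r-t}(B)$, not merely in one of the two tubes: moving $x$ by at most $t-s$ keeps it within $t$ of $A$ since it started within $s$ of $A$, while moving it straight toward $B$ consumes exactly the slack needed to bring it within $r-t$ of $B$. This balancing is what forces the use of genuine minimizing geodesics and attained nearest points, i.e.\ of properness and the geodesic structure of $X$; in a general length space the naive ``slide toward $B$'' need not be available.
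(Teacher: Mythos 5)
Your proposal is correct and follows essentially the same strategy as the paper: establish $\alpha(0)=A$, $\alpha(r)=B$ and nonemptiness, obtain the upper bounds by sliding points along minimizing geodesics toward the other set (using properness and geodesicity of $X$ to get attained nearest points and minimizers), and recover the lower bound from the triangle inequality against $d_H(A,B)=r$. The only cosmetic difference is that you prove $d_H(\alpha(s),\alpha(t))\le t-s$ directly via the two inclusions $\alpha(s)\subseteq\overline{U}_{t-s}(\alpha(t))$ and $\alpha(t)\subseteq\overline{U}_{t-s}(\alpha(s))$, whereas the paper reduces the general case to the endpoint case through the semigroup identity $\overline{U}_{t}(\overline{U}_{s}(A))=\overline{U}_{t+s}(A)$ of Lemma \ref{TubularCompact}(2).
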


\begin{proof} Since  $r=d_H(A,B)$ then there is a strictly increasing sequence $\{n_{k}\}_{k\in\mathbb{N}}\subset \mathbb{N}$ such that
\[
A\subset \overline{U}_{r+\frac{1}{n_{k}}}(B),
\]
for all $k\in\mathbb{N}$. Thus $A\subset \displaystyle{\bigcap_{k=1}^{\infty} \overline{U}_{r+\frac{1}{n_{k}}}(B) = \overline{U}_{r}(B)}$. From this, $A\subset\overline{U}_r(B)$. Similarly $B\subset\overline{U}_r(A)$. Thus
\[
\alpha(0) = \overline{U}_0(A)\cap \overline{U}_r(B) = A\cap \overline{U}_r(B) = A
\]
and $\alpha(r)=B$.

Now we show that for any $t\in[0,r]$ we have $\alpha(t)\neq\varnothing$. Since $A$, $B$ are compact, there exist $a_0\in A$,  $b_0\in B$ such that $d_H(A,B)=d(a_0,b_0)$. Let $c_0$ be the point in the segment joining $a$ and $b$ that satisfies $d(a_0,c_0)=t$ and $d(c_0,b_0)=r-t$. Thus
\[
\dist(c_0,A)\leq d(c_0,a_0)= t \mbox{ \ and \ } \dist(c_0,B)\leq d(c_0,b_0)= r-t.
\]
Hence $c\in \overline{U}_t(A)$ and $c\in \overline{U}_{r-t}(B)$, therefore $c\in\alpha(t)$.

We now show that $d_H(A,\alpha(t))=t$. Let us consider $a\in A$. Since $d_H(A,B)=r$ there exists $b\in B$ such that $d(a,b)\leq r$. Moreover, there exists  a point $c$ in the segment joining $a$ and $b$ such thta $d(a,c)\leq t$ and $d(c,b)\leq r-t$. Hence
    \[
    \dist(c,A)\leq d(c,a)\leq t \mbox{ \ and \ } \dist(c,B)\leq d(c,b) \leq r-t.
    \]
    and therefore $c\in \alpha(t)$. Hence
        \[
    \dist(a,\alpha(t)) \leq d(a,c)\leq t
    \]
    and $\displaystyle{\sup_{a\in A}\dist(a,\alpha(t))\leq r}$. Now, let $x\in \alpha(t)$, thus $x\in\overline{U}_{t}(A)$ and $\dist(x,A)\leq t$. Hence $\displaystyle{\sup_{x\in\alpha(t)} \dist(x,A)\leq t}$. As a consequence, $d_H(A,\alpha(t))\leq t$. We can show in a similar fashion that $d_H(\alpha(t),B)\leq r-t$. Thus
\[
r=d_H(A,B)\leq d_H(A,\alpha(t)) + d_H(\alpha(t),B) \leq t + (r-t) = r,
\]
and the relations $d_H(A,\alpha(t))=t$ and $d_H(\alpha(t),B)=r-t$
follow. Along the same lines of the previous argument we can show that for all $0\leq s \leq t$ we have
\[
d_H(A,\overline{U}_s(A)\cap \overline{U}_{t-s}(\alpha(t))  ) = s \mbox{ \ and \ } d_{H}( \overline{U}_s(A)\cap \overline{U}_{t-s}(\alpha(t)), \alpha(t))= t-s.
\]
On the other hand, by Lemma \ref{TubularCompact} we also have
\[
\begin{array}{rcl}
 \overline{U}_s(A)\cap\overline{U}_{t-s}(\alpha(t)) &=&  \overline{U}_s(A)\cap \overline{U}_{t-s}(\overline{U}_t(A)\cap \overline{U}_{r-t}(B)) \\
&=& \overline{U}_{s}(A)\cap \overline{U}_{r-s}(B) =\alpha(s).
\end{array}
\]
and hence $d_H(\alpha(s),\alpha(t))=t-s$. 
\end{proof}

\section{Constructions}\label{sec:products}

This section is devoted to general constructions that enable us to build new pre-length spaces from old. The motivation behind these construction is twofold: first, to provide new ways to find examples, and second, to establish a general framework for future applications, most notably, in the study of metric properties of Lorentzian pre-length spaces.

\subsection{Lorentzian taxicab product}

The following construction will be key in examining the space of causal diamonds of Minkowski space. Recall that the taxicab product of the metric spaces $(Y,d_Y)$, $(X,d_X)$ is the metric space $(Y\times X, d_T)$ where for all $a,p\in Y$, $b,q \in X$
\[
d_T((a,b),(p,q)) = d_Y(a,p) + d_X(b,q).
\]

\begin{definition}[Lorentzian taxicab product space] 
Let $(X,d)$ be a metric space and $(Y,d_L,\ll_Y,\leq_Y,\tau_Y)$ a Lorentzian pre-length space. The Lorentzian taxicab product  $Y\times_TX := (Y\times X,d_T,\ll_T,\leq_T,\tau_T)$  is given by 
\begin{itemize}
\item $(a,b)\ll_T (p,q)$ if and only if $\tau_Y(a,p)> d_X(b,q)$.
\item $(a,b)\leq_T (p,q)$ if and only if $\tau_Y(a,p)\geq d_X(b,q)$ and $a\leq_{Y} p$.
\item 
    \begin{equation*}
    \tau_T((a,b),(p,q)) = \left\{
    \begin{array}{cl}
    \tau_Y(a,p) -d_X(b,q)   & \textrm{if } (a,p)\ll_T (b,q)
    \\
    0 & \textrm{otherwise}
    \end{array}
    \right.
    \end{equation*}
\end{itemize}
\end{definition}

\begin{proposition}\label{prop:taxi}
Let $(X,d_X)$ and $(Y,d_Y,\ll_Y,\leq_Y,\tau_Y)$ be a metric space and a Lorentzian pre-length space, respectively. Then the Lorentzian taxicab product $Y\times_T X$ is a Lorentzian pre-length space. 
\end{proposition}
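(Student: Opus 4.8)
The plan is to verify directly that $Y\times_T X$ satisfies every clause of Definition~\ref{defi:lpls}. That $(Y\times X,d_T)$ is a metric space is the classical fact about taxicab products recalled just above, so the whole argument concentrates on the two relations and on $\tau_T$. The only inputs from the factors will be: reflexivity and transitivity of $\leq_Y$; the reverse triangle inequality for $\tau_Y$; the triangle inequality for $d_X$; and the standard convention that $\tau_Y(a,p)>0$ forces $a\leq_Y p$ (indeed $a\ll_Y p$).

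I would first dispatch the relations. Reflexivity of $\leq_T$ is immediate from $\tau_Y(a,a)\geq 0=d_X(b,b)$ and reflexivity of $\leq_Y$. For transitivity, if $(a,b)\leq_T(p,q)\leq_T(u,v)$ then $a\leq_Y u$ by transitivity of $\leq_Y$, and along $a\leq_Y p\leq_Y u$ the reverse triangle inequality for $\tau_Y$ together with the triangle inequality for $d_X$ gives
\[
\tau_Y(a,u)\ \geq\ \tau_Y(a,p)+\tau_Y(p,u)\ \geq\ d_X(b,q)+d_X(q,v)\ \geq\ d_X(b,v).
\]
The inclusion $\ll_T\subseteq\leq_T$ follows from $\tau_Y(a,p)>d_X(b,q)\geq 0$, which forces $\tau_Y(a,p)>0$, hence $a\leq_Y p$; transitivity of $\ll_T$ is then the same display with strict inequalities, legitimate because $\ll_T$-related pairs are in particular $\leq_Y$-related in the first slot, so the reverse triangle inequality for $\tau_Y$ applies.

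It remains to treat $\tau_T$. For lower semicontinuity one uses the pointwise identity $\tau_T((a,b),(p,q))=\max\{\tau_Y(a,p)-d_X(b,q),\,0\}$ on $(Y\times X)^2$: the function $(a,b,p,q)\mapsto\tau_Y(a,p)-d_X(b,q)$ is lower semicontinuous as the sum of a lower semicontinuous and a continuous function, and the maximum with the constant $0$ preserves this. For the reverse triangle inequality along $(a,b)\leq_T(p,q)\leq_T(u,v)$ I would split into cases according to which of $\tau_T((a,b),(p,q))$, $\tau_T((p,q),(u,v))$ vanish; if both vanish there is nothing to prove, since $\tau_T\geq 0$. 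In the remaining cases the uniform strategy is to establish
\[
\tau_Y(a,u)-d_X(b,v)\ \geq\ \tau_T((a,b),(p,q))+\tau_T((p,q),(u,v))\ >\ 0,
\]
by bounding $\tau_Y(a,u)\geq\tau_Y(a,p)+\tau_Y(p,u)$ (reverse triangle inequality along $a\leq_Y p\leq_Y u$), $d_X(b,v)\leq d_X(b,q)+d_X(q,v)$, and replacing a vanishing term $\tau_T((a,b),(p,q))=0$ by the weaker bound $\tau_Y(a,p)\geq d_X(b,q)$ supplied by $(a,b)\leq_T(p,q)$ (and symmetrically for the other term). Positivity of the middle quantity shows $(a,b)\ll_T(u,v)$, so that $\tau_T((a,b),(u,v))=\tau_Y(a,u)-d_X(b,v)$ and the desired inequality follows. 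The same bookkeeping also yields that $\tau_T$ is positive exactly on $\ll_T$ and vanishes off $\leq_T$, so that $Y\times_T X$ is in fact a Lorentzian pre-length space in the strict sense of \cite{KS}.

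I expect the reverse triangle inequality in the mixed cases (exactly one $\tau_T$-term vanishing) to be the only genuinely delicate point: there one must resist using a strict estimate that is not available and instead feed in the non-strict inequality carried by the relevant $\leq_T$-relation, and then check a posteriori that the two endpoints are chronologically related, so that the closed-form expression for $\tau_T$ is the one that applies.
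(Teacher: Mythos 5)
Your proof is correct and follows the same overall strategy as the paper---a clause-by-clause verification of Definition~\ref{defi:lpls}---but two of your local arguments differ in ways worth recording. For lower semicontinuity the paper runs an explicit $\varepsilon$--$\delta$ estimate combining the lower semicontinuity of $\tau_Y$ with the triangle inequality for $d_X$; you instead observe the global identity $\tau_T=\max\{\tau_Y-d_X,0\}$ and invoke stability of lower semicontinuity under addition of a continuous function and under maxima, which is shorter and less error-prone (and remains valid when $\tau_Y$ takes the value $+\infty$). For the reverse triangle inequality you split into cases according to which of the two intermediate $\tau_T$-values vanish; the paper avoids this by using (implicitly) that whenever $(a,b)\leq_T(p,q)$ the closed form $\tau_T((a,b),(p,q))=\tau_Y(a,p)-d_X(b,q)$ holds unconditionally, since the causal relation already gives $\tau_Y(a,p)-d_X(b,q)\geq 0$ and this difference is zero precisely when the pair fails to be chronologically related; with that remark the inequality becomes a one-line computation and your mixed cases disappear. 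Your case analysis is nevertheless sound, and your a posteriori check that the endpoints are chronologically related in the nonzero cases is handled correctly. Finally, your explicit verification that $\ll_T\subseteq\leq_T$ (from $\tau_Y(a,p)>d_X(b,q)\geq 0$ forcing $a\ll_Y p$, hence $a\leq_Y p$) supplies a detail the paper passes over when it asserts transitivity of $\ll_T$ ``similarly''; note that this step relies on the Kunzinger--S\"amann conventions that $\tau_Y>0$ characterizes $\ll_Y$ and that $\ll_Y$ is contained in $\leq_Y$, which are not spelled out in Definition~\ref{defi:lpls} as written but are clearly intended, since the paper's own proof uses them as well.
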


\begin{proof}
First, if $(a,b)\nleq_T (p,q)$ then $\tau_T((a,b),(p,q))=0$. The relation $\leq_T$ is clearly reflexive. For $(a,b)\leq_T (p,q)$ and $(p,q)\leq_T(y,x)$ we have $a\leq_Y p \leq_Y y$, thus 
\[
\tau_Y(a,y) \geq \tau_Y(a,p) + \tau_Y(p,y) \geq d_X(b,q) + d_X(q,x) \geq d_X(b,x),
\]
therefore $(a,b)\leq_T (y,x)$ and $\leq_T$ is a transitive relation. Similarly we can establish that $\ll_T$ is transitive  as well. Now, it follows from the definition that $\tau_T((a,b),(p,q))>0$ if and only if $(a,b)\ll_T (p,q)$. It remains to check that $\tau_T$ is a lower semicontinuous function with respect to $d_T$ and satisfies the reverse triangle inequality. In order to prove the former, let us fix $(a_0,b_0),(p_0,q_0)\in Y\times X$ and $\varepsilon>0$. Since $\tau_Y$ is semicontinuous at $(a_0,p_0)$, there exists $\delta_0>0$ such that if $\displaystyle{\sqrt{d_Y(a,a_0)^2 + d_Y(p,p_0)^2}<\delta_0}$, then $    \tau_Y(a_0,p_0) - \tau_Y(a,p)<\varepsilon /2.$ Set $\delta=\min\left\{{\delta_0}/ {\sqrt{2}}, {\varepsilon}/ {4} \right\}$ and let us take $(a,b),(p,q)\in Y\times X$ such that
    \[
    \displaystyle{\sqrt{d_T((a,b),(a_0,b_0))^2 + d_T((p,q),(p_0,q_0))^2}< \delta.}
    \]
Thus $d_X(b,b_0)\le d_T((a,b),(a_0,b_0))<\delta$ and similarly $d_X(q,q_0)<\delta$, $d_Y(a,a_0)<\delta$ and $d_Y(p,p_0)<\delta$. Hence $\sqrt{d_Y(a,a_0)^2 + d_Y(p,p_0)^2} <\sqrt{2}\delta \leq \delta_0$,
and thus    $\tau_Y(a_0,p_0) - \tau_Y(a,p)<{\varepsilon}/ {2}$.
On the other hand, by the triangle inequality we get
    \[
    d_X(a,p) - d_X(a_0,p_0)\leq d_X(b,b_0) + d_X(q_0,q)<2\delta\leq \displaystyle{\frac{\varepsilon}{2}}.
    \]
In conclusion
    \begin{eqnarray*}
    \left(\tau_Y(a_0,p_0) - d_X(b_0,q_0)\right) &-& \left(\tau_Y(a,p) - d_X(b,q) \right) =\\
    & & \left(\tau_Y(a_0,p_0) - \tau_Y(a,p) \right) + \left(d_X(b,q) -d_X(b_0,q_0) \right) <\varepsilon,
    \end{eqnarray*}
The above inequality readily implies that for $(a,b)\leq_T (p,q)$ we have
\[
 \tau_T((a_0,b_0),(p_0,q_0)) - \tau_T((a,b),(p,q)) <\varepsilon.
\]
On the other hand, in case of $(a,b)\nleq_T (p,q)$ the same statement holds true because in that case $\tau_Y(a,p) - d_X(b,q)<0$ or $a\nleq_Y p$, therefore
    \begin{eqnarray*}
    \tau_T((a_0,b_0),(p_0,q_0)) &-& \tau_T((a,b),(p,q))\\  &\leq& \left(\tau_Y(a_0,p_0) - d_X(b_0,q_0)\right) - \left(\tau_Y(a,p) - d_X(b,q) \right).
    \end{eqnarray*}
Finally, if $(a_0,b_0)\nleq_T (p_0,q_0)$ then
$ -\tau_T((a,b),(p,q))<\varepsilon $ 
 Thus, the proof of semi-continuity is complete. 

Now we tackle the reverse triangle inequality. Let
 $(a,b)\leq_T (p,q) \leq_T (y,x)$, then
    \[
    \begin{array}{rcl}
    \tau_T((a,b),(p,q)) + \tau_T((p,q),(y,x)) &=& \left(\tau_Y(a,p)-d_X(b,q)\right) + \left(\tau_Y(p,y)-d_X(q,x)\right) \\
    &=& \left(\tau_Y(a,p) + \tau_Y(p,y)\right) - \left( d_X(b,q) + d_X(q,x) \right) \\
    &\leq& \tau_Y(a,y) - d_X(b,x) \\
    &=& \tau_T((a,b),(y,x))
    \end{array}
    \]
\end{proof}

Now let us focus on the Lorentzian taxicab product space $\mathbb{R}_{1}^{1}\times_T X$, where $(X,d)$ is a locally compact and complete geodesic length space. Thus we have
\begin{eqnarray*}
(t,x)\ll_T (s,y) &\Leftrightarrow& s-t> d(x,y) \\  (t,x)\leq_T (s,y) &\Leftrightarrow& s-t\geq_T d(x,y)\\
d_T((t,x),(s,y)) &=& \vert t-s\vert +d(x,y), \\
\tau_T((t,x),(s,y))&=&\left\{
\begin{array}{cl}
s-t-d(x,y) & \mbox{if $(t,x)\leq_T (s,y)$} \\
0 & \mbox{otherwise}
\end{array}\right.
\end{eqnarray*}

\begin{remark}\label{ex:mink}
Notice that if $(M,g)$ is a geodesically complete Riemannian manifold then the causal and chronological relations of the Lorentzian length space $\mathbb{R}_1^1\times_T M$ and the standard (Lorentzian) product manifold $-\mathbb{R}\times M$ agree, thus their structures as causal sets are identical.  In particular, their respective causal diamonds coincide as point sets\footnote{Recall that the spacetime $-\mathbb{R}\times M$ is globally hyperbolic if and only $(M,g)$ is complete (see \cite{ON}) }. However, their manifestly different time separation functions give rise to two very different geodesic structures. Indeed, take for instance $M=\mathbb{R}^n$ with its standard Euclidean metric. Then any pair of chronologically related points in Minkowski space $\mathbb{M}^{n+1}=-\mathbb{R}\times \mathbb{R}^n$ can be connected by a \textit{unique} maximal geodesic, namely, the straight line segment joining them. On the other hand, in the Lorentzian taxicab product $\mathbb{R}_1^1\times_T \mathbb{R}^n$ there might be \textit{infinitely many} maximal geodesic segments joining any pair of chronologically related points. Indeed any curve $\gamma :[a,b]\to \mathbb{R}_1^1\times \mathbb{R}^n$ of the form $\gamma (s)= (\alpha (s),\beta (s))$ with $\alpha (s)$ monotone and $\vert\beta (s)\vert$ strictly monotone is maximal.  What is more, there are in $\mathbb{R}_1^1\times \mathbb{R}^n$ causal non-timelike curves which turn out to be maximal \cite{solismontes} (see Figure \ref{fig:rn}). 
\end{remark}

\begin{figure}[ht!]\label{fig:rn}
\centering \includegraphics[scale=0.3]{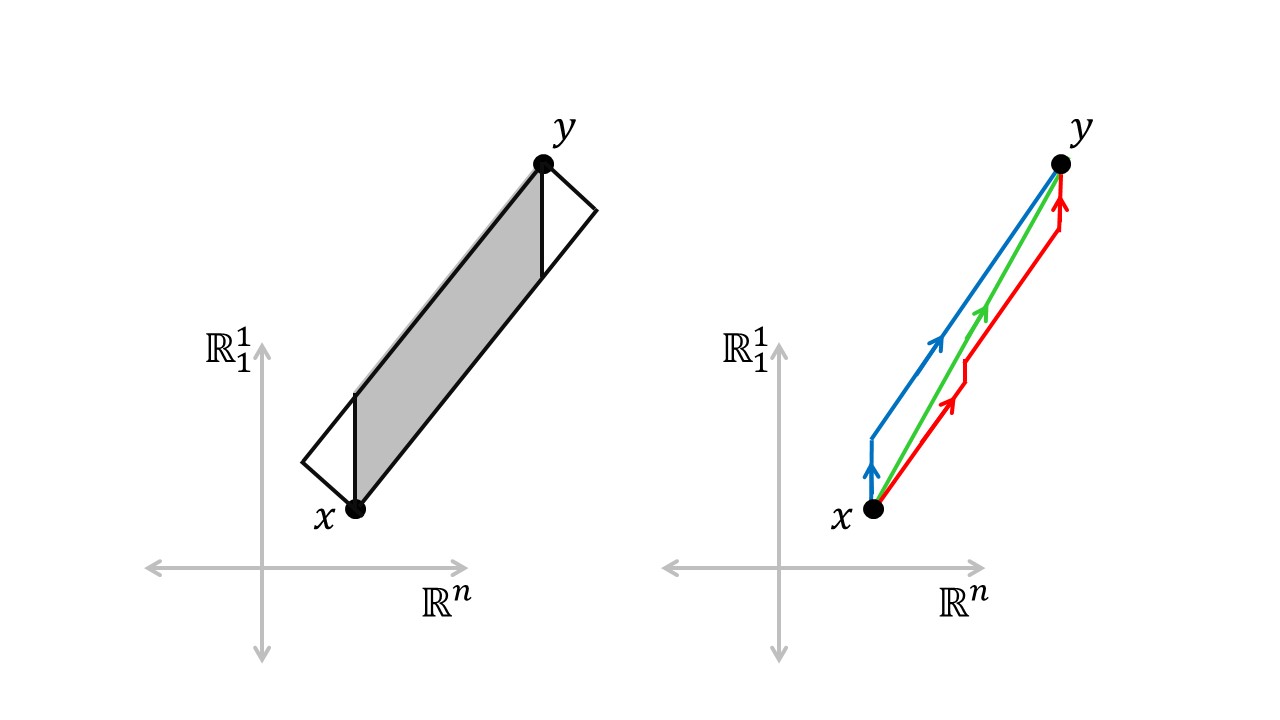}
\centering{\caption{Infinitely many maximal causal curves joining a pair of chronologically related points}} 
\end{figure}

Notice that due to Remark \ref{ex:mink}, the causal diamonds in $\mathbb{R}_{1}^{1}\times_T \mathbb{R}^n$ are compact. In fact, we will show next that $\mathbb{R}_{1}^{1}\times_T X$ is a geodesic Lorentzian length space. Further, if $(X,d)$ is complete then $\mathbb{R}_{1}^{1}\times_T X$ is globally hyperbolic. We start by providing explicit parametrizations for maximal causal curves joining any two causally related points.

\begin{lemma}\label{segmentsRxX}
For every two points $(t,x)$ and $(s,y)$ in $\mathbb{R}_{1}^{1}\times_T X$ with $(t,x)\leq_T (s,y)$ let $L=s-t-d(x,y)$ and $\gamma:[0,d(x,y)]\to X$ be a geodesic segment parametrized by arclength. 
\begin{enumerate}
\item If $L>0$, then $(t,x)\ll_T (s,y)$ and $\ell:[0,L]\to \mathbb{R}^{1}_{1}\times_T X$ defined as
\[
\ell(u)=\left( t+\frac{u(s-t)}{L}, \gamma\left( \frac{u\cdot d(x,y)}{L} \right) \right),
\]
is a maximal future-directed timelike curve from $(t,x)$ to $(s,y)$. 
\item If $L=0$, then $\ell:[0,1]\to \mathbb{R}_{1}^{1}\times_T X$ defined as
\[
\ell(u)=\left( t+u(s-t), \gamma\left( u\cdot d(x,y) \right) \right),
\]
is a maximal future-directed null curve from $(t,x)$ to $(s,y)$.
\end{enumerate}
\end{lemma}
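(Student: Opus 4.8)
The plan is to verify directly that the proposed parametrizations are causal curves of the claimed type and that they realize the time separation $\tau_T((t,x),(s,y))$, which is $L$ in case (1) and $0$ in case (2).

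First I would check the endpoint conditions: in both cases $\ell(0)=(t,x)$ since $\gamma(0)=x$, and $\ell(L)=(s,y)$ (resp.\ $\ell(1)=(s,y)$) since $\gamma(d(x,y))=y$. Next, for case (1), since $(t,x)\leq_T(s,y)$ with $L>0$ we have $s-t-d(x,y)=L>0$, hence $s-t>d(x,y)\geq 0$, so $(t,x)\ll_T(s,y)$. For the causal/timelike character I would take $0\leq u_1<u_2\leq L$ and compute, using that $\gamma$ is parametrized by arclength (so the $d$-distance between $\gamma$-values at parameters $v_1<v_2$ is at most $v_2-v_1$, with equality since $\gamma$ is a geodesic segment):
\[
\Big(t+\tfrac{u_2(s-t)}{L}\Big)-\Big(t+\tfrac{u_1(s-t)}{L}\Big)=\tfrac{(u_2-u_1)(s-t)}{L},\qquad d\!\left(\gamma\!\left(\tfrac{u_1 d(x,y)}{L}\right),\gamma\!\left(\tfrac{u_2 d(x,y)}{L}\right)\right)=\tfrac{(u_2-u_1)d(x,y)}{L}.
\]
The difference of these is $\tfrac{(u_2-u_1)}{L}\big((s-t)-d(x,y)\big)=\tfrac{(u_2-u_1)}{L}\cdot L=u_2-u_1>0$, so $\ell(u_1)\ll_T\ell(u_2)$; Lipschitz continuity is immediate since each component is Lipschitz. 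Hence $\ell$ is future-directed timelike. Moreover the same computation with $u_1=0$, $u_2=L$ gives $\tau_T(\ell(0),\ell(L))=L$, and since $\tau_T$ along any causal curve can only increase by less than the $\tau_T$ of the endpoints (reverse triangle inequality applied to a partition), the $\tau_T$-length $L_{\tau_T}(\ell)$ is at least $\sum$ of the increments $=\sum(u_i-u_{i-1})=L=\tau_T((t,x),(s,y))$, while it is at most $\tau_T((t,x),(s,y))$; therefore $\ell$ is maximal.

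For case (2) the argument is parallel: with $L=0$ we have $s-t=d(x,y)$, so for $0\leq u_1<u_2\leq 1$ the time increment $(u_2-u_1)(s-t)$ equals the $d$-distance $(u_2-u_1)d(x,y)$ between the corresponding $\gamma$-values, which gives $\ell(u_1)\leq_T\ell(u_2)$ but never $\ell(u_1)\ll_T\ell(u_2)$ (the two quantities are equal, never strictly ordered), so $\ell$ is a future-directed null curve. Its $\tau_T$-length is $0=\tau_T((t,x),(s,y))$, so it is trivially maximal. I anticipate the only mildly delicate point is being careful that $\gamma$ reparametrized affinely on $[0,L]$ (or $[0,1]$) still has constant speed $d(x,y)/L$ (resp.\ $d(x,y)$) and realizes $d$-distance exactly between any two of its points — this is exactly where we use that $\gamma$ is a geodesic segment, not merely a unit-speed curve — together with handling the degenerate subcase $d(x,y)=0$ (then $\gamma$ is constant and $\ell$ is a vertical line, still covered by the formulas with the convention that the second slot is the constant point). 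None of this is a real obstacle; the proof is essentially a direct verification.
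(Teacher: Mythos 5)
Your proposal is correct and follows essentially the same route as the paper: both verify directly that $\tau_T(\ell(u_1),\ell(u_2))=u_2-u_1$ (resp.\ $=0$ with $\leq_T$ holding in the null case) from the arclength parametrization of $\gamma$, and deduce timelike/null character, $L_{\tau_T}(\ell)=L$, and maximality via the reverse triangle inequality. The paper additionally records that $\ell$ is a metric geodesic segment for $d_T$, but that is not needed for the statement being proved.
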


\begin{proof}
First, since $\gamma$ has been parametrized by arclength, then for every $u_1 < u_2$ we have
\[
d_T(\ell(u_1),\ell(u_2))= \frac{(u_2-u_1)(s-t)}{L} +\frac{(u_2-u_1)d(x,y)}{L} = (u_2-u_1)\left(\frac{s-t + d(x,y)}{L}\right),
\]
thus $\ell$ is a non-constant Lipschitz continuous function. In fact, this leads to $L_{d_T}(\ell)=s-t+d(x,y)$, therefore $\ell$ is a geodesic segment connecting $(t,x)$ with $(s,y)$. On the other hand observe that
\[
\tau_T(\ell(u_1),\ell(u_2)) = \frac{(u_2-u_1)(s-t)}{L}-\frac{(u_2-u_1)d(x,y)}{L} = u_2-u_1>0,
\]
thus $\ell(u_1)\ll_T \ell(u_2)$ and the same inequality implies $L_{\tau_T}(\ell)=L$, hence $\ell$ is a future-directed maximal timelike curve. The proof of the null case is analogous.\end{proof}

\begin{theorem}\label{LLSRxX}
The Lorentzian taxicab product space $\mathbb{R}_{1}^{1}\times_T X$ is a globally hyperbolic Lorentzian length space.
\end{theorem}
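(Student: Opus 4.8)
The plan is to verify in turn each defining property of a globally hyperbolic Lorentzian length space, leaning heavily on Proposition \ref{prop:taxi} (which already gives us a Lorentzian pre-length space), Lemma \ref{segmentsRxX} (which hands us explicit maximal curves), and the hypothesis that $(X,d)$ is a locally compact, complete, geodesic length space. First I would check that $\mathbb{R}^1_1\times_T X$ is causally path connected: given $(t,x)\leq_T (s,y)$, Lemma \ref{segmentsRxX} exhibits an explicit future-directed causal curve between them (timelike if $L>0$, null if $L=0$), and the chronological case is the $L>0$ subcase. Next I would establish that the space is a \emph{geodesic} Lorentzian pre-length space in the sense of the Remark following Definition \ref{defi:lls}: the curves $\ell$ from Lemma \ref{segmentsRxX} satisfy $L_{\tau_T}(\ell)=L=\tau_T((t,x),(s,y))$, so $\tau_T$ is realized by a maximal causal curve, hence $\tau_T=\mathcal{T}$ and condition (3) of Definition \ref{defi:lls} holds automatically. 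This is the cleanest route and it disposes of the Avez--Seifert-type condition without further work.

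It then remains to produce, for each point $(t_0,x_0)$, a localizing neighborhood, and to check local causal closedness. For the localizing neighborhood I would take something like $\Omega=(t_0-\varepsilon,t_0+\varepsilon)\times B_\rho(x_0)$ with $\varepsilon,\rho$ small enough that $\overline{B}_\rho(x_0)$ is compact (possible by local compactness of $X$). Property (1) — that $\Omega$ with the induced structure is again a Lorentzian pre-length space — should follow because the induced relations on a product of an interval with a ball are exactly the taxicab relations of the sub-pieces (convexity of the interval and the fact that short geodesics in $X$ stay in the ball), so Proposition \ref{prop:taxi} applies to $\Omega$ itself. Property (2), that $I^\pm(y)\cap\Omega\neq\varnothing$ for $y\in\Omega$, is immediate since one can move a little in the time direction while staying put in $X$. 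Property (3), uniform bound on $d_T$-lengths of causal curves in $\Omega$, follows because a $\leq_T$-causal curve has monotone time component with total variation at most $2\varepsilon$ and its $X$-component has $d$-length bounded by that same variation, giving total $d_T$-length at most $4\varepsilon$ — in particular $\mathbb{R}^1_1\times_T X$ is non-totally imprisoning, since the argument globalizes: any causal curve in a compact set $K$ projects to a time interval of bounded length, which bounds both components. Property (4), existence of a $\tau_T$-maximal causal curve inside $\Omega$ realizing a continuous $\omega_x$, is again Lemma \ref{segmentsRxX}: for $p\leq_T q$ in $\Omega$ the geodesic segment of $X$ between the spatial parts has length $<2\rho$ and, if $\rho$ is chosen small relative to $\varepsilon$, the curve $\ell$ stays in $\Omega$; set $\omega_x:=\tau_T|_{\Omega\times\Omega}$, which is continuous here because on the relevant region $\tau_Y$ and $d$ are continuous. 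Local causal closedness: on $\Omega$, the relation $(t,x)\leq_T(s,y) \Leftrightarrow s-t\geq d(x,y)$ is closed because $d$ is continuous and $\{s-t\geq d(x,y)\}$ is the preimage of a closed set.

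The final ingredient is compactness of causal diamonds when $(X,d)$ is complete. Given $(t_0,x_0)\leq_T(t_1,x_1)$, I would show that $J^+((t_0,x_0))\cap J^-((t_1,x_1))$ is a closed subset of the compact set $[t_0,t_1]\times \overline{B}_{(t_1-t_0)}(x_0)$. A point $(t,x)$ in the diamond satisfies $t_0\leq t\leq t_1$ and $d(x_0,x)\leq t-t_0\leq t_1-t_0$, so it lies in that product; the product is compact because closed $d$-balls in $X$ are compact (Lemma \ref{TubularCompact}(6), valid since $X$ is complete, locally compact, and geodesic); and the diamond is closed because, as just noted, $\leq_T$ is globally a closed relation (continuity of $d$ on all of $X$). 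Hence the diamond is compact, and combined with non-total imprisonment this gives global hyperbolicity.

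I expect the main obstacle to be the careful bookkeeping in the localizing neighborhood: one must choose $\varepsilon$ and $\rho$ so that (a) $\overline{B}_\rho(x_0)$ is compact, (b) geodesic segments in $X$ between points of $B_\rho(x_0)$ remain inside $B_\rho(x_0)$ — or at least inside the slightly enlarged ball defining $\Omega$ — so that the curve $\ell$ of Lemma \ref{segmentsRxX} does not escape $\Omega$, and (c) the induced relation $\leq_\Omega$ genuinely coincides with $\leq_T|_\Omega$ rather than being strictly smaller. Point (b) is where local geodesic convexity of $X$ is actually used; in a general length space one may need to shrink $\rho$ relative to the convexity radius at $x_0$, and it is worth stating explicitly that such a radius exists for locally compact complete length spaces. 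Everything else is a routine assembly of the pieces already in hand.
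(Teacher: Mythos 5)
Most of your outline coincides with the paper's own proof: causal path connectedness and the geodesic (Avez--Seifert) property come from Lemma \ref{segmentsRxX} exactly as you say, non-total imprisonment comes from the observation that the $d_T$-length of a causal curve is controlled by twice the variation of its time component, and compactness of $J^+((t_0,x_0))\cap J^-((t_1,x_1))$ follows from the spatial component being trapped in a compact closed ball of $X$ (the paper runs a sequential argument, you phrase it as a closed subset of $[t_0,t_1]\times\overline{B}_{t_1-t_0}(x_0)$; both are fine given Lemma \ref{TubularCompact}(6)). Your proof that $\le_T$ is a globally closed relation is also the paper's argument.

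The genuine gap is in your choice of localizing neighborhood $\Omega=(t_0-\varepsilon,t_0+\varepsilon)\times B_\rho(x_0)$. For property (4) you need the maximal curve $\ell$ of Lemma \ref{segmentsRxX} to stay inside $\Omega$, and $\ell$ is built from a geodesic segment of $X$ joining the two spatial endpoints; such a segment need not remain in $B_\rho(x_0)$, and your proposed repair --- that a positive ``convexity radius'' exists at every point of a locally compact complete length space --- is not a theorem in this generality (unlike for Riemannian manifolds or spaces with local curvature bounds, a general length space need not admit arbitrarily small geodesically convex neighborhoods). You correctly flag this as the main obstacle, but the fix you sketch does not close it. The paper sidesteps the issue entirely by taking the localizing neighborhood to be the chronological diamond $\Omega^r_{(t,x)}=I^+(t-r,x)\cap I^-(t+r,x)$: this set is causally convex, so for $(s_1,y_1)\le_T(s_2,y_2)$ in $\Omega^r_{(t,x)}$ every point of the curve $\ell$ lies in $J^+(s_1,y_1)\cap J^-(s_2,y_2)\subset\Omega^r_{(t,x)}$ by the push-up property, containment is automatic, no convexity of metric balls in $X$ is ever needed, and the uniform length bound $L_{d_T}(\gamma)<2r$ for causal curves in $\Omega^r_{(t,x)}$ drops out of the same time-component estimate you already use. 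I would recommend replacing your product neighborhood with a causally convex one; with that change the rest of your argument goes through.
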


\begin{proof}
 The space $\mathbb{R}_{1}^{1}\times_T X$ is a geodesic causally path connected Lorentzian pre-length space from Proposition \ref{prop:taxi} and Lemma \ref{segmentsRxX}.
 
Now let $(t,x)\in \mathbb{R}_{1}^{1}\times_T X$, $r>0$ and consider the set
\[
\Omega_{(t,x)}^{r} = I^{+}(t-r,x)\cap I^{-}(t+r,x).
\]
We now show $\Omega_{(t,x)}^{r}$ is a localizing neighborhood around $(t,x)$. Since  $I^{+}(t-r,x)$ and $I^{-}(t+r,x)$ are open, then $\Omega_{(t,x)}^{r}$ is open as well. If $(s_1,y_1),(s_2,y_2)\in \Omega_{(t,x)}^{r}$ satisfy $(s_1,y_1)\le_T (s_2,y_2)$ then the curve $\ell$ constructed in Lemma \ref{segmentsRxX} is a future-directed maximal causal curve from $(s_1,y_1)$ to $(s_2,y_2)$ contained in $ \Omega_{(t,x)}^{r}$. It follows that
\[
(\Omega_{(t,x)},d_T\vert_{\Omega_{(t,x)}\times \Omega_{(t,x)}}, \ll_T\vert_{\Omega_{(t,x)}\times \Omega_{(t,x)}}, \leq_T\vert_{\Omega_{(t,x)}\times \Omega_{(t,x)}}, \tau_T\vert_{\Omega_{(t,x)}\times \Omega_{(t,x)}})
\]
is a Lorentzian pre-length space that fulfills condition (1) and (4) in Definition \ref{defi:lls}. Moreover, let $(s,y)\in \Omega_{(t,x)}^{r}$ and $z$ be a midpoint of $x$ and $y$, hence $d(x,z)=d(y,z)=d(x,y)/2$. Observe that
\[
t+r-\left( \frac{t+r+s}{2} \right) = \frac{t+r-s}{2}>\frac{d(x,y)}{2}= d(x,z),
\]
thus $\left( \frac{t+r+s}{2}, z \right)\ll_T (t+r,x)$. We also have
\[
\left( \frac{t+r+s}{2} \right) - s = \frac{t+r-s}{2}> \frac{d(x,y)}{2} = d(y,z),
\]
then $(s,y)\ll_T \left( \frac{t+r+s}{2}, z \right)$ and therefore $\left( \frac{t+r+s}{2}, z \right) \in I^{+}(s,y)\cap \Omega_{(t,x)}^{r}\neq \varnothing$. A similar argument shows that  $I^{-}(s,y)\cap \Omega_{(t,x)}^{r}\neq \varnothing$, and hence condition (3) in Definition \ref{defi:lls} holds as well.

Now, fix a future-directed causal curve $\gamma:[a,b]\to \mathbb{R}_{1}^{1}\times_T X$ with $\textrm{Im}(\gamma)\subset \Omega_{(t,x)}^{r}$. The curve $\gamma$ can be written as $\gamma(u)=(\alpha(u),\beta(u))$ where $\alpha:[a,b]\to \mathbb{R}$ and $\beta:[a,b]\to X$. Since $\gamma$ is causal then $\alpha(u_1)\leq_T \alpha(u_2)$ and $\alpha(u_2)-\alpha(u_1)\geq d(\beta(u_1),\beta(u_2))$ for every $u_1\leq u_2$. In particular $(t-r,x)\ll_T \gamma(a) \leq_T \gamma(b)\ll_T (t+r)$ leads to $t+r-\alpha(b)>0$ and $\alpha(a)+r-t>0$. Take a partition $a=u_0<u_1<\cdots < u_{n}=b$, thus

\begin{eqnarray*}
\displaystyle{\sum_{i=0}^{n-1} d_T(\gamma(u_i),\gamma(u_{i+1}))} &=& \displaystyle{\sum_{i=0}^{n-1} \left[\alpha(u_{i+1})-\alpha(u_{i}) + d(\beta(u_{i}),\beta(u_{i+1})) \right]} \\
&\leq & 2 \displaystyle{\sum_{i=0}^{n-1} \left[\alpha(u_{i+1})-\alpha(u_{i}) \right]} \\
&=& 2\left[ \alpha(b)-\alpha(a) \right] <(t+r)+(r-t)= 2r.
\end{eqnarray*}

In conclusion, the length of any causal curve contained in $\Omega_{(t,x)}^{r}$ satisfies $L_{d_T}(\gamma)<2r$ and thus condition (2) in Definition \ref{defi:lls} holds as well. Hence  $\Omega_{(t,x)}^{r}$ is localizing.

Finally, take two convergent sequences $(t_n,x_n)\to (t,x)$ and $(s_n,y_n)\to (s,y)$ with $(t_n,x_n)\leq_T (s_n,y_n)$ for all $n\in\mathbb{N}$. Since $d_T((t_n,x_n),(t,x))=\vert t_n-t\vert +d(x,x_n)$ it readily follows that $t_n\to t$, $x_n\to x$, and similarly $s_n\to s$,  $y_n\to y$. Thus $s_n-t_n\geq d(x_n,y_n)$ implies $s-t\geq d(x,y)$ and $(t,x)\leq_T (s,y)$, hence $\mathbb{R}_{1}^{1}\times_T X$ is locally causally closed.  Thus $\mathbb{R}^1_1\times_T X$ is a geodesic Lorentzian length space.

Now we prove $\mathbb{R}^1_1\times_T X$ is globally hyperbolic. First, if $K \subset\mathbb{R}_{1}^{1}\times_T X$  compact then there exists a number $0<C<\infty$ such that
\[
0\leq 2\sup\{s_0-t_0: (t_0,x_0),(s_0,y_0)\in K\}<C.
\]
Let $\gamma:[a,b]\to \mathbb{R}_{1}^{1}\times_T X$ be a future-directed causal curve contained in $K$, then 
\[
L_{d_T}(\gamma)\leq 2\left[ \alpha(b) - \alpha(a)\right]  <C.
\]
Hence $\mathbb{R}_{1}^{1}\times_T X$ is non-totally imprisoning. Finally, we prove that $D=J^{+}(t,x)\cap J^{-}(s,y)$ is compact for every $(t,x)\leq_T (s,y)$. Let $\{(u_n,z_n)\}_{n\in\mathbb{N}}$ be a sequence in $D$, then $t\leq u_n \leq s$ and $s-t\geq d(x,z_n)+d(y,z_n)\geq d(x,z_n)$, then the sequence $\{z_n\}_{n\in\mathbb{N}}$ is contained in a compact closed ball in $X$. Hence we can choose a subsequence $\{(u_{n_{k}},z_{n_{k}})\}_{k\in \mathbb{N}}$ --using a Cantor's diagonal argument-- with $(u_{n_{k}},z_{n_{k}})\to (u,z)$. Since $s-u_{n_{k}}\geq d(y,z_{n_{k}})$ and $u_{n_{k}}-t\geq d(x,z_{n_{k}})$, by taking $k\to\infty$ we conclude $(u,z)\in D$. Therefore $D$ is compact and the proof is complete.
\end{proof}

\subsection{Lorentzian uniform product}

In analogy with the $L_\infty$ norm, given two metric spaces $(X,d_X)$ and $(Y,d_Y)$ the \textit{uniform metric} $d_{\infty}:(X\times Y)\times (X\times Y)\to \mathbb{R}$ is defined by  
\[
d_{\infty}((x,y),(a,b)) = \max\{d_{X}(x,a),d_{Y}(y,b)\}.
\]

\begin{definition}[Lorentzian uniform product space]
Let $(X,d_{X},\ll_{X},\leq_{X},\tau_{X})$ and $(Y,d_{Y},\ll_{Y},\leq_{Y},\tau_{Y})$ be two Lorentzian pre-length spaces. The Lorentzian uniform product space $X\times_\infty Y=(X\times Y,d_{\infty},\ll_{\infty},\leq_{\infty},\tau_{\infty})$ is defined by
\begin{itemize}
\item $(x,y)\ll_{\infty} (a,b)$ if and only if $x\ll_{X}a$ and $y\ll_{Y}b$.
\item $(x,y)\leq_{\infty} (a,b)$ if and only if $x\leq_{X}a$ and $y\leq_{Y}b$.
\item 
\[
\tau_\infty((x,y),(a,b)) =   \min\{\tau_{X}(x,a),\tau_{Y}(y,b)\}.
\]
\end{itemize}
\end{definition}

\begin{proposition}
The Lorentzian uniform product $X\times_\infty Y$ is a Lorentzian pre-length space.
\end{proposition}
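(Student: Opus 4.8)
The plan is to verify, coordinate by coordinate, the three items of Definition \ref{defi:lpls} for the quintuple $(X\times Y,d_\infty,\ll_\infty,\le_\infty,\tau_\infty)$, deriving each property from the corresponding property of the two factors. I would first dispose of the order-theoretic part: reflexivity of $\le_\infty$ is immediate from reflexivity of $\le_X$ and $\le_Y$, and if $(x,y)\le_\infty (a,b)\le_\infty (c,e)$ then $x\le_X a\le_X c$ and $y\le_Y b\le_Y e$, hence $x\le_X c$ and $y\le_Y e$, i.e. $(x,y)\le_\infty (c,e)$, so $\le_\infty$ is transitive. The same two-coordinate argument shows $\ll_\infty$ is transitive, while the compatibility $\ll_X\subseteq\le_X$, $\ll_Y\subseteq\le_Y$ of the factors gives the corresponding compatibility $\ll_\infty\subseteq\le_\infty$. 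Moreover $\tau_\infty=\min\{\tau_X,\tau_Y\}$ is manifestly $[0,\infty]$-valued, and $\tau_\infty((x,y),(a,b))=0$ whenever $(x,y)\not\le_\infty (a,b)$, since then at least one coordinate is causally unrelated and the corresponding $\tau$ vanishes.

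The step I expect to require the most care is the lower semicontinuity of $\tau_\infty$. The key observation is that $d_\infty$ induces the product topology on $X\times Y$ (open $d_\infty$-balls are products of open balls in the factors), so the two maps $((x,y),(a,b))\mapsto (x,a)$ and $((x,y),(a,b))\mapsto (y,b)$ from $(X\times Y)^2$ to $X\times X$ and $Y\times Y$, respectively, are continuous. Hence the pullbacks of $\tau_X$ and $\tau_Y$ along these maps remain lower semicontinuous, and $\tau_\infty$, being their pointwise minimum, is lower semicontinuous as well: for each $\alpha$ the superlevel set $\{\tau_\infty>\alpha\}$ is the intersection of the two open sets $\{\tau_X>\alpha\}$ and $\{\tau_Y>\alpha\}$, viewed on $(X\times Y)^2$.

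Finally, for the reverse triangle inequality, suppose $(x,y)\le_\infty (a,b)\le_\infty (c,e)$, so that $x\le_X a\le_X c$ and $y\le_Y b\le_Y e$. Applying the reverse triangle inequality for $\tau_X$ and then bounding each summand below by the relevant minimum gives
\[
\tau_X(x,c)\ \ge\ \tau_X(x,a)+\tau_X(a,c)\ \ge\ \tau_\infty\big((x,y),(a,b)\big)+\tau_\infty\big((a,b),(c,e)\big),
\]
and the identical computation for $\tau_Y$ yields the same lower bound for $\tau_Y(y,e)$. Since two quantities that are each at least $C$ have minimum at least $C$, taking the minimum of the two left-hand sides produces $\tau_\infty((x,y),(c,e))\ge \tau_\infty((x,y),(a,b))+\tau_\infty((a,b),(c,e))$, which is the desired inequality. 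The only routine caveat throughout is the arithmetic in $[0,\infty]$ when some $\tau$-value equals $+\infty$, which the usual conventions handle without affecting any of the inequalities.
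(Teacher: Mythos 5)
Your proof is correct and follows essentially the same route as the paper: verify the order-theoretic axioms coordinate-wise, note $\tau_\infty$ vanishes off the causal relation, and prove lower semicontinuity and the reverse triangle inequality by reducing to the factors (your superlevel-set argument for the lower semicontinuity of the minimum is a cleaner packaging of the paper's explicit $\varepsilon$--$\delta$ computation, using that $d_\infty$ induces the product topology). The only item the paper checks that you omit is the forward implication that $(x,y)\ll_\infty(a,b)$ forces $\tau_\infty((x,y),(a,b))>0$, which follows immediately from the corresponding property of $\tau_X$ and $\tau_Y$ and is worth a line even though it is not listed explicitly in Definition \ref{defi:lpls}.
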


\begin{proof} It is not hard to see that the relations  $\ll_\infty$, $\leq_\infty$ satisfy the required properties of a pre-length space since both $\ll_{X}$, $\leq_{X}$ and $\ll_{Y}$, $\leq_{Y}$ do. If $(x,y)\nleq_{\infty}(y,b)$ then $x\nleq_{X}a$ or $y\nleq_{Y}b$ and therefore $\tau_{X}(x,a)=0$ or $\tau_{Y}(y,b)=0$ leading to $\tau_{\infty}((x,y),(a,b))=0$. In fact $(x,y)\ll_{\infty}(a,b)$ if and only if $\tau_{X}(x,a)>0$ and $\tau_{Y}(y,b)>0$, which in turn holds if and only if $\tau_{\infty}((x,y),(a,b))>0$. Thus, it only remains to verify the reverse triangle inequality and lower semicontinuity of $\tau_\infty$ with respect to $d_\infty$.

We first analyze the lower semicontinuity of $\tau_{\infty}$.  Let $\varepsilon>0$ and fix $((x_0,y_0),(a_0,b_0))\in (X\times Y)\times (X\times Y)$. Since $\tau_{X}$ is lower semicontinuous at $(x_0,a_0)$ with respect to $d_{X}$, we can find $\delta_{X}>0$ such that
 $\displaystyle{\sqrt{d_{X}^{2}(x,x_0)+d_{X}^{2}(a,a_0)}<\delta_{X}}$ implies $\tau_{X}(x_0,a_0)-\tau_{X}(x,a)<\varepsilon$. Choose $\delta=\min\{\delta_{X} / \sqrt{2},\delta_{Y} / \sqrt{2}\}$ and suppose
\[
\displaystyle{\sqrt{d_{\infty}^{2}((x,y),(x_0,y_0)) + d_{\infty}^{2}((a,b),(a_0,b_0))}<\delta},
\]  
implying $d_X(x,x_0)<\delta$ and $d_X(a,a_0)<\delta$. Thus
\[
\displaystyle{\sqrt{d_{X}^{2}(x,x_0)+d_{X}^{2}(a,a_0)}<\sqrt{2}\delta}\leq \delta_{X},
\]
leading to $\tau_{X}(x_0,a_0)-\tau_{X}(x,a)<\varepsilon$. Hence
\[
\tau_{\infty}((x_0,y_0),(a_0,b_0))-\tau_{X}(x,a) \leq \tau_{X}(x_0,a_0)-\tau_{X}(x,a)<\varepsilon ,
\]
therefore
\[
\tau_{\infty}((x_0,y_0),(a_0,b_0))-\tau_{X}(x,a)<\varepsilon .
\]
A similar argument enables us to show
\[
\tau_{\infty}((x_0,y_0),(a_0,b_0))-\tau_{Y}(y,b) <\varepsilon ,
\]
so in conclusion
\[
\tau_{\infty}((x_0,y_0),(a_0,b_0))-\tau_{\infty}((x,y),(a,b))<\varepsilon
\]
thus establishing the lower semicontinuity of $\tau_\infty$ at 
$((x_0,y_0),(a_0,b_0))$.

 Finally, if $(x_1,y_1)\leq_{\infty} (x_2,y_2)\leq_{\infty} (x_3,y_3)$ then $x_1\leq_{X} x_2 \leq_{X} x_3$ and $y_1\leq_{Y} y_2\leq_{Y} y_3$. Then by using the reverse triangle inequality for $\tau_{X}$ and $\tau_{Y}$ we obtain
\begin{eqnarray*}
\tau_{X}(x_1,x_3)&\geq& \tau_{X}(x_1,x_2) + \tau_{X}(x_2,x_3) \\ &\geq& \tau_{\infty}((x_1,y_1),(x_2,y_2)) + \tau_{\infty}((x_2,y_2),(x_3,y_3)),
\end{eqnarray*}
and similarly
\[
\tau_{Y}(y_1,y_3)\geq  \tau_{\infty}((x_1,y_1),(x_2,y_2)) + \tau_{\infty}((x_2,y_2),(x_3,y_3)).
\]
Hence
\[
\tau_{\infty}((x_1,y_1),(x_3,y_3))\geq \tau_{\infty}((x_1,y_1),(x_2,y_2)) + \tau_{\infty}((x_2,y_2),(x_3,y_3))
\]
which finishes the proof of the reverse triangle inequality.
\end{proof}

Now we move on in to proving that under some mild assumptions, the uniform product of two Lorentzian length spaces is a Lorentzian length space. A few previous lemmas are in order. First we focus on the geodesic structure of the uniform product.

\begin{remark}\label{CausalConnectness}
Let $\gamma:[a,b]\to X\times_{\infty} Y$, $\gamma(t)=(\gamma_X(t), \gamma_Y(t))$ be a curve.  First, notice that if $B_{r}(x)$ and $B_{r}(y)$ are balls in $(X,d_X)$ and $(Y,d_Y)$, respectively, then the $d_\infty$ ball $B_{r}(x,y)$ agrees with $B_{r}(x)\times B_{r}(y)$. We immediately conclude that if $\gamma$ is locally Lipschitz continuous then so are $\gamma_{X}$ and $\gamma_{Y}$. Furthermore, if $\gamma$ is a future-directed causal curve, then for all $a\leq s<t \leq b$ we have $(\gamma_X(s),\gamma_Y(s)) \leq_{\infty} (\gamma_X(t),\gamma_Y(t))$ implying $\gamma_X(s)\leq_{X}\gamma_X(t)$ and $\gamma_Y(s)\leq_{S}\gamma_Y(t)$. If $\gamma_X$ is not constant then $\gamma_X$ is a future-directed causal curve. Moreover, if $\gamma$ is timelike or null then so is $\gamma_{X}$.
\end{remark}

\begin{lemma}\label{UniformCausalCurves}
Let $\gamma_1:[a_1,b_1]\to X$ and $\gamma_2:[a_2,b_2]\to Y$ be two future causal curves in $(X,d_X,\ll_X,\leq_X,\tau_X)$ and $(Y,d_Y,\ll_Y,\leq_Y,\tau_Y)$, respectively. Let us consider $\varphi_i:[0,1]\to [a_i,b_i]$ for $i=1,2$ given by $\varphi_i(t) = t(b_i-a_i)+a_i$. Then $\gamma:[0,1]\to X\times_{\infty} Y$ defined as $\gamma(t)=(\gamma_1(\varphi_1(t)), \gamma_2(\varphi_2(t)))$ is a causal curve in $X\times_{\infty} Y$ and 
\[
L_{\tau_{\infty}}(\gamma)\leq \min\{L_{\tau_X}(\gamma_1), L_{\tau_{Y}}(\gamma_2)\}.
\]
Moreover, if $\gamma_1$ and $\gamma_2$ are timelike and maximal, then $\gamma$ is maximal and timelike
\end{lemma}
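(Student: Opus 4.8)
The plan is to verify directly that $\gamma(t)=(\gamma_1(\varphi_1(t)),\gamma_2(\varphi_2(t)))$ is a causal curve, then control its $\tau_\infty$-length, and finally use the maximality hypothesis to pin down the value $L_{\tau_\infty}(\gamma)$. First I would note that each $\varphi_i$ is an increasing affine reparametrization, so $\gamma_i\circ\varphi_i$ is again a future causal curve on $[0,1]$ (causal character is preserved under monotone reparametrization, and Lipschitz continuity is preserved since $\varphi_i$ is Lipschitz). By Remark \ref{CausalConnectness}, the $d_\infty$-ball decomposes as a product of balls, so $\gamma$ is locally Lipschitz into $X\times_\infty Y$; and for $s<t$ we have $\gamma_1(\varphi_1(s))\leq_X\gamma_1(\varphi_1(t))$ and $\gamma_2(\varphi_2(s))\leq_Y\gamma_2(\varphi_2(t))$, which by definition of $\leq_\infty$ gives $\gamma(s)\leq_\infty\gamma(t)$. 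Hence $\gamma$ is a future-directed causal curve in $X\times_\infty Y$.

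Next I would bound $L_{\tau_\infty}(\gamma)$. For any partition $0=t_0<\cdots<t_N=1$, set $s_j^{(i)}=\varphi_i(t_j)$, which forms a partition of $[a_i,b_i]$. Using $\tau_\infty(\gamma(t_{j-1}),\gamma(t_j))=\min\{\tau_X(\gamma_1(s_{j-1}^{(1)}),\gamma_1(s_j^{(1)})),\,\tau_Y(\gamma_2(s_{j-1}^{(2)}),\gamma_2(s_j^{(2)}))\}$, and since the minimum is bounded above by either entry,
\[
\sum_{j=1}^{N}\tau_\infty(\gamma(t_{j-1}),\gamma(t_j))\ \le\ \sum_{j=1}^{N}\tau_X(\gamma_1(s_{j-1}^{(1)}),\gamma_1(s_j^{(1)}))\ \le\ L_{\tau_X}(\gamma_1),
\]
and likewise the sum is $\le L_{\tau_Y}(\gamma_2)$; recall $L_{\tau_\infty}(\gamma)$ is the infimum over partitions of these Riemann-type sums (the $\tau$-length is defined via infimum over partitions, owing to reverse triangle inequality), so taking the infimum gives $L_{\tau_\infty}(\gamma)\le\min\{L_{\tau_X}(\gamma_1),L_{\tau_Y}(\gamma_2)\}$.

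For the maximality claim, suppose now $\gamma_1$ and $\gamma_2$ are timelike and maximal, say from $p_i$ to $q_i$, so $L_{\tau_X}(\gamma_1)=\tau_X(p_1,q_1)>0$ and $L_{\tau_Y}(\gamma_2)=\tau_Y(p_2,q_2)>0$. Then $\gamma$ runs from $p=(p_1,p_2)$ to $q=(q_1,q_2)$ with $p_1\ll_X q_1$, $p_2\ll_Y q_2$, hence $p\ll_\infty q$ and $\gamma$ is timelike. By the general inequality $L_{\tau_\infty}(\gamma)\le\tau_\infty(p,q)$ together with the bound just proved, it suffices to show $\tau_\infty(p,q)\le L_{\tau_\infty}(\gamma)$. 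But $\tau_\infty(p,q)=\min\{\tau_X(p_1,q_1),\tau_Y(p_2,q_2)\}=\min\{L_{\tau_X}(\gamma_1),L_{\tau_Y}(\gamma_2)\}$, and from the length bound we only get $L_{\tau_\infty}(\gamma)\le\min\{L_{\tau_X}(\gamma_1),L_{\tau_Y}(\gamma_2)\}=\tau_\infty(p,q)$ — so the two inequalities combine to give equality, $L_{\tau_\infty}(\gamma)=\tau_\infty(p,q)$, and $\gamma$ is maximal.

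The main obstacle is the reverse inequality $\tau_\infty(p,q)\le L_{\tau_\infty}(\gamma)$: I expect it falls out for free from the universal bound $L_{\tau_\infty}(\gamma)\le\tau_\infty(p,q)$ valid for every causal curve together with the already-established $\tau_\infty(p,q)=\min\{L_{\tau_X}(\gamma_1),L_{\tau_Y}(\gamma_2)\}\ge L_{\tau_\infty}(\gamma)$, so in fact no delicate estimate is needed once one observes that maximality of the factors forces $\tau_\infty(p,q)$ to equal the minimum of the factor-lengths. A subtlety to state carefully is the definition of $L_\tau$ as an infimum (not supremum) over partitions — this is what makes the partition estimate in the second paragraph go through in the right direction.
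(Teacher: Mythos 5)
The causality claim and the bound $L_{\tau_\infty}(\gamma)\le\min\{L_{\tau_X}(\gamma_1),L_{\tau_Y}(\gamma_2)\}$ follow essentially the paper's argument, but note that the middle step of your displayed chain is written backwards: since $L_{\tau_X}(\gamma_1)$ is an \emph{infimum} over partitions, a single partition sum satisfies $\sum_j\tau_X(\gamma_1(s^{(1)}_{j-1}),\gamma_1(s^{(1)}_j))\ge L_{\tau_X}(\gamma_1)$, not $\le$. The correct logic (and the paper's) is: $L_{\tau_\infty}(\gamma)\le\sum_j\tau_\infty(\gamma(t_{j-1}),\gamma(t_j))\le\sum_j\tau_X(\cdots)$ for \emph{every} partition, and since the left-hand side is partition-independent one may pass to the infimum over partitions on the right to obtain $L_{\tau_\infty}(\gamma)\le L_{\tau_X}(\gamma_1)$. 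This is repairable, but as written the inequality is false.

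The maximality argument, however, has a genuine gap. You correctly reduce to showing $\tau_\infty(p,q)\le L_{\tau_\infty}(\gamma)$, but then claim this ``falls out for free'' by combining the universal bound $L_{\tau_\infty}(\gamma)\le\tau_\infty(p,q)$ with the length bound $L_{\tau_\infty}(\gamma)\le\min\{L_{\tau_X}(\gamma_1),L_{\tau_Y}(\gamma_2)\}=\tau_\infty(p,q)$. These are two inequalities in the \emph{same} direction; they cannot combine to give equality, and nowhere in your argument is a lower bound on $L_{\tau_\infty}(\gamma)$ produced. Because $L_{\tau_\infty}$ is an infimum over partitions, the needed lower bound requires showing that \emph{every} partition sum $\sum_i\min\{\tau_X(\cdot,\cdot),\tau_Y(\cdot,\cdot)\}$ is at least $\min\{r_1,r_2\}$, where $r_i=\tau(\gamma_i(a_i),\gamma_i(b_i))$; and in general $\sum_i\min\{\alpha_i,\beta_i\}$ can be strictly smaller than $\min\{\sum_i\alpha_i,\sum_i\beta_i\}$, so this is not automatic. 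The paper closes exactly this gap by exploiting maximality of the factors: reparametrizing each maximal timelike $\gamma_i$ by $\tau$-arclength, so that $\tau_X(\gamma_1(\varphi_1(u_i)),\gamma_1(\varphi_1(u_{i+1})))=(u_{i+1}-u_i)r_1$ and likewise for $\gamma_2$, one finds that each term of the sum equals $(u_{i+1}-u_i)\min\{r_1,r_2\}$ --- the two entries of the minimum are \emph{proportional} with the same factor $(u_{i+1}-u_i)$ --- and the sum telescopes to $\min\{r_1,r_2\}$ for every partition. That proportionality, which uses both the maximality of subsegments of $\gamma_1,\gamma_2$ and the affine choice of $\varphi_i$, is the missing idea in your proposal.
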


\begin{proof}
Note that $\gamma$ is future causal since $\gamma_1$ and $\gamma_2$ also are, and  $\varphi_1$, $\varphi_2$ are continuous strictly increasing functions. To analyze the $\tau_\infty$ length of curves, fix two partitions $a_1=t_0<t_1<\cdots <t_n = b_1$ and $a_2=s_0<s_1<\cdots< t_m=b_2$ in $[a_1,b_1]$ and $[a_2,b_2]$, respectively. Thus we induce two partitions  $0=\varphi_1^{-1}(a_1)=\varphi_1^{-1}(t_0)<\varphi_1^{-1}(t_1)<\cdots <\varphi_1^{-1}(t_n) = \varphi_1^{-1}(b_1)=1$ and $0=\varphi_2^{-1}(a_2)=\varphi_2^{-1}(s_0)<\varphi_2^{-1}(s_1)<\cdots< \varphi_2^{-1}(t_m)=\varphi_2^{-1}(b_2)=1$ in $[0,1]$. Now
\[
\begin{array}{rcl}
L_{\tau_{\infty}}(\gamma) &\leq& \displaystyle{\sum_{i=0}^{n-1} \tau_{\infty}(\gamma(\varphi_1^{-1}(t_{i+1})), \gamma(\varphi_1^{-1}(t_{i+1})))} \\
&=& \displaystyle{\sum_{i=0}^{n-1} \min\left\{\tau_X(\gamma_1(t_{i}), \gamma_1(t_{i+1})), \tau_Y(\gamma_2(\varphi_1^{-1}(t_{i})), \gamma_2(\varphi_1^{-1}(t_{i+1})))  \right\}} \\
&\leq& \displaystyle{\sum_{i=0}^{n-1} \tau_X(\gamma_1(t_{i}), \gamma_1(t_{i+1}))}.
\end{array}
\]
then $L_{\tau_{\infty}}(\gamma)\leq L_{\tau_{X}}(\gamma_1)$, and in a similar way we obtain $L_{\tau_{\infty}}(\gamma)\leq L_{\tau_{Y}}(\gamma_2)$. Therefore 
\[
L_{\tau_{\infty}}(\gamma)\leq \min\{L_{\tau_X}(\gamma_1), L_{\tau_{Y}}(\gamma_2)\}.
\]

Now suppose $\gamma_1$ and $\gamma_2$ are both timelike and maximal and $r_1=\tau_X(\gamma_1(a_1),\gamma_1(b_1))$, $r_2=\tau_Y(\gamma_2(a_2),\gamma_2(b_2))$.  Then $a_i$ and $b_i$ can be chosen such that $r_i=b_i-a_i>0$ and $\gamma_i$ is parametrized by arclength for $i=1,2$. Therefore, for a given partition $0=u_0<u_1<\cdots < u_n=1$ we have
\[
\begin{array}{rcl}
\tau_{\infty}(\gamma(u_{i}),\gamma(u_{i+1})) &=&  \min\{ \tau_{X}(\gamma_1(\varphi_1(u_i))), \gamma_1(\varphi_1(u_{i+1}))), \tau_{X}(\gamma_2(\varphi_2(u_i))), \gamma_2(\varphi_2(u_{i+1}))) \} \\
&=& \min\{ \varphi_1(u_{i+1})-\varphi_1(u_{i}), \varphi_2(u_{i+1})-\varphi_2(u_{i}) \}.
\end{array}
\]
In conclusion
\[
\begin{array}{rcl}
\displaystyle{\sum_{i=0}^{n-1} \tau_{\infty}(\gamma(u_{i}),\gamma(u_{i+1}))} &=& \displaystyle{\sum_{i=0}^{n-1} \min\{ \varphi_1(u_{i+1})-\varphi_1(u_{i})}, \varphi_2(u_{i+1})-\varphi_2(u_{i}) \} \\
&=& \displaystyle{\sum_{i=0}^{n-1} \min\{ (u_{i+1}-u_{i})(b_1-a_1), (u_{i+1}-u_{i})(b_2-a_2) \} }\\
&=& \min\{r_1,r_2\} \displaystyle{\sum_{i=0}^{n-1} u_{i+1}-u_{i} }= \min\{r_1,r_2\}.
\end{array}
\]
Thus $L_{\tau_{\infty}}(\gamma)=\tau_{\infty}(\gamma(0),\gamma(1))$ and therefore $\gamma$ is maximal.
\end{proof}

\begin{lemma}\label{lemma:XYU}
Let $(X,d_{X},\ll_{X},\leq_{X},\tau_{X})$ and $(Y,d_{Y},\ll_{Y},\leq_{Y},\tau_{Y})$ be two Lorentzian pre-length spaces.
\begin{enumerate}
\item[(i)] If $(X,d_{X},\ll_{X},\leq_{X},\tau_{X})$ and $(Y,d_{Y},\ll_{Y},\leq_{Y},\tau_{Y})$ are locally causally closed, then $X\times_{\infty} Y$ is locally causally closed.
\item[(ii)] If $(X,d_{X},\ll_{X},\leq_{X},\tau_{X})$ and $(Y,d_{Y},\ll_{Y},\leq_{Y},\tau_{Y})$ are causally path connected, then $X\times_{\infty} Y$ is causally path connected.
\item[(iii)] If $(X,d_{X},\ll_{X},\leq_{X},\tau_{X})$ and $(Y,d_{Y},\ll_{Y},\leq_{Y},\tau_{Y})$ are regularly localizable, then $X\times_{\infty} Y$ is localizable.
\item[iv)] If there exists timelike maximal curves joining any pair of chronologically related points in both $(X,d_{X},\ll_{X},\leq_{X},\tau_{X})$ and $(Y,d_{Y},\ll_{Y},\leq_{Y},\tau_{Y})$, then $X\times_\infty Y$ is intrinsic.
\end{enumerate}
\end{lemma}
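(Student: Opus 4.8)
The plan is to verify each of the four properties of the Lorentzian uniform product $X\times_\infty Y$ separately, leaning heavily on the coordinatewise nature of the relations $\ll_\infty$, $\leq_\infty$ and on the two preparatory lemmas (Remark \ref{CausalConnectness} and Lemma \ref{UniformCausalCurves}). For (i), I would take two sequences $(x_n,y_n)\to(x,y)$ and $(a_n,b_n)\to(a,b)$ with $(x_n,y_n)\leq_\infty(a_n,b_n)$, all inside a product neighborhood $U\times V$ where $U$ and $V$ are the neighborhoods in which $\leq_X$ and $\leq_Y$ are closed (such a product neighborhood is a basis element for the $d_\infty$ topology, by the ball identity $B_r(x,y)=B_r(x)\times B_r(y)$ noted in Remark \ref{CausalConnectness}). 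Projecting, $x_n\to x$, $a_n\to a$ with $x_n\leq_X a_n$, so $x\leq_X a$; likewise $y\leq_Y b$; hence $(x,y)\leq_\infty(a,b)$.

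For (ii), given $(x,y)\ll_\infty(a,b)$ we have $x\ll_X a$ and $y\ll_Y b$, so by causal path connectedness of the factors there are future timelike curves $\gamma_1$ from $x$ to $a$ and $\gamma_2$ from $y$ to $b$; Lemma \ref{UniformCausalCurves} (the reparametrized concatenation $\gamma(t)=(\gamma_1(\varphi_1(t)),\gamma_2(\varphi_2(t)))$) produces a future causal curve in $X\times_\infty Y$ joining $(x,y)$ to $(a,b)$, and since each $\gamma_i(s)\ll_{\cdot}\gamma_i(t)$ for $s<t$ one checks $\gamma(s)\ll_\infty\gamma(t)$, so it is in fact timelike; the causal case is identical. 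For (iii), given $(x,y)\in X\times Y$ pick regular localizing neighborhoods $\Omega_x$, $\Omega_y$ with their associated functions $\omega_x$, $\omega_y$, and propose $\Omega_{(x,y)}=\Omega_x\times\Omega_y$ with $\omega_{(x,y)}=\min\{\omega_x\circ(\mathrm{pr}_X\times\mathrm{pr}_X),\,\omega_y\circ(\mathrm{pr}_Y\times\mathrm{pr}_Y)\}$. Conditions (1)--(3) of the localizing definition transfer directly: $d_\infty$-length of causal curves in $\Omega_x\times\Omega_y$ is bounded since their projections lie in $\Omega_x$ and $\Omega_y$ and have bounded length there (and the $d_\infty$-length is at most the sum of the two coordinate lengths), $I^\pm$ of a point meets $\Omega_x\times\Omega_y$ because it does so in each factor, and the pre-length-space axiom for $\omega_{(x,y)}$ follows as in the preceding proposition. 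For condition (4), given $p=(p_1,p_2)\leq_\infty q=(q_1,q_2)$ in $\Omega_x\times\Omega_y$, take the $\omega$-maximizers $\gamma_{p_1q_1}$ in $\Omega_x$ and $\gamma_{p_2q_2}$ in $\Omega_y$ and feed them to Lemma \ref{UniformCausalCurves}; its ``moreover'' clause, applied in the regular (timelike, strictly longer than any null-containing curve) case, gives a curve realizing $\min\{\omega_x(p_1,q_1),\omega_y(p_2,q_2)\}=\omega_{(x,y)}(p,q)$ and maximal among causal curves in $\Omega_x\times\Omega_y$ — here one uses that any competing causal curve projects to causal curves in the factors, whose $\tau$-lengths are bounded by $\omega_x$, $\omega_y$ respectively, so its $\tau_\infty$-length (a min of two quantities, each bounded by an $L_{\tau_\infty}$ of a projection, which is in turn $\le$ the factor's $\tau$-length) cannot exceed $\omega_{(x,y)}(p,q)$.

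For (iv), I want to show $\tau_\infty((x,y),(a,b))=\mathcal T_\infty((x,y),(a,b))$. The inequality $\tau_\infty\ge\mathcal T_\infty$ is the upper-bound half of Lemma \ref{UniformCausalCurves}: any future causal $\gamma$ in $X\times_\infty Y$ from $(x,y)$ to $(a,b)$ projects to causal curves with $L_{\tau_\infty}(\gamma)\le\min\{L_{\tau_X}(\gamma_X),L_{\tau_Y}(\gamma_Y)\}\le\min\{\tau_X(x,a),\tau_Y(y,b)\}=\tau_\infty((x,y),(a,b))$. For the reverse, if $(x,y)\ll_\infty(a,b)$ then $x\ll_X a$, $y\ll_Y b$, so by hypothesis there are timelike maximal curves $\gamma_1$, $\gamma_2$ realizing $\tau_X(x,a)$ and $\tau_Y(y,b)$; the ``moreover'' part of Lemma \ref{UniformCausalCurves} then yields a maximal timelike curve $\gamma$ with $L_{\tau_\infty}(\gamma)=\min\{\tau_X(x,a),\tau_Y(y,b)\}=\tau_\infty((x,y),(a,b))$, so $\mathcal T_\infty\ge\tau_\infty$; if $(x,y)\not\ll_\infty(a,b)$ but $\leq_\infty$ holds, then $\tau_\infty=0\le\mathcal T_\infty$ trivially, and if not even $\leq_\infty$ holds then both sides are $0$. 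Hence equality holds everywhere and $X\times_\infty Y$ is intrinsic.

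The main obstacle I anticipate is part (iii), specifically the maximality clause in condition (4): one must be careful that a curve which maximizes $\tau_\infty$-length need not have both projections maximal (the $\min$ can be governed by just one factor), so the argument has to go through the length inequality $L_{\tau_\infty}(\sigma)\le\min\{L_{\tau_X}(\sigma_X),L_{\tau_Y}(\sigma_Y)\}$ for an arbitrary competitor $\sigma$ rather than through any structural decomposition — and one must also confirm that the constructed curve genuinely lies in $\Omega_x\times\Omega_y$, which is immediate since each $\gamma_{p_iq_i}$ lies in $\Omega_{\cdot}$ by construction. A minor secondary point is checking that $\omega_{(x,y)}$ as defined is continuous on $\Omega_x\times\Omega_y$, which follows since it is a minimum of compositions of continuous maps.
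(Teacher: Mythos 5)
Your proposal is correct and follows essentially the same route as the paper: product neighborhoods for local causal closedness, $\Omega_x\times\Omega_y$ with $\omega_{(x,y)}=\min\{\omega_x,\omega_y\}$ for localizability, and Lemma \ref{UniformCausalCurves} both to produce connecting causal curves (parts (ii) and (iv)) and to transfer maximality. If anything, you are more explicit than the paper on the one delicate point, namely that maximality in condition (4) of the localizing definition must be argued via the inequality $L_{\tau_\infty}(\sigma)\le\min\{L_{\tau_X}(\sigma_X),L_{\tau_Y}(\sigma_Y)\}$ for an arbitrary competitor $\sigma$, since a $\tau_\infty$-maximizer need not project to maximizers in both factors.
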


\begin{proof}
For (i) observe that the product $U_{x}\times U_{y}$ of two causally closed neighborhoods $U_x$, $U_y$ containing $x\in X$, $y\in Y$ is a causally closed neighborhood in $X\times Y$. Notice also that $\overline{U_{x} \times U_{y}} = \overline{U}_{x}\times \overline{U}_{y}$. Part (ii) follows from Remark \ref{CausalConnectness}. Part (iv) is a immediate consequence of Lemma \ref{UniformCausalCurves}. Thus we will focus on Part (iii). For every $x\in X$ and $y\in Y$ choose localizing neighborhoods $\Omega_{x}$ and $\Omega_{y}$ and their corresponding continuous maps $\omega_{x}:\Omega_{x}\times \Omega_{x}\to [0,\infty)$, $\omega_{y}:\Omega_{y}\times \Omega_{y}\to [0,\infty)$. We will prove that $\Omega_{(x,y)}$ is a localizing neighborhood and $\omega_{(x,y)}:\Omega_{(x,y)}\times \Omega_{(x,y)}\to [0,\infty)$ defined as
\[
\omega_{(x,y)}((p,q),(z,w)) = \min\{\omega_{x}(p,z), \omega_{y}(q,w)\},
\]
satisfies the conditions of Definition 3.16 of \cite{KS}. Let $\gamma:[a,b]\to X\times_\infty Y$ be a causal curve contained in $\Omega_{(x,y)}$, with $\gamma(t)=(\alpha(t),\beta(t))$. Since $\gamma$ cannot be constant then  $\alpha$ and $\beta$ are not simultaneously constant. Since $\textrm{Im}(\alpha)\subset\Omega_{x}$ and $\textrm{Im}(\beta)\subset\Omega_{y}$ then there exist $C_{1},C_2>0$ with $L_{d_{X}}(\alpha)\leq C_1$ and $L_{d_{Y}}(\beta)\leq C_2$. Moreover, for a given partition $a=t_0<t_1<\cdots<t_n=b$ we have
\[
\begin{array}{rcl}
\displaystyle{\sum_{i=0}^{n-1} d_{\infty}(\gamma(t_{i}),\gamma(t_{i+1}))} &=& \displaystyle{\sum_{i=0}^{n-1} \max\{d_{X}(\alpha(t_{i}),\alpha(t_{i+1})), d_{Y}(\beta(t_{i}), \beta(t_{i+1}))\} } \\
&=& \displaystyle{\sum_{i=0}^{n-1} d_{X}(\alpha(t_{i}),\alpha(t_{i+1})) + d_{Y}(\beta(t_{i}), \beta(t_{i+1}))} \\
&=& \displaystyle{\sum_{i=0}^{n-1} d_{X}(\alpha(t_{i}),\alpha(t_{i+1})) + \sum_{i=0}^{n-1} d_{Y}(\beta(t_{i}), \beta(t_{i+1}))} \\
&\leq& L_{d_{X}}(\alpha) + L_{d_{Y}}(\beta) \\
&\leq& C_1 + C_2.
\end{array}
\]
Then $L_{d_{\infty}}(\gamma)\leq C_1+C_2$.  Moreover, clearly $(\Omega_{(x,y)},d_{\infty},\ll_{\infty}, \leq_{\infty}, \omega_{(x,y)})$ is a Lorentzian pre-length space. Now, if $(p,q)\in \Omega_{(x,y)}$, then 
\[
\varnothing \neq \left(I^{+}_{X}(p)\cap \Omega_{x}\right) \times \left(I^{+}_{Y}(q)\cap \Omega_{y}\right) \subset I^{+}_{\infty}(p,q)\cap \Omega_{(x,y)},
\]
thus $I^{+}_{\infty}(p,q)\cap \Omega_{(x,y)}\neq \varnothing$. Similarly $I^{-}_{\infty}(p,q)\cap \Omega_{(x,y)}\neq \varnothing$. Finally fix $(x_1,y_1),(x_2,y_2)\in \Omega_{(x,y)}$, then there exist two timelike maximal curves $\gamma_1:[a_1,b_1]\to \Omega_{x}$ and $\gamma_2:[a_2,b_2]\to \Omega_{y}$ with $\gamma_1(a_1)=x_1$, $\gamma_1(b_1)=x_2$, $\gamma_{2}(a_2)=y_1$, $\gamma_{2}(b_2)=y_2$ and
\[
L_{\tau_{X}}(\gamma_1)= \omega_{x}(x_1,x_2) \leq \tau_{X}(x_1,x_2),
\]
\[
L_{\tau_{Y}}(\gamma_2)= \omega_{y}(y_1,y_2) \leq \tau_{Y}(y_1,y_2).
\]
In conclusion, the curve $\gamma$ defined as in Lemma \ref{UniformCausalCurves} satisfy
\[
\begin{array}{rcl}
L_{\tau_{\infty}}(\gamma) &=& \min\{L_{\tau_{X}}(\gamma_1), L_{\tau_{Y}}(\gamma_2)\} \\
&=& \min\{\omega_{x}(x_1,x_2),\omega_{y}(y_1,y_2)\} \\
&\leq& \min\{\tau_{X}(x_1,x_2), \tau_{Y}(y_1,y_2)\}.
\end{array}
\]
\end{proof}

\begin{proposition}\label{propo:GHI}
Let $(X,d_{X},\ll_{X},\leq_{X},\tau_{X})$ and $(Y,d_{Y},\ll_{Y},\leq_{Y},\tau_{Y})$ be two globally hyperbolic Lorentzian pre-length spaces. Then $X\times_{\infty} Y$ is a globally hyperbolic Lorentzian pre-length space.
\end{proposition}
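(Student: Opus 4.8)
The plan is the following. Since the preceding proposition already establishes that $X\times_\infty Y$ is a Lorentzian pre-length space, it suffices to verify the two defining properties of global hyperbolicity: that $X\times_\infty Y$ is non-totally imprisoning, and that every causal diamond $J^+_\infty(p_1,q_1)\cap J^-_\infty(p_2,q_2)$ is compact. Both will follow by transferring the corresponding properties of the factors along the coordinate projections, using that $d_\infty$ induces the product topology (Remark \ref{CausalConnectness}) and that the causal relation $\leq_\infty$ decomposes coordinate-wise.

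First I would treat non-total imprisonment. Given a compact set $K\subset X\times Y$, its projections $K_X=\pi_X(K)$ and $K_Y=\pi_Y(K)$ are compact (the projections are $1$-Lipschitz, hence continuous, with respect to $d_\infty$), so by hypothesis there are finite bounds $C(K_X)$ and $C(K_Y)$ for the $d_X$- and $d_Y$-lengths of causal curves contained in $K_X$ and $K_Y$ respectively. If $\gamma=(\gamma_X,\gamma_Y):[a,b]\to K$ is a future-directed causal curve, then by Remark \ref{CausalConnectness} each component $\gamma_X$, $\gamma_Y$ is either constant or a future-directed causal curve, with image contained in $K_X$, resp.\ $K_Y$; in either case $L_{d_X}(\gamma_X)\le C(K_X)$ and $L_{d_Y}(\gamma_Y)\le C(K_Y)$. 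Combining this with the length estimate $L_{d_\infty}(\gamma)\le L_{d_X}(\gamma_X)+L_{d_Y}(\gamma_Y)$ established in the proof of Lemma \ref{lemma:XYU} yields $L_{d_\infty}(\gamma)\le C(K_X)+C(K_Y)$, so $C(K):=C(K_X)+C(K_Y)$ is the desired bound.

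Next I would compute the causal diamonds. Directly from the definition of $\leq_\infty$, for any point $(x,y)$ one has $(p_1,q_1)\leq_\infty(x,y)\leq_\infty(p_2,q_2)$ if and only if $p_1\leq_X x\leq_X p_2$ and $q_1\leq_Y y\leq_Y q_2$; hence
\[
J^+_\infty(p_1,q_1)\cap J^-_\infty(p_2,q_2)=\bigl(J^+_X(p_1)\cap J^-_X(p_2)\bigr)\times\bigl(J^+_Y(q_1)\cap J^-_Y(q_2)\bigr).
\]
By global hyperbolicity of $X$ and $Y$ each factor on the right-hand side is compact, and since $d_\infty$ induces the product topology on $X\times Y$, the product is compact. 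Therefore every causal diamond of $X\times_\infty Y$ is compact, which together with the previous step proves that $X\times_\infty Y$ is globally hyperbolic. The argument is essentially routine; the only point demanding a little care is the bookkeeping for a component curve that degenerates to a constant (handled by noting its $d$-length vanishes), so I do not anticipate a genuine obstacle.
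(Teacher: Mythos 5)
Your proposal is correct and follows essentially the same route as the paper: the paper also covers $K$ by a product $K_1\times K_2$ of compact sets (equivalently, the projections), bounds $L_{d_\infty}(\gamma)$ by $C_1+C_2$ via the component curves as in Lemma \ref{lemma:XYU}, and uses the coordinate-wise factorization $J^{+}_{\infty}(x_1,y_1)\cap J^{-}_{\infty}(x_2,y_2)=\bigl(J^{+}_{X}(x_1)\cap J^{-}_{X}(x_2)\bigr)\times\bigl(J^{+}_{Y}(y_1)\cap J^{-}_{Y}(y_2)\bigr)$ for compactness of diamonds. Your explicit handling of the case where a component curve degenerates to a constant is a small point the paper leaves implicit, but it changes nothing substantive.
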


\begin{proof}
If $K\subset X\times_{\infty} Y$ is compact, then there exist two compact sets $K_1\subset X$ and $K_2\subset Y$ with $K\subset K_1\times K_2$. Let $\gamma:[a,b]\to K$ to be a causal curve in $X\times_{\infty} Y$, then we can find two causal curves $\alpha:[a,b]\to K_1$, $\beta:[a,b]\to K_2$ with $\gamma(t)=(\alpha(t),\beta(t))$. Since $X$ and $Y$ are globally hyperbolic then there are two positive numbers $C_1$ and $C_2$ with $L_{d_{X}}(\alpha)\leq C_1$ and $L_{d_{Y}}(\beta)\leq C_2$. Therefore, by following a similar argument as in the proof of Lemma  \ref{lemma:XYU} we conclude
\[
L_{d_{\infty}}(\gamma)\leq C_1+C_2.
\]
Thus, $X\times_{\infty} Y$ is non totally imprisoning. Finally, observe that
\[
J^{+}_{\infty}(x_1,y_1)\cap J^{-}_{\infty}(x_2,y_2) = \left(J^{+}_{X}(x_1)\cap J^{-}_{X}(x_2)\right) \times \left(J^{+}_{Y}(y_1)\cap J^{-}_{Y}(y_2)\right),
\]
thus the compactness of $J^{+}_{\infty}(x_1,y_1)\cap J^{-}_{\infty}(x_2,y_2)$ is a direct consequence of the compactness of $J^{+}_{X}(x_1)\cap J^{-}_{X}(x_2)$ and $J^{+}_{Y}(y_1)\cap J^{-}_{Y}(y_2)$. 
\end{proof}

From Lemma \ref{lemma:XYU} and Proposition \ref{propo:GHI} the following result is immediate.

\begin{corollary}
Let $(X,d_{X},\ll_{X},\leq_{X},\tau_{X})$ and $(Y,d_{Y},\ll_{Y},\leq_{Y},\tau_{Y})$ be two globally hyperbolic regularly localizable Lorentzian pre-length spaces. Then $X\times_{\infty} Y$ is a globally hyperbolic Lorentzian length space.

\end{corollary}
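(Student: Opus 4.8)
The plan is to read off the four defining properties of a Lorentzian length space (Definition \ref{defi:lls}) for $X\times_\infty Y$ directly from the four parts of Lemma \ref{lemma:XYU}, and then to upgrade the conclusion to global hyperbolicity via Proposition \ref{propo:GHI}. Concretely: causal path connectedness of $X\times_\infty Y$ is part (ii); the existence of a localizing neighborhood around every point is part (iii) (this is the step for which the \emph{regularity} of the localizing neighborhoods of the factors is consumed, even though only localizability of the product is asserted); local causal closedness is part (i); and the assertion that $X\times_\infty Y$ is intrinsic, i.e. that its time separation is recovered as the supremum of $\tau_\infty$-lengths of causal curves, is part (iv). Together with the push-up property, part (iv) gives $\tau_\infty=\mathcal{T}$ on all causally related pairs, which is condition (3) of Definition \ref{defi:lls}. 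Hence $X\times_\infty Y$ is a Lorentzian length space, and Proposition \ref{propo:GHI} shows in addition that it is non-totally imprisoning with compact causal diamonds, i.e. globally hyperbolic.

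The only hypothesis needed in Lemma \ref{lemma:XYU} that is not literally listed in the statement is the one required by part (iv): that any two chronologically related points of each factor be joined by a timelike maximal curve. I would supply this by observing that each factor is itself a globally hyperbolic Lorentzian length space --- it is causally path connected (this is built into the notion of a localizing neighborhood), regularly localizable by hypothesis, locally causally closed, and globally hyperbolic --- so that the Avez-Seifert property for globally hyperbolic Lorentzian length spaces (\cite{KS}; see also \cite{ACS}) applies and yields maximal causal curves between causally related points, timelike precisely when the endpoints are chronologically related. Feeding these curves into Lemma \ref{lemma:XYU}(iv) closes the argument; the same observation (local causal closedness of the factors) is what feeds Lemma \ref{lemma:XYU}(i).

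I expect the only genuine point of care to be exactly this bookkeeping: checking that, under the stated hypotheses, the factors are geodesic Lorentzian length spaces so that Lemma \ref{lemma:XYU}(iv) is applicable. Everything else is a direct invocation of the preceding lemma and proposition, which is why the statement is recorded as an immediate corollary.
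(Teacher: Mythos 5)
Your proposal follows exactly the route the paper takes: the paper records this corollary with the single remark that it is ``immediate from Lemma \ref{lemma:XYU} and Proposition \ref{propo:GHI}'', and your itemization of which part of the lemma supplies which clause of Definition \ref{defi:lls}, followed by the appeal to Proposition \ref{propo:GHI} for non-total imprisonment and compactness of causal diamonds, is precisely the intended argument. In that sense you have reconstructed the proof and, if anything, supplied more detail than the paper does.

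The one place where your write-up is not airtight is the bookkeeping you yourself single out. To invoke Lemma \ref{lemma:XYU}(i) and (iv) you need the factors to be locally causally closed and to admit \emph{global} timelike maximizers between chronologically related points; neither is literally among the corollary's hypotheses (``globally hyperbolic regularly localizable Lorentzian pre-length space''). Your repair --- ``each factor is itself a globally hyperbolic Lorentzian length space, so Avez--Seifert applies'' --- is circular as stated: being a Lorentzian length space in the sense of Definition \ref{defi:lls} already presupposes local causal closedness and the intrinsicness condition $\tau=\mathcal{T}$, which are exactly the properties of the factors you are trying to establish. Regular localizability only guarantees local maximizers inside each $\Omega_x$ (these suffice for part (iii) of the lemma, as its proof shows), not the global ones demanded by part (iv). The honest conclusion is that the corollary, as stated, tacitly assumes the factors are locally causally closed, intrinsic Lorentzian length spaces --- a gap inherited from the paper's one-line derivation rather than introduced by you --- and under that strengthened hypothesis your argument, including the appeal to the Avez--Seifert property of globally hyperbolic regularly localizable Lorentzian length spaces, does close correctly.
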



\section{Hyperspaces as Lorentzian pre-Length spaces}\label{sec:diamonds}

In this section we establish the main results of this work, namely, we endow with a Lorentzian pre-length structure the hyperspaces of compact sets and causal diamonds of a Lorentzian pre-length space  $(X, d,\ll,\leq,\tau)$ with a continuous time separation $\tau$.

In the context of Lorentzian pre-length spaces, we can define relations $\ll_H$ and $\leq_H$ that naturally inherit the properties of a chronological and causal relation, respectively, from those of $\ll$ and $\leq$. In precise terms we define
\begin{enumerate}
\item[(i)] For $A,B\in \mathcal{H}(X)$, $A\ll_H B$ if and only if $\forall a\in A$ there is a $b_0\in B$ with $a\ll b_0$ and $\forall b\in B$ there is a $a_0\in A$ with $a_0\ll b$.
\item[(ii)] For $A,B\in \mathcal{H}(X)$, $A\leq_H B$ if and only if $\forall a\in A$ there is a $b_0\in B$ with $a\leq b_0$ and $\forall b\in B$ there is a $a_0\in A$ with $a_0\leq b$.
\end{enumerate}

Moreover, for $x\in X$ and $C\in\mathcal{H}(X)$ we define $\dist^{\pm }_{L}:X\times\mathcal{H}(X) \to [0,\infty]$ as
\begin{eqnarray*}
\dist^{-}_{L}(x,C) &=& \sup\{\tau(x,c): c\in C\} ,\\ \dist^{+}_{L}(C,x) &=& \sup\{\tau(c,x):c\in C\},
\end{eqnarray*}
and  $\tau_{H}:\mathcal{H}(X)\times \mathcal{H}(X)\to [0,\infty]$ as\footnote{In case $\{r: A\subset\overline{U}^{L,-}_{r}(B)$  and $B\subset\overline{U}^{L,+}_{r}(A) \} = \varnothing$, then we set $\tau_{H}(A,B)=\infty$. } 
\[
\tau_{H}(A,B) = \sup\{r>0: A\subset\overline{U}^{L,-}_{r}(B) \ \mbox{and} \ B\subset\overline{U}^{L,+}_{r}(A) \},
\]
where 
\begin{eqnarray*}
\overline{U}^{L,-}_{r}(C) &=&\{x\in X: \dist^{-}_{L}(x,C)\geq r\},\\
\overline{U}^{L,+}_{r}(C)&=&\{x\in X: \dist^{+}_{L}(C,x)\geq r\}. 
\end{eqnarray*}

\begin{definition}
We say that $(\mathcal{H}(X), d_H ,\ll_{H},\leq_{H},\tau_{H})$ is the \emph{Lorentzian hyperspace of compact subsets} of $(X,d)$.
\end{definition}

\begin{remark} \label{remarkCont}
Due to the continuity of $\tau$ we have that $\dist^{-}_{L}$ and $\dist^{+}_{L}$ are both non-negative continuous functions. Moreover, for every $x\in X$ and $C\in\mathcal{H}(X)$ there exists $c_{-},c_+\in C$ such that $\dist^{-}_{L}(x,C)=\tau(x,c_{-})$ and  $\dist^{+}_{L}=\tau(x,c_{+})$.  
\end{remark}

The following Lemma can be interpreted as analogues for $\tau_H$ of well known results pertaining to the Hausdorff distance (see for instance Lemma \ref{TubularCompact}).

\begin{lemma}\label{TauH}
For every $A,B\in\mathcal{H}(X)$, $r_1>0$, $r_2>0$ and $r\geq 0$ we have:
\begin{enumerate}
\item $\overline{U}^{L,-}_{r_1}\left(\overline{U}^{L,-}_{r_2}(A)\right)\subset \overline{U}^{L,-}_{r_1+r_2}(A)$ and $\overline{U}^{L,+}_{r_1}\left(\overline{U}^{L,+}_{r_2}(A)\right)\subset \overline{U}^{L,+}_{r_1+r_2}(A)$.
\item $\tau_{H}(A,B)=\min\left\{\displaystyle{\inf_{a\in A}\dist^{-}_{L}(a,B)},\displaystyle{\inf_{b\in B}\dist^{+}_{L}(b,A)}\right\}$.
\item $\tau_{H}(A,B)\geq r$ if and only if $\forall a\in A$ $\dist^{-}_{L}(a,B)\geq r$ and $\forall b\in B$ $\dist^{+}_{L}(A,b)\geq r$.
\item If $A\subset B$ then $\overline{U}^{L,-}_{r}(A)\subset \overline{U}^{L,-}_{r}(B)$ and $\overline{U}^{L,+}_{r}(A)\subset \overline{U}^{L,+}_{r}(B)$.
\end{enumerate}
\end{lemma}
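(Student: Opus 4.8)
The plan is to prove the four statements of Lemma \ref{TauH} in order, each following quite directly from the definitions of $\dist^{\pm}_{L}$, $\overline{U}^{L,\pm}_{r}$ and $\tau_H$, together with the reverse triangle inequality for $\tau$ and the continuity/compactness observations recorded in Remark \ref{remarkCont}.

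For part (1), I would take $x\in\overline{U}^{L,-}_{r_1}(\overline{U}^{L,-}_{r_2}(A))$, so by Remark \ref{remarkCont} there is a point $y\in\overline{U}^{L,-}_{r_2}(A)$ realizing $\dist^{-}_{L}(x,\overline{U}^{L,-}_{r_2}(A))=\tau(x,y)\geq r_1$, and in turn a point $a\in A$ with $\tau(y,a)\geq r_2$. The reverse triangle inequality gives $\tau(x,a)\geq\tau(x,y)+\tau(y,a)\geq r_1+r_2$ (one must first note $x\leq y\leq a$, which is automatic since $\tau$ being positive forces the causal relation), hence $\dist^{-}_{L}(x,A)\geq r_1+r_2$ and $x\in\overline{U}^{L,-}_{r_1+r_2}(A)$. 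The $+$ case is symmetric, reversing the roles in the time separation arguments. For part (2), unravelling the definition of $\tau_H(A,B)$ as the supremum of those $r>0$ for which simultaneously every $a\in A$ has $\dist^{-}_{L}(a,B)\geq r$ and every $b\in B$ has $\dist^{+}_{L}(b,A)\geq r$, I would observe that the set of such $r$ is exactly $(0,R]$ where $R=\min\{\inf_{a\in A}\dist^{-}_{L}(a,B),\inf_{b\in B}\dist^{+}_{L}(b,A)\}$; here the infima are attained or at least the condition ``$\geq r$ for all $a$'' is equivalent to ``$\inf\geq r$'', and compactness of $A$, $B$ plus continuity of $\dist^{\pm}_{L}$ (Remark \ref{remarkCont}) guarantees the infima are achieved, so the supremum $R$ itself satisfies the defining condition. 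Part (3) is then just the statement that $R\geq r$ iff both infima are $\geq r$ iff the pointwise inequalities hold for all $a\in A$, $b\in B$ — essentially a restatement of (2), modulo the $r=0$ edge case which is trivially true on the right-hand side and consistent with the convention. Part (4) is the most elementary: if $A\subset B$ then for any $x$, $\dist^{-}_{L}(x,A)=\sup\{\tau(x,c):c\in A\}\leq\sup\{\tau(x,c):c\in B\}=\dist^{-}_{L}(x,B)$ since we are taking a supremum over a larger set, so $\dist^{-}_{L}(x,A)\geq r$ implies $\dist^{-}_{L}(x,B)\geq r$; likewise for the $+$ version.

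The only point requiring genuine care is part (1): unlike the metric Hausdorff setting, where $\overline{U}_{t}(\overline{U}_{s}(A))=\overline{U}_{t+s}(A)$ holds with equality (Lemma \ref{TubularCompact}(2)), here we only claim an inclusion, and it is worth checking whether one should expect equality. I would flag that the reverse inclusion can fail because a point $x$ with $\dist^{-}_{L}(x,A)\geq r_1+r_2$ need not admit an intermediate point $y$ with $\tau(x,y)\geq r_1$ and $\tau(y,a)\geq r_2$ unless the space is geodesic (has maximal curves realizing $\tau$), so the one-sided statement is the correct general form. The main obstacle, then, is not difficulty but making sure each step invokes continuity of $\tau$ (hence of $\dist^{\pm}_{L}$) and compactness of the sets at exactly the right places — specifically, to know the suprema defining $\dist^{\pm}_{L}$ are attained, and to know the ``for all $a\in A$'' quantifiers can be replaced by infima that are themselves attained, which is what makes (2) and (3) clean equalities rather than merely inequalities.
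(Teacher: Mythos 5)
Your proposal is correct and follows essentially the same route as the paper: part (1) via the reverse triangle inequality applied to intermediate points realizing the Lorentzian distances, part (2) by showing the set of admissible $r$ is bounded above by and contains $R=\min\{\inf_{a\in A}\dist^{-}_{L}(a,B),\inf_{b\in B}\dist^{+}_{L}(A,b)\}$, and parts (3) and (4) as immediate consequences of (2) and of monotonicity of suprema. Your additional remark that (1) is genuinely one-sided (unlike the metric identity of Lemma \ref{TubularCompact}(2)) is a correct and worthwhile observation, though not needed for the proof.
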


\begin{proof}
In order to prove (1), consider $x\in \overline{U}^{L}_{r_1}\left(\overline{U}^{L,-}_{r_2}(A)\right)$, then there exist $y\in \overline{U}^{L,-}_{r_2}(A)$ and $z \in A$ such that
\[
\tau(x,y) = \dist^{-}_{L}(x,\overline{U}^{L}_{r_2}(A))\geq r_1
\]
\[
\tau(y,z) = \dist_{L}(y,A)\geq r_2,
\]
therefore $\tau(x,y)>0$ and $\tau(y,z)>0$, implying $x\ll y \ll z$ and hence $\tau(x,z)\geq \tau(x,y)+\tau(y,z)$. Thus
\[
\dist^{-}_{L}(x,A)\geq \tau(x,z)\geq r_1+r_2.
\] 
A similar computation shows that $\overline{U}^{L,+}_{r_1}\left(\overline{U}^{L,+}_{r_2}(A)\right)\subset \overline{U}^{L,+}_{r_1+r_2}(A)$.

For point (2), let us denote
\[
R = \min\left\{\displaystyle{\inf_{a\in A}\dist^{-}_{L}(a,B)},\displaystyle{\inf_{b\in B}\dist^{+}_{L}(A,b)}\right\}.
\]
First we will show  $R\geq \tau_{H}(A,B)$. Let $r>0$ such that $A\subset \overline{U}^{L,-}_{r}(B)$ and $B\subset \overline{U}^{L,+}_{r}(A)$. Then $\forall a\in A$ and $\forall b\in B$, $\dist^{-}_{L}(a,B)\geq r$ and $\dist^{+}_{L}(A,b)\geq r$. Therefore
\[
\displaystyle{\inf_{a\in A} \dist^{-}_{L}(a,B)\geq r} \mbox{\ and \ } \displaystyle{\inf_{b\in B} \dist^{+}_{L}(A,b)\geq r}.
\]
Thus $R\geq r$ which readily implies $R\geq \tau_{H}(A,B)$. Conversely, we want to show  $\tau_H(A,B)\geq R$. If $R=0$ then we are done, so suppose $R>0$. Since
\[
\displaystyle{\inf_{a\in A} \dist^{-}_{L}(a,B)\geq R} \mbox{\ and \ } \displaystyle{\inf_{b\in B} \dist^{+}_{L}(A,b)\geq R}.
\]
Thus $\dist^{-}_{L}(a,B)\geq R$ and $\dist^{+}_{L}(A,b)\geq R$, $\forall a\in A$ and $\forall b\in B$, which means $A\subset \overline{U}^{L,-}_{R}(B)$ and $B\subset \overline{U}^{L,+}_{R}(A)$, hence
\[
R\in \{r>0: A\subset\overline{U}^{L,-}_{r}(B) \ \mbox{and} \ B\subset\overline{U}^{L,+}_{r}(A)\}.
\]
In conclusion, $R\leq \tau_{H}(A,B)$ and consequently $R= \tau_{H}(A,B)$.

Point (3) is a direct consequence of point (2). Finally point (4) is derived from the definition of $\dist^{-}_{L}$ and $\dist^{+}_{L}$.
\end{proof}

\begin{lemma} \label{tauHmayorcero}
If $\tau_{H}(A,B)=0$ then $A{\not\ll_{H}} B$.
\end{lemma}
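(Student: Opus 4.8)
The plan is to argue by contradiction: assume $\tau_H(A,B)=0$ but $A\ll_H B$, and derive that $\tau_H(A,B)$ must actually be strictly positive. By the definition of $\ll_H$, for every $a\in A$ there is some $b_0\in B$ with $a\ll b_0$, and for every $b\in B$ there is some $a_0\in A$ with $a_0\ll b$. Since $\tau$ is a time separation function, $a\ll b_0$ implies $\tau(a,b_0)>0$, hence $\dist^-_L(a,B)\geq\tau(a,b_0)>0$ for every $a\in A$; likewise $\dist^+_L(A,b)>0$ for every $b\in B$. So pointwise the two distance functions are strictly positive on $A$ and $B$ respectively.

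The main obstacle is upgrading these \emph{pointwise} strict positivity statements to a \emph{uniform} lower bound, because $\tau_H(A,B)$ is governed by the infima $\inf_{a\in A}\dist^-_L(a,B)$ and $\inf_{b\in B}\dist^+_L(A,b)$ via Lemma \ref{TauH}(2), and an infimum of strictly positive quantities can still be zero. Here compactness saves the day: by Remark \ref{remarkCont}, $\dist^-_L(\cdot,B)$ is a continuous function on $X$, so it attains its minimum on the compact set $A$, say at some $a_*\in A$; that minimum equals $\dist^-_L(a_*,B)$, which we have just shown is strictly positive. Hence $\inf_{a\in A}\dist^-_L(a,B)=\dist^-_L(a_*,B)>0$. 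Symmetrically, $\dist^+_L(A,\cdot)$ is continuous and attains a strictly positive minimum on the compact set $B$, so $\inf_{b\in B}\dist^+_L(A,b)>0$.

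Combining these via Lemma \ref{TauH}(2), we get
\[
\tau_H(A,B)=\min\left\{\inf_{a\in A}\dist^-_L(a,B),\ \inf_{b\in B}\dist^+_L(A,b)\right\}>0,
\]
contradicting the hypothesis $\tau_H(A,B)=0$. Therefore $A\not\ll_H B$, which is exactly the claim. The only subtlety to be careful about in writing this up is that $\dist^\pm_L$ take values in $[0,\infty]$; but since $\tau$ is continuous (hence finite-valued on the compact sets in question, or at worst the supremum over a compact set is still attained by Remark \ref{remarkCont}), the continuity and attainment-of-minimum argument goes through without change.
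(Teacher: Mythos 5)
Your proof is correct and is essentially the paper's own argument in contrapositive form: the paper starts from $\tau_H(A,B)=0$, uses Lemma \ref{TauH}(2) plus compactness of $A$ (via a minimizing sequence) and continuity of $\dist^{-}_{L}$ to produce a point $x\in A$ with $\tau(x,b)=0$ for all $b\in B$, hence $x\not\ll b$ — the very same implication ``$x\ll b\Rightarrow\tau(x,b)>0$'' that you invoke, read in the other direction. The ingredients (Lemma \ref{TauH}(2), Remark \ref{remarkCont}, compactness forcing the infimum to be attained) coincide, so there is nothing to add.
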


\begin{proof}
If $\tau_{H}(A,B)=0$ then
\[
\displaystyle{\inf_{a\in A}\dist_{L}^-(a,B) = 0} \mbox{ \ or  \ }\displaystyle{\inf_{b\in B}\dist_{L}^+(A,b)=0}.
\]
Suppose $\displaystyle{\inf_{a\in A}\dist^{-}_{L}(a,B) = 0}$. Then there exists a sequence of points $\{x_i\}_{i\in\mathbb{N}}\subset A$ with $\dist^{-}_{L}(x_i,B)\to 0$ when $i\to \infty$. Since $A$ is compact, then there is a convergent subsequence of $\{x_i\}_{i\in\mathbb{N}}$, and without loss of generality assume that $\{x_i\}_{i\in\mathbb{N}}$ converges itself to a point $x\in A$. Note
\[
\dist^{-}_{L}(x,B) = \displaystyle{\lim_{i\to\infty} \dist(x_i,B) =0},
\]
hence $\tau(x,b)=0$ for all $b\in B$ which means $x\not\ll b$ for all $b\in B$, thus $A{\not\ll_{H}} B$. The case $\displaystyle{\inf_{b\in B}\dist^{+}_{L}(A,b) = 0}$ can be handled in an analogous way.
\end{proof}

\begin{theorem}\label{teo:hiper}
The Lorentzian hyperspace $(\mathcal{H}(X),d_H,\ll_{H},\leq_{H},\tau_{H})$ is a Lorentzian pre-length space.
\end{theorem}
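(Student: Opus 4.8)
The plan is to verify each of the three defining conditions of a Lorentzian pre-length space (Definition \ref{defi:lpls}) for the quintuple $(\mathcal{H}(X),d_H,\ll_H,\leq_H,\tau_H)$, using the groundwork laid in Lemmas \ref{TauH} and \ref{tauHmayorcero}. First I would check that $(\mathcal{H}(X),d_H)$ is a metric space; this is the classical fact that the Hausdorff distance on compact subsets of a metric space is itself a metric, and since $X$ here is (at least) a metric space this is immediate. Second, I would verify the order-theoretic axioms for $\ll_H$ and $\leq_H$: that $\leq_H$ is a preorder (reflexivity is clear from reflexivity of $\leq$, and transitivity follows by chasing the definition --- given $A\leq_H B\leq_H C$, for each $a\in A$ pick $b_0\in B$ with $a\leq b_0$, then $c_0\in C$ with $b_0\leq c_0$, and use transitivity of $\leq$; the symmetric condition on elements of $C$ is handled the same way), that $\ll_H$ is transitive by the identical argument using transitivity of $\ll$, and that $\leq_H\subseteq\ll_H$ fails in general so instead one checks $\ll_H\ \subseteq\ \leq_H$, which is immediate since $\ll\ \subseteq\ \leq$.

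The third and substantive condition concerns $\tau_H$: it must be a $[0,\infty]$-valued function satisfying $\tau_H(A,B)>0 \iff A\ll_H B$ in the appropriate sense (more precisely, the reverse triangle inequality requires $\tau_H$ to behave well with respect to $\leq_H$), lower semicontinuity with respect to $d_H$, and the reverse triangle inequality along $\leq_H$-chains. For the reverse triangle inequality, suppose $A\leq_H B\leq_H C$. Using the characterization in Lemma \ref{TauH}(2), $\tau_H(A,C)=\min\{\inf_{a\in A}\dist^-_L(a,C),\ \inf_{c\in C}\dist^+_L(A,c)\}$. For a fixed $a\in A$, one finds $b\in B$ with $a\leq b$ and then uses that $\dist^-_L(a,C)\geq \tau(a,b)+\dist^-_L(b,C)$ via the reverse triangle inequality for $\tau$ on $X$ (choosing the element of $C$ realizing $\dist^-_L(b,C)$ by Remark \ref{remarkCont}), and then bounds below by $\tau_H(A,B)+\tau_H(B,C)$; the dual estimate handles the $\dist^+_L$ term. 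Taking the minimum gives $\tau_H(A,C)\geq\tau_H(A,B)+\tau_H(B,C)$. The containment statement $\overline{U}^{L,\pm}_{r_1}(\overline{U}^{L,\pm}_{r_2}(A))\subset\overline{U}^{L,\pm}_{r_1+r_2}(A)$ from Lemma \ref{TauH}(1) packages exactly this subadditivity, so I would phrase the argument through those tubular inclusions. The positivity criterion $\tau_H(A,B)>0\iff A\ll_H B$ follows from Lemma \ref{tauHmayorcero} (one direction) together with a converse: if $A\ll_H B$ then by compactness of $A$ and $B$ and continuity of $\tau$ (Remark \ref{remarkCont}), $\inf_{a\in A}\dist^-_L(a,B)$ and $\inf_{b\in B}\dist^+_L(A,b)$ are attained and strictly positive, so $\tau_H(A,B)>0$ by Lemma \ref{TauH}(2).

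The main obstacle I anticipate is the lower semicontinuity of $\tau_H$ with respect to $d_H$. Here the strategy is: fix $A_0,B_0\in\mathcal{H}(X)$ with $\tau_H(A_0,B_0)=r_0$ and let $\varepsilon>0$; I must produce $\delta>0$ so that $d_H(A,A_0)<\delta$ and $d_H(B,B_0)<\delta$ force $\tau_H(A,B)>r_0-\varepsilon$. Using Lemma \ref{TauH}(2), it suffices to control $\dist^-_L(a,B)$ from below for each $a\in A$ (and symmetrically $\dist^+_L(A,b)$ for $b\in B$). Given $a\in A$, pick $a_0\in A_0$ with $d(a,a_0)<\delta$; since $\dist^-_L(a_0,B_0)\geq r_0$ there is $c_0\in B_0$ with $\tau(a_0,c_0)\geq r_0$, and then $c\in B$ with $d(c,c_0)<\delta$; lower semicontinuity of $\tau$ on $X$ at $(a_0,c_0)$ gives, for $\delta$ small, $\tau(a,c)>\tau(a_0,c_0)-\varepsilon\geq r_0-\varepsilon$, hence $\dist^-_L(a,B)>r_0-\varepsilon$. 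The delicate point is that the required $\delta$ a priori depends on the base point $(a_0,c_0)$, and $a_0$ ranges over the compact set $A_0$ while $c_0$ ranges over $B_0$; one resolves this by a compactness/covering argument --- the pairs $(a_0,c_0)$ that can arise lie in the compact set $A_0\times B_0$, on which lower semicontinuity can be made uniform in the relevant quantified sense, or alternatively one argues by contradiction with sequences, extracting convergent subsequences from $A$, $A_0$, $B$, $B_0$ using compactness and deriving a contradiction with the continuity of $\tau$ as in the proof of Lemma \ref{tauHmayorcero}. I would favor the sequential contradiction argument as the cleanest route.
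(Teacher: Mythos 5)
Your proposal is correct and follows essentially the same route as the paper: both verify the axioms of Definition \ref{defi:lpls} by combining Lemma \ref{TauH} (the reverse triangle inequality via the tubular inclusions of part (1) and the $\min$--$\inf$ characterization of part (2)) with Lemma \ref{tauHmayorcero}, and your treatment of lower semicontinuity is in fact more careful than the paper's, which simply asserts continuity of $\tau_H$ from Remark \ref{remarkCont} and Lemma \ref{TauH}(2) without addressing the uniformity over $A_0\times B_0$ that you correctly isolate. One small slip: the ``converse'' you supply for the positivity criterion (namely $A\ll_H B\Rightarrow\tau_H(A,B)>0$) is logically the same implication as the contrapositive of Lemma \ref{tauHmayorcero}, so the remaining direction $\tau_H(A,B)>0\Rightarrow A\ll_H B$ is never argued; it follows in one line from Lemma \ref{TauH}(2), since $\tau_H(A,B)=r>0$ forces $\dist^-_L(a,B)\geq r$ for every $a\in A$ and $\dist^+_L(A,b)\geq r$ for every $b\in B$, producing the required chronologically related partners, which is exactly how the paper closes this point.
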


\begin{proof}
By Remark \ref{remarkCont} and Lemma \ref{TauH} (2)  we obtain that $\tau_{H}$ is a continuous function  with respect to $d_H$. It is clear that if $A{\nleq_{H}} B$ then $\tau_{H}(A,B)=0$. Now we focus on the remaining properties of a Lorentzian pre-length space according to Definition \ref{defi:lpls}.

 In fact, by Lemma \ref{tauHmayorcero} we have that if $A\ll_{H} B$ then $\tau_{H}(A,B)>0$ (otherwise $A$ cannot be chronologically related to $B$). On the other hand, if $\tau_{H}(A,B)=r>0$ then for all $a\in A$ and for all $b\in B$ we have $\dist^{-}_{L}(a,B)\geq r$ and $\dist^{+}_{L}(A,b)\geq r$. In conclusion, for all $a\in A$ and $b\in B$ there are two points $a_0\in A$ and $b_0\in B$ ($a_0$ and $b_0$ depending on $a$ and $b$, respectively) with $\tau(a,b_0)\geq r>0$ and $\tau(a_0,b)\geq r>0$, implying $A\ll_{H} B$. Thus $\tau_{H}(A,B)>0$ if and only if $A\ll_{H} B$.

Finally, if $A\leq_{H} B \leq_{H} C$  let $r_1>0$ and $r_2>0$ such that $A\subset \overline{U}^{L,-}_{r_1}(B)$, $B\subset \overline{U}^{L,+}_{r_1}(A)$, and $B\subset \overline{U}^{L,-}_{r_2}(C)$, $C\subset \overline{U}^{L,+}_{r_2}(B)$. Therefore by using point (1) of Lemma \ref{TauH} we have
\[
A \subset \overline{U}^{L,-}_{r_1}(B) \subset \overline{U}^{L,-}_{r_1}\left(\overline{U}^{L,+}_{r_2}(C)\right)\subset \overline{U}^{L,-}_{r_1+r_2}(C),
\]
and similarly $C\subset \overline{U}^{L,+}_{r_1+r_2}(A)$. Thus
\[
r_1+r_2\in\{r>0: A\subset \overline{U}^{L,-}_{r_1+r_2}(C) \mbox{\ and \ } C\subset \overline{U}^{L,+}_{r_1+r_2}(A)\},  
\]
leading to $r_1+r_2\leq \tau_{H}(A,C)$. This last inequality implies 
\[
\tau_{H}(A,B) + \tau_{H}(B,C)\leq \tau_{H}(A,C),
\]
since $r_1$ and $r_2$ were chosen arbitrarily, thus establishing the reverse triangle inequality.

\end{proof}

Let us recall that in a global hyperbolic pre-length space its causal diamonds are compact and its time separation function $\tau$ is finite and continuous. Hence, Proposition \ref{teo:hiper} can be used at once to provide a Lorentzian pre-length structure to the set of causal diamonds of a globally hyperbolic Lorentzian pre-length space. Because of the fundamental role that causal diamonds play in various aspects of mathematical relativity, cuasal theory and Lorentzian geometry, the following is our main result.

\begin{corollary}[The hyperspace of causal diamonds]
Let $(X,d,\ll,\leq,\tau)$ be a globally hyperbolic Lorentzian pre-length space  and define
\[
D(X) = \{J^{+}(x)\cap J^{-}(y): x\leq y\}.
\]
Then $\mathcal{D}(X):=(D(X),d_H,\ll_H,\leq_H,\tau_H)$ is a Lorentzian pre-length space called \textrm{the hyperspace of causal diamonds of} $X$.
\end{corollary}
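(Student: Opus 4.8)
The plan is to deduce the corollary directly from Theorem \ref{teo:hiper} by observing that $\mathcal{D}(X)$ is nothing but the restriction of the Lorentzian hyperspace $\mathcal{H}(X)$ to the subfamily $D(X)$ of causal diamonds. The key preliminary remark is the one recalled just before the statement: since $(X,d,\ll,\leq,\tau)$ is globally hyperbolic, every set $J^{+}(x)\cap J^{-}(y)$ is compact, so $D(X)\subseteq\mathcal{H}(X)$; moreover $\tau$ is finite and continuous, which is exactly the standing hypothesis under which Theorem \ref{teo:hiper} guarantees that $(\mathcal{H}(X),d_H,\ll_H,\leq_H,\tau_H)$ is a Lorentzian pre-length space. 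Thus the whole content of the corollary is to check that this structure passes to the subset $D(X)$.

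I would then go through the axioms of Definition \ref{defi:lpls} for the restricted quintuple $(D(X),d_H|,\ll_H|,\leq_H|,\tau_H|)$. The point to stress is that the definitions of $\ll_H$, $\leq_H$ and $\tau_H$ are \emph{pointwise}: for $A,B\in D(X)$, the assertions $A\ll_H B$, $A\leq_H B$ and the value $\tau_H(A,B)$ depend only on the ambient relations $\ll,\leq$ on $X$ and on the points of $A$ and $B$, hence coincide whether $A,B$ are regarded as elements of $D(X)$ or of $\mathcal{H}(X)$. Consequently $(D(X),d_H|)$ is a metric space as a subspace of $(\mathcal{H}(X),d_H)$; the restriction $\leq_H|$ of a preorder is a preorder and $\ll_H|$ is transitive and contains $\leq_H|$; the implication $A\nleq_H B\Rightarrow\tau_H(A,B)=0$ and the reverse triangle inequality for $\tau_H$ hold on $D(X)$ because they hold on all of $\mathcal{H}(X)$; and lower semicontinuity of $\tau_H|$ with respect to $d_H|$ is inherited from the continuity of $\tau_H$ on $\mathcal{H}(X)\times\mathcal{H}(X)$ (established via Lemma \ref{TauH}(2) and Remark \ref{remarkCont}), since the restriction of a lower semicontinuous map to a subspace is again lower semicontinuous.

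The only matter requiring a word of care is that $\tau_H$ be well defined on $D(X)\times D(X)$: by Lemma \ref{TauH}(2) one has $\tau_{H}(A,B)=\min\{\inf_{a\in A}\dist^{-}_{L}(a,B),\inf_{b\in B}\dist^{+}_{L}(b,A)\}$, and since $\tau$ is finite and continuous and $A,B$ are compact, each supremum defining $\dist^{\pm}_{L}$ is attained and finite, so no recourse to the $\infty$-convention of the footnote is needed. I expect no genuine obstacle here: once the inclusion $D(X)\subseteq\mathcal{H}(X)$ is recognized, the statement is essentially immediate, the substance lying entirely in Theorem \ref{teo:hiper} together with the recalled facts that global hyperbolicity yields compact causal diamonds and a finite, continuous time separation function.
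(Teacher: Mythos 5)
Your proposal is correct and follows essentially the same route as the paper, which likewise obtains the corollary by noting that global hyperbolicity makes every causal diamond compact and $\tau$ finite and continuous, so that $D(X)\subseteq\mathcal{H}(X)$ and the structure of Theorem \ref{teo:hiper} restricts at once. Your explicit check that the restricted relations and $\tau_H$ still satisfy Definition \ref{defi:lpls} merely spells out what the paper leaves implicit.
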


\begin{example}
 In \cite{StrongGeodesic} the authors show that the hyperspace of closed intervals $(\Sigma(\mathbb{R}),d_H)$ is isometric to the taxicab metric space $(\mathbb{R}\times \mathbb{R}_{\geq 0},d_T)$ where  $d_T((x,t),(y,t))=\vert x-y\vert+\vert t-s\vert $, and the isometry $f:\Sigma(\mathbb{R}) \to \mathbb{R}_{\geq 0}\times \mathbb{R}$ is given by 
\[ 
f([x-t,x+t])=(t,x).
\] 
Further notice that the hyperspaces of causal diamonds $\mathcal{D}(\mathbb{R}^{1}_{1})$ and $(\Sigma(\mathbb{R}),d_H)$ agree as point sets. Thus, the question as if there exists a Lorentzian analog of the isometry $f:\Sigma(\mathbb{R}) \to \mathbb{R}_{\geq 0}\times \mathbb{R}$ depicted above. We answer this question in the affirmative as follows. In order to find a explicit description of $\tau_H([x-t,x+t],[y-s,y+s])$ notice that whenever $[x-t,x+t]\leq_{H} [y-s,y+s]$,  then $x\leq y$, $y-x\geq t-s$ and $y-x\geq s-t$, which in turn implies that
\[
\min\{y-x-(t-s),y-x-(s-t)\}= y-x-\vert t-s\vert .
\]
Now fix $a\in [x-t,x+t]$, then
\[
\begin{array}{rcl}
\dist^{-}_{L}(a,[y-s,y+s]) &=& \sup\{\tau_{\mathbb{R}}(a,c):c\in [y-s,y+s]\} \\ 
&=& \sup\{c-a: c\in [y-s,y+s]\}\\ &=& y+s-a.
\end{array}
\]
Therefore
\[
\displaystyle{\inf_{a\in [x-t,x+t]} \dist^{-}_{L}(a,[y-s,y+s]) = y+s-(x+t)= y-x -(t-s)},
\]
and similarly 
\[
\displaystyle{\inf_{b\in [y-s,y+s]} \dist^{+}_{L}([x-t,x+t],b) = y-s-(x-t) = y-x -(s-t)},
\]
leading to
\[
\begin{array}{rcl}
\tau_H([x-t,x+t],[y-s,y+s]) &=& \min\{y-x-(t-s),y-x-(s-t)\}\\ &=& y-x-\vert t-s\vert . 
\end{array}
\]
In conclusion, the function $f$ is a distance preserving map from the hyperspace of causal diamonds $(\mathcal{D}(\mathbb{R}^{1}_{1}), d_H, \ll_H,\leq_H, \tau_H)$ to the Lorentzian taxicab semi-space $(\mathbb{R}^{2,1}_{T,\geq 0}, d_T,\ll_T, \leq_T,\tau_T)$ described in \cite{solismontes}. 
\end{example}


\section{Applications}\label{sec:apps}

We move on to study in further detail the hyperspace of causal diamonds $\mathcal{D}(\mathbb{R}_1^1\times X)$, where $(X,d)$ is a complete geodesic length space.  As the following results show, in this scenario we have in fact a geodesic and globally hyperbolic Lorentzian length space. We first show that such hyperspace is geodesic and provide explicit parametrizations of maximal geodesic segments connecting causally related diamonds. A few lemmas are in order. First we describe the relations $\ll_H$, $\leq_H$ in explicit terms for Lorentzian taxicab spaces.

\begin{lemma}\label{DiamondsRelated}
Let us consider the diamonds $D_1=J^{+}(t_1,x_1)\cap J^{-}(t_2,x_2)$ and $D_2=J^{+}(a_1,b_1)\cap J^{-}(a_2,b_2)$ for $(t_1,x_1)\leq (t_2,x_2)$ and $(a_1,b_1)\leq (a_2,b_2)$. Then $D_1\leq_H D_2$ if and only if $(t_1,x_1)\leq (a_1,b_1)$ and $(t_2,x_2)\leq (a_2,b_2)$.
\end{lemma}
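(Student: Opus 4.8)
The plan is to unwind the definition of $\leq_H$ applied to the two diamonds $D_1 = J^+(t_1,x_1)\cap J^-(t_2,x_2)$ and $D_2 = J^+(a_1,b_1)\cap J^-(a_2,b_2)$ inside the globally hyperbolic Lorentzian length space $\mathbb{R}^1_1\times_T X$, using throughout that in this space causal relations have the explicit form $(t,x)\leq_T(s,y)\iff s-t\geq d(x,y)$. The key structural fact I would isolate first is that the "corners" $(t_1,x_1)$ and $(t_2,x_2)$ are respectively the unique $\leq_T$-minimum and $\leq_T$-maximum of $D_1$: indeed every point $(u,z)\in D_1$ satisfies $(t_1,x_1)\leq_T(u,z)\leq_T(t_2,x_2)$ by definition, and conversely these corner points lie in $D_1$. (To see $(t_1,x_1)\in D_1$ one needs $(t_1,x_1)\leq_T(t_2,x_2)$, which is the hypothesis; similarly for the top corner.) The same holds for $D_2$ with corners $(a_1,b_1)$, $(a_2,b_2)$.

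For the "if" direction, suppose $(t_1,x_1)\leq_T(a_1,b_1)$ and $(t_2,x_2)\leq_T(a_2,b_2)$. Given $p\in D_1$, I want $b_0\in D_2$ with $p\leq_T b_0$. The natural candidate is obtained by pushing $p$ up the causal direction into $D_2$; concretely, I would argue that $p\leq_T (t_2,x_2)\leq_T(a_2,b_2)$, so the top corner $(a_2,b_2)\in D_2$ works as $b_0$ by transitivity of $\leq_T$. Symmetrically, given $q\in D_2$, the bottom corner $(t_1,x_1)\in D_1$ satisfies $(t_1,x_1)\leq_T(a_1,b_1)\leq_T q$, which supplies the required $a_0\in D_1$. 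Hence $D_1\leq_H D_2$.

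For the "only if" direction, assume $D_1\leq_H D_2$. Apply the first clause of the definition of $\leq_H$ to the point $a=(t_2,x_2)\in D_1$: there is $b_0\in D_2$ with $(t_2,x_2)\leq_T b_0$, and since $b_0\leq_T(a_2,b_2)$ we get $(t_2,x_2)\leq_T(a_2,b_2)$ by transitivity. For the other inequality, apply the second clause of $\leq_H$ to the point $b=(a_1,b_1)\in D_2$: there is $a_0\in D_1$ with $a_0\leq_T(a_1,b_1)$, and since $(t_1,x_1)\leq_T a_0$ we conclude $(t_1,x_1)\leq_T(a_1,b_1)$. This gives both required inequalities and closes the proof.

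The only genuine subtlety—the step I expect to need the most care—is verifying that the corner points actually belong to their diamonds, i.e. that $(t_i,x_i)\in D_1$ and $(a_i,b_i)\in D_2$; this uses the standing hypotheses $(t_1,x_1)\leq_T(t_2,x_2)$ and $(a_1,b_1)\leq_T(a_2,b_2)$ together with reflexivity of $\leq_T$, and it is exactly what makes the "push to the corner" trick legitimate. Everything else is a routine application of transitivity of $\leq_T$ and the definition of $\leq_H$; no properties of $X$ beyond it being a metric space are needed here, and the explicit taxicab formula for $\leq_T$ is not even strictly required, only that $\leq_T$ is a preorder with the diamonds defined via $J^\pm$.
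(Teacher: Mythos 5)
Your proposal is correct and follows essentially the same argument as the paper: both directions are handled by applying the definition of $\leq_H$ to the corner points $(t_2,x_2)\in D_1$ and $(a_1,b_1)\in D_2$ and then using transitivity of $\leq_T$ through the corners. Your explicit remark that the corners belong to their diamonds (via reflexivity and the standing hypotheses) is a detail the paper uses silently, so nothing is missing.
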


\begin{proof}
If $D_1\leq_H D_2$, then in virtue of $(t_2,x_2)\in D_1$ there is $(a,b)\in D_2$ with $(t_2,x_2)\leq (a,b) \leq_T (a_2,b_2)$, thus $(t_2,x_2)\leq_T (a_2,b_2)$. Similarly, for $(a_1,b_1)\in D_2$ there is $(t,x)\in D_1$ with $(t_1,x_1)\leq_T (t,x) \leq (a_1,b_1)$, hence $(t_1,x_1)\leq_T (a_1,b_1)$. Now suppose that $(t_1,x_1)\leq_T (a_1,b_1)$ and $(t_2,x_2)\leq_T (a_2,b_2)$ and fix $(t,x)\in D_1$, $(a,b)\in D_2$. Then $(t,x)\leq_T (t_2,x_2)\leq (a_2,b_2)$ and $(t_1,x_1)\leq_T (a_1,b_1)\leq_T (a,b)$, therefore $(t,x)\leq_T (a_2,b_2)$ and $(t_1,x_1)\leq_T (a,b)$. Thus $D_1\leq_H D_2$. 
\end{proof}

\begin{proposition}\label{HausdorffDiamonds}
For every two diamonds $D_1=J^{+}(t_1,x_1)\cap J^{-}(s_1,y_1)$ and $D_2=J^{+}(t_2,x_2)\cap J^{-}(s_2,y_2)$ with $(t_1,x_1)\leq (s_1,y_1)$ and $(t_2,x_2)\leq (s_2,y_2)$, we have
\[
d_H^{T}(D_1,D_2)=\max\{\vert t_2-t_1\vert +d(x_1,x_2),\vert s_2-s_1\vert +d(y_1,y_2)\},
\]
where $d_H^T$ is the Hausdorff distance induced by $d_T$ in $\mathbb{R}_{1}^{1}\times_T X$.
\end{proposition}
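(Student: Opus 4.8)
The goal is to compute $d_H^T(D_1,D_2)$ explicitly, where each diamond is the causal diamond in $\mathbb{R}^1_1\times_T X$ determined by its two "tips." The natural strategy is to use the characterization of Hausdorff distance in Lemma \ref{TubularCompact}(4), namely $d_H^T(D_1,D_2)=\max\{\sup_{p\in D_1}\dist_T(p,D_2),\sup_{q\in D_2}\dist_T(q,D_1)\}$, and to reduce the supremum over all points of a diamond to a computation involving only the tips. First I would unwind what a diamond looks like: $(u,z)\in J^+(t_1,x_1)\cap J^-(s_1,y_1)$ exactly when $u-t_1\ge d(x_1,z)$, $s_1-u\ge d(z,y_1)$, and $t_1\le u\le s_1$; in particular $z$ ranges over (a subset of) the metric ball region between $x_1$ and $y_1$ and $u$ is constrained accordingly. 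The key geometric observation is that the "future tip" $(s_1,y_1)$ and the "past tip" $(t_1,x_1)$ are the extreme points that control how far $D_1$ can stick out of a tubular neighborhood of $D_2$.

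**Main steps.** (1) Show $\le$-monotonicity of diamonds under dilation: for the tubular neighborhood $\overline{U}^T_r$ with respect to $d_T$, one has a containment like $D_1\subset \overline{U}^T_r(D_2)$ controlled by the two tip-distances $|t_2-t_1|+d(x_1,x_2)$ and $|s_2-s_1|+d(y_1,y_2)$. The heart of this is: given $(u,z)\in D_1$, produce an explicit nearby point of $D_2$. The idea is to "push" $z$ toward the segment structure of $D_2$ using a geodesic in $(X,d)$ (which exists since $X$ is a complete geodesic length space), and to adjust the time coordinate $u$ proportionally, exactly as in the parametrizations of Lemma \ref{segmentsRxX}. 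One shows $\dist_T((u,z),D_2)\le \max\{|t_2-t_1|+d(x_1,x_2),\,|s_2-s_1|+d(y_1,y_2)\}=:R$. (2) By symmetry (swapping the roles of $D_1$ and $D_2$) get the same bound for $\dist_T(q,D_1)$, $q\in D_2$; hence $d_H^T(D_1,D_2)\le R$. (3) For the reverse inequality, exhibit points realizing the bound: the past tip $(t_1,x_1)\in D_1$ satisfies $\dist_T((t_1,x_1),D_2)\ge |t_2-t_1|+d(x_1,x_2)$ because every $(u,z)\in D_2$ has $u\ge t_2$ and $d(z,x_2)\le$ something forcing $d(x_1,z)\ge d(x_1,x_2)-d(x_2,z)$ — one must check that the minimum of $|t_1-u|+d(x_1,z)$ over $D_2$ is attained precisely at the tip $(t_2,x_2)$. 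Likewise the future tip $(s_1,y_1)$ forces $\dist_T((s_1,y_1),D_2)\ge |s_2-s_1|+d(y_1,y_2)$. Taking the max gives $d_H^T(D_1,D_2)\ge R$.

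**The obstacle.** The delicate point is step (3), proving that $\dist_T((t_1,x_1),D_2)$ is exactly $|t_2-t_1|+d(x_1,x_2)$ rather than something smaller — i.e., that no interior point of $D_2$ is closer to the past tip of $D_1$ than $D_2$'s own past tip. This requires an argument that, along any causal chain from $(t_2,x_2)$ into $D_2$, both the time increment and the $d$-distance from $x_1$ can only behave monotonically in the "wrong" direction, so moving away from the tip never decreases $d_T$-distance to $(t_1,x_1)$. Concretely: for $(u,z)\in D_2$ one has $u-t_2\ge d(x_2,z)$, so $|t_1-u|+d(x_1,z)\ge (u-t_2)+(t_2-t_1)+d(x_1,x_2)-d(x_2,z)\ge (t_2-t_1)+d(x_1,x_2)$ when $t_1\le t_2$, with an analogous computation when $t_1>t_2$ using instead that $s_2-u\ge d(z,y_2)$ is not directly helpful — here one must be careful about which tip of $D_2$ to compare against, and a short case analysis on the sign of $t_1-t_2$ and on whether $x_1$ lies "below" or "above" $D_2$ resolves it. Step (1) is then largely a matter of organizing the same inequalities in reverse, together with the explicit geodesic-reparametrization trick already used in Lemma \ref{segmentsRxX}; I expect it to be routine once step (3)'s tip-extremality is established.
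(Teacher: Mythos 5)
Your overall strategy (bound $\sup_{p\in D_1}\dist_T(p,D_2)$ and $\sup_{q\in D_2}\dist_T(q,D_1)$ separately and show the tips are extremal) can be made to work, but the key claim in your step (3) is false as stated, and the fix is not the ``short case analysis'' you gesture at. The inequality $\dist_T((t_1,x_1),D_2)\ge |t_2-t_1|+d(x_1,x_2)$ fails whenever $t_1>t_2$: take $X=\mathbb{R}$, $D_2=J^+(0,0)\cap J^-(10,0)$ and $(t_1,x_1)=(2,0)$; then $(t_1,x_1)\in D_2$, so $\dist_T((t_1,x_1),D_2)=0$ while $|t_2-t_1|+d(x_1,x_2)=2$. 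Your own computation only goes through when $t_1\le t_2$, and you correctly note that the analogous estimate ``is not directly helpful'' in the other case --- that is because there the quantity $|t_2-t_1|+d(x_1,x_2)$ is not witnessed by $(t_1,x_1)$ against $D_2$ at all, but by the \emph{other} half of the Hausdorff maximum: one must instead prove $\dist_T((t_2,x_2),D_1)\ge (t_1-t_2)+d(x_1,x_2)$, using $u-t_1\ge d(x_1,z)$ for $(u,z)\in D_1$. The missing idea is that each of the two terms in the asserted maximum is realized by the past (resp.\ future) tip of whichever diamond sticks out, measured against the other diamond, and which diamond that is depends on the sign of $t_2-t_1$ (resp.\ $s_2-s_1$). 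Once this is articulated the lower bound does follow, and your step (1) upper bound is fine in principle (it amounts to the containment $J^{+}(t_2-R,x_2)\cap J^{-}(s_2+R,y_2)\subset\overline{U}_{R}^{T}(D_2)$, proved by exactly the geodesic construction you describe).

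For comparison, the paper avoids the case analysis entirely by working with closed tubular neighborhoods: it uses the identification $\overline{U}_{r}^{T}(D_2)=J^{+}(t_2-r,x_2)\cap J^{-}(s_2+r,y_2)$ (the lemma stated immediately after the proposition), so $D_1\subset\overline{U}_{r}^{T}(D_2)$ forces $(t_2-r,x_2)\leq_T(t_1,x_1)$ and $(s_1,y_1)\leq_T(s_2+r,y_2)$, i.e.\ $r\ge (t_2-t_1)+d(x_1,x_2)$ and $r\ge(s_1-s_2)+d(y_1,y_2)$, while the symmetric containment $D_2\subset\overline{U}_{r}^{T}(D_1)$ gives the same inequalities with opposite signs; together these yield $r\ge\max\{|t_2-t_1|+d(x_1,x_2),\,|s_2-s_1|+d(y_1,y_2)\}$. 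The two containments automatically supply both signs, which is precisely what your single-tip argument is missing.
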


The next result is essential in various instances.

\begin{proof}
Let $r\geq 0$ with $D_1\subset\overline{U}_{r}^{T}(D_2) = J^{+}(t_2-r,x_2)\cap J^{-}(s_2+r,y_2)$ and $D_2\subset\overline{U}_{r}^{T}(D_1) = J^{+}(t_1-r,x_1)\cap J^{-}(s_1+r,y_1)$, then $(t_2-r,x_2)\leq (t_1,x_1)\leq (s_1,y_1)\leq (s_2+r,y_2)$ and thus
\[
t_1-(t_2-r)\geq d(x_1,x_2), s_2+r-s_1\geq d(y_1,y_2).
\]
By combining these last two inequalities we obtain
\[
r\geq (t_2-t_1) d(x_1,x_2) \mbox{ \ and \ } r\geq (s_1-s_2) + d(y_1,y_2).
\]
Similarly, when $D_2\subset\overline{U}_{r}^{T}(D_1) = J^{+}(t_1-r,x_1)\cap J^{-}(s_1+r,y_1)$ we obtain
\[
r\geq (t_1-t_2) + d(x_1,x_2) \mbox{ \ and \ } r\geq (s_2-s_1) + d(y_1,y_2).
\]
Therefore
\[
r\geq \vert t_1-t_2\vert + d(x_1,x_2) \mbox{ \ and \ } r\geq \vert s_1-s_2\vert + d(y_1,y_2),
\]
which leads to 
\[
r\geq \max\{\vert t_2-t_1\vert +d(x_1,x_2),\vert s_2-s_1\vert +d(y_1,y_2)\}.
\]
By the definition of $d_H^{T}$ we conclude 
\[
d_H^{T}(D_1,D_2)\geq \max\{\vert t_2-t_1\vert +d(x_1,x_2),\vert s_2-s_1\vert +d(y_1,y_2)\}. 
\]
In order to prove the inequality $d_H^{T}(D_1,D_2)\leq \max\{\vert t_2-t_1\vert +d(x_1,x_2),\vert s_2-s_1\vert +d(y_1,y_2)\}$ just note that 
$\delta=\max\{\vert t_2-t_1\vert +d(x_1,x_2),\vert s_2-s_1\vert +d(y_1,y_2)\}$ satisfies $D_1\subset \overline{U}_{\delta}^{T}(D_2)$ and $D_2\subset \overline{U}_{\delta}^{T}(D_1)$. 
\end{proof}

We move on to prove that $\mathcal{D}(\mathbb{R}_1^1\times X)$ is a geodesic Lorentzian pre-Length space and provide explicit $\tau$-parametrizations for maximal curves. We follow the same approach as in Lemma \ref{prop:Hgeo}. Thus we need to show first that the $d_T$ tubular neighborhood and that the intersection of two causal diamonds  are causal diamonds themselves. The following lemmas deal with these issues.

\begin{lemma}
Let $(t,x)\leq_T (s,y)$ be two points in $\mathbb{R}_{1}^{1}\times_T X$ and $D=J^{+}(t,x)\cap J^{-}(s,y)$. Then for every $r\geq 0$ we have
\[
\overline{U}_{r}^{T}(D) = J^{+}(t-r,x)\cap J^{-}(s+r,y),
\]
where $\overline{U}_{r}^{T}=\{(u,z): {\dist}_T((u,z),D)\leq r\}$ is the closed $r-$tubular neighborhood with respect to $d_T$.
\end{lemma}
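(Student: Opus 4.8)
The plan is to prove set equality by double inclusion, using the explicit description of tubular neighborhoods and causal diamonds in $\mathbb{R}^1_1\times_T X$ together with the fact that $(X,d)$ is a geodesic length space. Recall that a point $(u,z)$ lies in $J^+(t,x)\cap J^-(s,y)$ precisely when $u-t\geq d(x,z)$ and $s-u\geq d(z,y)$; in particular such a diamond is nonempty iff $(t,x)\leq_T(s,y)$, i.e. $s-t\geq d(x,y)$.

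First I would prove $J^{+}(t-r,x)\cap J^{-}(s+r,y)\subset \overline{U}_{r}^{T}(D)$. Take $(u,z)$ with $u-(t-r)\geq d(x,z)$ and $(s+r)-u\geq d(z,y)$. The idea is to exhibit an explicit point of $D$ within $d_T$-distance $r$ of $(u,z)$. I would split into cases according to whether $(u,z)$ already satisfies the unshifted inequalities. If $u\geq t+d(x,z)$ fails, i.e. $u< t+d(x,z)$, then $d(x,z)-(u-t)\in(0,r]$, and I would push along a $d$-geodesic $\gamma$ from $z$ toward $x$ by this amount, landing at a point $z'$ with $d(x,z')=u-t$ (so $(t,x)\le_T(u,z')$) and $d(z',z)=d(x,z)-(u-t)\le r$; one then checks $(u,z')\le_T(s,y)$ using the triangle inequality and the hypothesis, and $d_T((u,z),(u,z'))=d(z,z')\le r$. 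The symmetric case (where $s-u\geq d(z,y)$ fails) is handled by pushing $z$ toward $y$ instead, possibly combined with a shift in the time coordinate; the remaining case is when $(u,z)\in D$ already, which is trivial. Here the completeness/geodesic hypothesis on $(X,d)$ is what lets me realize these intermediate points, and I would also invoke that closed balls are compact only if needed for nonemptiness.

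Conversely, for $\overline{U}_{r}^{T}(D)\subset J^{+}(t-r,x)\cap J^{-}(s+r,y)$, take $(u,z)$ with $\dist_T((u,z),D)\leq r$, so there is $(u_0,z_0)\in D$ with $|u-u_0|+d(z,z_0)\leq r$. From $(u_0,z_0)\in D$ we have $u_0-t\geq d(x,z_0)$ and $s-u_0\geq d(z_0,y)$. Then
\[
u-(t-r) = (u_0-t) + (u-u_0) + r \geq d(x,z_0) + (u-u_0) + r \geq d(x,z_0) + |u-u_0| + d(z,z_0) \geq d(x,z),
\]
where the last step is the triangle inequality $d(x,z)\le d(x,z_0)+d(z_0,z)$ together with $|u-u_0|\ge 0$; wait, one must be careful that $(u-u_0)+r \ge |u-u_0| + d(z,z_0)$, which holds because $|u-u_0|+d(z,z_0)\le r$ gives $r - |u-u_0| \ge d(z,z_0)$ and hence $(u-u_0)+r \ge (u-u_0) + |u-u_0| + d(z,z_0) \ge d(z,z_0)\ge 0$... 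I would instead argue directly: $u-(t-r) = (u_0 - t) + (u - u_0 + r) \ge d(x,z_0) + d(z_0,z) \ge d(x,z)$, using $u-u_0+r \ge |u-u_0| + (r-|u-u_0|) \ge d(z_0,z)$ since $r-|u-u_0|\ge d(z,z_0)$. The inequality $(s+r)-u\geq d(z,y)$ follows symmetrically. Hence $(u,z)\in J^{+}(t-r,x)\cap J^{-}(s+r,y)$.

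The main obstacle I anticipate is the first inclusion: unlike the metric Hausdorff setting of Lemma~\ref{TubularCompact}, here the "distance" shaping the diamond is the taxicab combination of a time coordinate and the metric $d$, so the nearest point of $D$ to a given $(u,z)$ need not be obtained by a single geodesic projection — one may have to adjust both the time coordinate and the spatial coordinate simultaneously, and the case analysis must be organized carefully to cover $(u,z)$ lying "past the future tip", "before the past tip", or "off to the side" of $D$. Once the correct explicit witness point is written down in each case, the verification reduces to the reverse triangle inequality for $\tau_{\mathbb{R}}$ (trivial here) and the triangle inequality for $d$.
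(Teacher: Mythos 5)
Your inclusion $\overline{U}_{r}^{T}(D)\subset J^{+}(t-r,x)\cap J^{-}(s+r,y)$ is correct and is essentially the paper's argument (pick $(u_0,z_0)\in D$ with $|u-u_0|+d(z,z_0)\le r$ and combine the defining inequalities of $D$ with the triangle inequality). The gap is in the reverse inclusion, precisely at the step you describe as ``one then checks $(u,z')\le_T(s,y)$ using the triangle inequality and the hypothesis'': that check is not available, because the claim is false in general. Take $X=\mathbb{R}^2$, $(t,x)=(0,(0,0))$, $(s,y)=(2,(1,0))$, $r=\tfrac{3}{2}$ and $(u,z)=(1,(0,2))$. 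Then $(u,z)\in J^{+}(t-r,x)\cap J^{-}(s+r,y)$ since $2.5\ge 2=d(x,z)$ and $2.5\ge\sqrt{5}=d(z,y)$, and your first case applies since $u-t=1<2=d(x,z)$. Your witness is $z'=(0,1)$, and indeed $d(z,z')=1\le r$ and $d(x,z')=u-t$; but $s-u=1<\sqrt{2}=d(z',y)$, so $(u,z')\notin D$ and it gives no bound on $\dist_T((u,z),D)$. The symmetric push also fails: the point at distance $s-u=1$ from $y$ on the segment $[y,z]$ is $\left(1-\tfrac{1}{\sqrt{5}},\tfrac{2}{\sqrt{5}}\right)$, whose distance to $x$ is $\sqrt{2-2/\sqrt{5}}\approx 1.05>1=u-t$. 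The lemma itself survives here --- the point $\left(1,\left(\tfrac{1}{2},\tfrac{\sqrt{3}}{2}\right)\right)$ lies in $D$ and is within $d_T$-distance $\approx 1.24<\tfrac{3}{2}$ of $(u,z)$ --- but the correct witness moves $z$ toward the ``rim'' of the diamond, not toward $x$ or toward $y$.

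The root of the problem is that your cases (past inequality fails / future inequality fails / $(u,z)\in D$) are not mutually exclusive, and in the overlap --- where $u-t<d(x,z)$ and $s-u<d(z,y)$ hold simultaneously, i.e.\ $(u,z)$ sits ``off to the side'' of $D$ --- neither single geodesic push lands in $D$. You explicitly flag this as the anticipated obstacle in your closing paragraph, but flagging it is not resolving it: a genuinely different construction of the witness is required in that case, and the phrase ``possibly combined with a shift in the time coordinate'' does not supply one. For what it is worth, the paper's own proof of this direction has the same soft spot: after constructing $z_0=\gamma(s-u)$ on the geodesic from $y$ to $z$ it asserts $(u,z_0)\in D$ having verified only $(u,z_0)\le_T(s,y)$, and the example above shows that the omitted inequality $(t,x)\le_T(u,z_0)$ can fail. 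So you have correctly located the crux of the lemma, but neither your sketch nor the case analysis as organized closes it.
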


\begin{proof}
First of all, if $(u,z)\in \overline{U}_{r}^{T}(D)$ then due to the compactness of $D$ there exists $(u_0,z_0)\in D$ with 
\[
u_0-u + d(z,z_0) \leq \vert u-u_0\vert + d(z,z_0) \leq d_T((u,z),(u_0,z_0))\leq r. 
\]
We also have $s-u_0\geq d(y,z_0)$ and $u_0-t\geq d(x,z_0)$ since $(t,x)\leq_T (u_0,z_0)\leq_T (s,y)$. Thus
\[
d(x,z)\leq d(x,z_0)+ d(z_0,z)\leq (u_0-t) + (r+u-u_0) = u-(t-r),
\]
implying $(t,x)\leq_T (u,z)$. Similarly $(u,z)\leq_T (s,y)$ and in conclusion $(u,z)\in J^{+}(t-r,x)\cap J^{-}(s+r,y)$. Conversely, suppose $(u,z)\in J^{+}(t-r,x)\cap J^{-}(s+r,y)$. If $(u,z)\in D$ then $(u,z)\in \overline{U}_{r}^{T}(D)$, otherwise $(t,x){\nleq}_T (u,z)$ or $(u,z){\nleq}_T(s,y)$.  Assume $(u,z){\nleq}_T(s,y)$ then $s-u<d(y,z)$. If $0<s-u$, let $\gamma:[0,d(y,z)]\to X$ be a geodesic segment connecting $y$ with $z$ (that is, $\gamma(0)=y$ and $\gamma(d(y,z))=z$) and take $z_0=\gamma(s-u)$, thus 
\[
d(y,z_0)=d(\gamma(0),\gamma(s-u))= s-u,
\]
therefore $(u,z_0)\leq_T (s,y)$ and $(u,z_0)\in D$. Moreover,
\[
d_T((u,z),(u,z_0)) = d(z,z_0) = d(y,z)-(s-u)\leq r,
\]
since $(u,z)\leq_T (s+r,y)$. Hence $(u,z)\in \overline{U}_{r}^{T}(D)$. Alternatively, if $s-u<0$ then 
\[
d_T((u,z),(s,y)) = u-s + d(z,y) \leq (u-s) + (s+r-u) = r,
\]
again by $(u,z)\leq_T (s+r,y)$. So in this case we have $(u,z)\in \overline{U}_{r}^{T}(D)$ as well. A similar argument proves that if  $(t,x){\nleq}_T(u,z)$ then $(u,z)\in \overline{U}_{r}^{T}(D)$, thus finishing the proof. 
\end{proof}

\begin{lemma}\label{CausalCurves}
Let $D_1,D_2\in \mathcal{D}(\mathbb{R}^{1}_{1}\times X)$, $r=d_H(D_1,D_2)>0$ and $D_1\leq_H D_2$. Let us denote $D_1=J^{+}(t_1,x_1)\cap J^{-}(s_1,y_1)$ and $D_2=J^{+}(t_2,x_2)\cap J^{-}(s_2,y_2)$. If $u\in [0,r]$ then 
\[
\overline{U}_{u}(D_1)\cap \overline{U}_{r-u}(D_2) = J^{+}(t_2+u-r,x_2)\cap J^{-}(s_1+u,y_1).
\]
\end{lemma}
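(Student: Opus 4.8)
The plan is to compute directly from the previous lemma, which identifies the metric $r$-tubular neighbourhood of a diamond with a diamond. With $D_1=J^{+}(t_1,x_1)\cap J^{-}(s_1,y_1)$ and $D_2=J^{+}(t_2,x_2)\cap J^{-}(s_2,y_2)$ it gives
\[
\overline{U}_{u}(D_1)=J^{+}(t_1-u,x_1)\cap J^{-}(s_1+u,y_1),\qquad \overline{U}_{r-u}(D_2)=J^{+}(t_2-r+u,x_2)\cap J^{-}(s_2+r-u,y_2),
\]
so that $\overline{U}_{u}(D_1)\cap\overline{U}_{r-u}(D_2)$ is the intersection of the four shifted cones $J^{+}(t_1-u,x_1)$, $J^{+}(t_2-r+u,x_2)$, $J^{-}(s_1+u,y_1)$, $J^{-}(s_2+r-u,y_2)$. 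The claimed identity is then equivalent to the assertion that the two ``outer'' cones are redundant, namely $J^{+}(t_2-r+u,x_2)\subseteq J^{+}(t_1-u,x_1)$ and $J^{-}(s_1+u,y_1)\subseteq J^{-}(s_2+r-u,y_2)$; by transitivity and reflexivity of $\leq_{T}$ these are in turn equivalent to the causal relations $(t_1-u,x_1)\leq_{T}(t_2-r+u,x_2)$ and $(s_1+u,y_1)\leq_{T}(s_2+r-u,y_2)$, i.e. to the scalar inequalities
\[
(t_2-t_1)-r+2u\ \ge\ d(x_1,x_2)\qquad\text{and}\qquad (s_2-s_1)+r-2u\ \ge\ d(y_1,y_2).
\]

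To prove these I would feed in both structural hypotheses: Lemma \ref{DiamondsRelated} converts $D_1\leq_{H}D_2$ into $(t_1,x_1)\leq_{T}(t_2,x_2)$ and $(s_1,y_1)\leq_{T}(s_2,y_2)$, hence $d(x_1,x_2)\le t_2-t_1$ and $d(y_1,y_2)\le s_2-s_1$, while Proposition \ref{HausdorffDiamonds} supplies the exact value $r=\max\{(t_2-t_1)+d(x_1,x_2),\ (s_2-s_1)+d(y_1,y_2)\}$. Substituting this value of $r$ and splitting according to which of the two terms realises the maximum should reduce each displayed inequality to a check on $u$ using the bounds already in force; alternatively one can argue pointwise, bounding $d(x_1,z)$ by $d(x_1,x_2)+d(x_2,z)$ (and $d(z,y_2)$ by $d(z,y_1)+d(y_1,y_2)$) and invoking the triangle inequality in $X$ together with the value of $r$.

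The step I expect to be the real obstacle is precisely this matching of the Hausdorff distance against the four apices: the two scalar inequalities have to hold for \emph{every} $u\in[0,r]$, so a one-sided estimate on $r$ does not suffice --- the precise formula for $d_{H}^{T}$ on diamonds from Proposition \ref{HausdorffDiamonds} is used in an essential way, and the case analysis on which term of the maximum is active cannot be avoided. The rest is routine cone algebra. Once established, the identity shows that $u\mapsto\overline{U}_{u}(D_1)\cap\overline{U}_{r-u}(D_2)$ is a path taking values in $\mathcal{D}(\mathbb{R}_{1}^{1}\times X)$, which is what is needed to carry the argument of Proposition \ref{prop:Hgeo} over to this setting and extract explicit maximal curves between causally related diamonds.
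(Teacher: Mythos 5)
Your reduction --- rewrite both tubular neighbourhoods as diamonds via the preceding lemma and try to absorb the two ``outer'' cones --- is the natural attack, and you are right that everything hinges on the two scalar inequalities $(t_2-t_1)-r+2u\ge d(x_1,x_2)$ and $(s_2-s_1)+r-2u\ge d(y_1,y_2)$. (Strictly speaking, the cone containments are sufficient for the set identity rather than equivalent to it, since the identity only requires containment on the intersection; but that is minor.) The real problem is that the step you defer as ``a check on $u$ using the bounds already in force'' does not go through: at $u=0$ the first inequality says $r\le (t_2-t_1)-d(x_1,x_2)$, while Proposition \ref{HausdorffDiamonds} forces $r\ge (t_2-t_1)+d(x_1,x_2)$, so it can hold only if $x_1=x_2$; symmetrically the second fails at $u=r$ unless $y_1=y_2$. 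Nor is this an artifact of your sufficient condition being too strong: in $\mathbb{R}^1_1\times_T\mathbb{R}$ take $D_1=J^{+}(0,0)\cap J^{-}(1,0)$ and $D_2=J^{+}(2,1)\cap J^{-}(3,1)$, so that $r=3$; at $u=0$ the left-hand side is $D_1$, while the right-hand side $J^{+}(-1,1)\cap J^{-}(1,0)$ contains $(0,1)\notin D_1$. So with $r=d_H(D_1,D_2)$ as computed in Proposition \ref{HausdorffDiamonds}, the claimed identity itself fails, and no case analysis on which term of the maximum is active will rescue the displayed inequalities.

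For comparison, the paper's own proof never engages with the set equality: citing Lemma \ref{DiamondsRelated}, it only verifies that the two apices on the right-hand side are causally related, $(t_2+u-r,x_2)\leq_T(s_1+u,y_1)$, and in doing so it substitutes $r=t_2-t_1-d(x_1,x_2)$ --- a $\tau_H$-type quantity in the spirit of Proposition \ref{HausdorffLorentzian}, not the Hausdorff distance of Proposition \ref{HausdorffDiamonds}. The obstacle you isolated is therefore genuine and traces back to an inconsistency in the normalization of $r$ between the statement and the argument, not to a computation you merely failed to finish. Your write-up correctly locates the crux, but as it stands it is a diagnosis rather than a proof: the lemma would need to be reformulated (or $r$ reinterpreted) before either your route or the paper's can be completed.
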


\begin{proof}
In virtue of Lemma \ref{DiamondsRelated} it is enough to prove that $(t_2+u-r,x_2)\leq_T (s_1+u,y_1)$, which is equivalent to $s_1-t_2+r\geq d(x_2,y_1)$. First, applying Proposition \ref{HausdorffDiamonds} we way suppose that
\[
r=d_H(D_1,D_2)=t_2-t_1-d(x_1,x_2).
\]
But $s_1-t_1\geq d(x_1,y_1)$, thus
\[
s_1-t_2+r = s_1-t_1+d(x_1,x_2) \geq d(y_1,x_1) + d(x_1,x_2)\geq d(y_1,x_2).
\]
\end{proof}

\begin{proposition}\label{PathConnected}
Let $D_1,D_1\in \mathcal{D}(\mathbb{R}^{1}_{1}\times X)$, $r=d_H(D_1,D_2)>0$ and $D_1\leq_H D_2$. Then $\gamma:[0,r]\to \mathcal{D}(\mathbb{R}^{1}_{1}\times X)$ defined as
\[
\gamma(u)=\overline{U}_{u}(D_1)\cap \overline{U}_{r-u}(D_2)
\]
is a causal curve connecting $D_1$ with $D_2$ and $\tau_H(\gamma(u),\gamma(v))=\vert u-v\vert $ for every $u,v\in [0,r]$.
\end{proposition}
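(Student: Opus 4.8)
The plan is to combine the explicit description of tubular neighborhoods of diamonds with the formula for $\tau_H$ from Lemma \ref{TauH}(2), mimicking the argument of Proposition \ref{prop:Hgeo}. First I would observe that by Lemma \ref{CausalCurves} the set $\gamma(u) = \overline{U}_u(D_1)\cap\overline{U}_{r-u}(D_2)$ is indeed a causal diamond, namely $\gamma(u) = J^+(t_2+u-r,x_2)\cap J^-(s_1+u,y_1)$, so that $\gamma$ is a well-defined path in $\mathcal{D}(\mathbb{R}^1_1\times_T X)$. In particular $\gamma(0) = \overline{U}_0(D_1)\cap\overline{U}_r(D_2) = D_1\cap\overline{U}_r(D_2) = D_1$ (using $D_1\subset\overline{U}_r(D_2)$ since $r = d_H(D_1,D_2)$) and similarly $\gamma(r) = D_2$, so $\gamma$ connects $D_1$ with $D_2$.

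Next I would compute $\tau_H(\gamma(u),\gamma(v))$ for $u\le v$ using Lemma \ref{TauH}(2), which expresses $\tau_H$ as the minimum of $\inf_{a}\dist^-_L(a,\cdot)$ over the first diamond and $\inf_b\dist^+_L(\cdot,b)$ over the second. For the taxicab space $\mathbb{R}^1_1\times_T X$ one has $\dist^-_L((p,z),D) = \sup\{\tau_T((p,z),c): c\in D\}$, and since $\gamma(v) = J^+(t_2+v-r,x_2)\cap J^-(s_1+v,y_1)$, the supremum of $\tau_T((p,z),\cdot)$ over $\gamma(v)$ is realized at its future tip $(s_1+v,y_1)$ whenever $(p,z)$ lies to the past of it; evaluating at the past tip $(t_2+u-r,x_2)$ of $\gamma(u)$ gives $\dist^-_L$-value $(s_1+v)-(t_2+u-r) - d(x_2,y_1)$, and I must check this is the infimum over all $a\in\gamma(u)$ (the past tip being the "hardest" point to reach). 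Symmetrically, $\inf_{b\in\gamma(v)}\dist^+_L(\gamma(u),b)$ is attained at the future tip of $\gamma(v)$ relative to the future tip of $\gamma(u)$, yielding the same kind of expression. Using the normalization $r = d_H(D_1,D_2) = t_2-t_1-d(x_1,x_2)$ from Proposition \ref{HausdorffDiamonds} (the generic case; the other case is symmetric) together with $s_1-t_1\ge d(x_1,y_1)$, these quantities should both simplify to exactly $v-u$, giving $\tau_H(\gamma(u),\gamma(v)) = v-u = |u-v|$, which in particular is positive, so $\gamma(u)\ll_H\gamma(v)$ and $\gamma$ is (future) causal — indeed timelike — hence $\mathcal{D}(\mathbb{R}^1_1\times_T X)$ is causally path connected.

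The main obstacle I anticipate is the bookkeeping in identifying which vertices of the diamonds realize the relevant suprema and infima: one must verify that for every $a$ in the source diamond the value $\dist^-_L(a,\gamma(v))$ is at least $v-u$ (with equality at the past tip), which amounts to checking that every point of $\gamma(u)$ lies in the causal past of the future tip of $\gamma(v)$ and that the reverse-triangle/time-separation estimates in the taxicab product go through uniformly. This is where the explicit tip formula from Lemma \ref{CausalCurves} and the Hausdorff-distance formula from Proposition \ref{HausdorffDiamonds} do the real work; once the tips are pinned down, the remaining computation is a short substitution using $r = t_2-t_1-d(x_1,x_2)$ and the triangle inequality for $d$. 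I would also remark that Lipschitz continuity of $\gamma$ follows from Proposition \ref{HausdorffDiamonds}, since $d_H^T(\gamma(u),\gamma(v))$ is visibly an affine, hence Lipschitz, function of the parameter.
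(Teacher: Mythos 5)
Your overall strategy is the same as the paper's: identify $\gamma(u)$ as the explicit diamond $J^{+}(t_2+u-r,x_2)\cap J^{-}(s_1+u,y_1)$ via Lemma \ref{CausalCurves}, check the endpoints, and then evaluate $\tau_H(\gamma(u),\gamma(v))$ through Lemma \ref{TauH}(2) by locating where the relevant suprema and infima are attained (this is exactly the computation the paper isolates in Proposition \ref{HausdorffLorentzian}). Your treatment of the endpoints and of Lipschitz continuity is fine, and in fact more explicit than the paper's one-line dismissal of that part.

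However, there is a concrete error in the $\tau_H$ computation: you have the extremizing tips swapped. For $a\in\gamma(u)$ one indeed has $\dist^{-}_{L}(a,\gamma(v))=\tau_T\bigl(a,(s_1+v,y_1)\bigr)$, but the \emph{infimum} of this quantity over $a\in\gamma(u)$ is attained at the \emph{future} tip $(s_1+u,y_1)$ of $\gamma(u)$ (the point with the least future available), not at the past tip. Since $a\leq_T(s_1+u,y_1)$ forces $p+d(z,y_1)\leq s_1+u$ for $a=(p,z)$, every $a$ gives value at least $v-u$, with equality exactly at the future tip, where the shared spatial component $y_1$ makes the answer $v-u$ immediately. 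The value you propose at the past tip, $(s_1+v)-(t_2+u-r)-d(x_2,y_1)=(v-u)+\bigl(s_1-t_2+r-d(x_2,y_1)\bigr)$, is $\geq v-u$ by Lemma \ref{CausalCurves} but is in general \emph{strictly} larger; it does not simplify to $v-u$ using $r=t_2-t_1-d(x_1,x_2)$, so the check you defer ("the past tip being the hardest point to reach") would fail. Symmetrically, $\inf_{b\in\gamma(v)}\dist^{+}_{L}(\gamma(u),b)$ is realized by the pair of \emph{past} tips $(t_2+u-r,x_2)$ and $(t_2+v-r,x_2)$, again giving $v-u$. With the tips correctly matched (future with future, past with past, as in the proof of Proposition \ref{HausdorffLorentzian}) the minimum is $\min\{v-u,v-u\}=v-u$ and the rest of your argument, including the conclusion $\gamma(u)\ll_H\gamma(v)$, goes through.
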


\begin{proof}
The first part of the proof is an immediate consequence of Lemma \ref{CausalCurves}. Now we focus on the second part. Let $0\leq u \leq v \leq r$, then by using Lemma \ref{TauH}
\[
\tau_H(\gamma(u),\gamma(v))=\min\{v-u,v-u\}= v-u.
\]
In particular observe that $\tau_H(D_1,D_2)=r$.
\end{proof}

We now establish the main result.

\begin{theorem}
Let $(X,d)$ be a complete geodesic length space. The hyperspace of causal diamonds $\mathcal{D}(\mathbb{R}^{1}_{1}\times_T X)$ is a geodesic globally hyperbolic Lorentzian length space.
\end{theorem}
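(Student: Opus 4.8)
The plan is to verify, in turn, the three defining features in the statement: that $\mathcal{D}(\mathbb{R}^1_1\times_T X)$ is a geodesic Lorentzian pre-length space, that it is a Lorentzian length space (localizability, local causal closedness, and $\tau_H=\mathcal{T}$), and that it is globally hyperbolic. The pre-length space structure is already granted by Theorem~\ref{teo:hiper} together with the global hyperbolicity of $\mathbb{R}^1_1\times_T X$ (Theorem~\ref{LLSRxX}), so the content of the first step is the \emph{geodesic} property. For this I would combine Proposition~\ref{prop:Hgeo}, which gives that the path $\alpha(t)=\overline{U}_t(A)\cap\overline{U}_{r-t}(B)$ is a $d_H$-geodesic, with Proposition~\ref{PathConnected}, which shows that when $D_1\le_H D_2$ the same path stays inside $\mathcal{D}(\mathbb{R}^1_1\times_T X)$ and realizes $\tau_H(\gamma(u),\gamma(v))=|u-v|$. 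Since this path is a $d_H$-geodesic segment and its $\tau_H$-length along any subinterval equals the length of that subinterval, which equals $\tau_H$ of its endpoints, it is a $\tau_H$-maximal causal curve; hence any two causally related diamonds are joined by a maximal curve, and by the Remark following Definition~\ref{defi:lls} the space is geodesic, which in particular gives condition (3) of Definition~\ref{defi:lls}.

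\textbf{Length space structure.} For causal path connectedness I would note that Proposition~\ref{PathConnected} already furnishes a causal curve between any causally related pair, and that when $D_1\ll_H D_2$ one has $r=\tau_H(D_1,D_2)>0$, so the curve $\gamma$ above has positive $\tau_H$-length on every subinterval and is therefore timelike. For local causal closedness I would use the explicit description of $\le_H$ on taxicab diamonds from Lemma~\ref{DiamondsRelated}: a diamond $J^+(t_1,x_1)\cap J^-(s_1,y_1)$ is parametrized by the quadruple $((t_1,x_1),(s_1,y_1))$ with $(t_1,x_1)\le_T(s_1,y_1)$, and by Proposition~\ref{HausdorffDiamonds} the map sending a diamond to this quadruple is (bi-Lipschitz onto its image, in fact) a homeomorphism of $\mathcal{D}(\mathbb{R}^1_1\times_T X)$ onto the corresponding subset of $(\mathbb{R}^1_1\times_T X)\times(\mathbb{R}^1_1\times_T X)$; under this identification $\le_H$ becomes the product order $\le_T\times\le_T$, which is closed because $\mathbb{R}^1_1\times_T X$ is locally causally closed by Theorem~\ref{LLSRxX}. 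For the localizing neighborhoods I would take $\Omega_{D}^\rho=I^+_H(D')\cap I^-_H(D'')$ for suitable diamonds $D',D''$ chronologically bracketing $D$; conditions (1),(3),(4) of the localizing definition follow from the maximal curves of Proposition~\ref{PathConnected} restricted to such a neighborhood, and condition (2), the uniform $d_H$-length bound on causal curves, follows from the telescoping-sum estimate as in the proof of Theorem~\ref{LLSRxX}: along a causal curve of diamonds the endpoint parameters $(t_1,x_1)$ and $(s_1,y_1)$ each trace causal curves in $\mathbb{R}^1_1\times_T X$, whose $d_T$-lengths are controlled by the time increments, and $d_H$ of two diamonds is the max of the two endpoint displacements.

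\textbf{Global hyperbolicity.} Non-total imprisonment follows from the same $d_H$-length estimate: on a compact set $\mathcal{K}\subset\mathcal{D}(\mathbb{R}^1_1\times_T X)$ the parameter quadruples lie in a compact subset of $(\mathbb{R}^1_1\times_T X)^2$, so the time-coordinate spread is bounded, and a causal curve of diamonds has $d_H$-length at most a fixed multiple of that spread. For compactness of causal diamonds $J^+_H(D_1)\cap J^-_H(D_2)$ in $\mathcal{D}(\mathbb{R}^1_1\times_T X)$, I would again pass to the quadruple picture: by Lemma~\ref{DiamondsRelated} this set corresponds to the set of quadruples $((a_1,b_1),(a_2,b_2))$ with $(a_1,b_1)$ in the causal diamond $J^+(t_1,x_1)\cap J^-(t_2,x_2)$ and $(a_2,b_2)$ in $J^+(s_1,y_1)\cap J^-(s_2,y_2)$ (and $(a_1,b_1)\le_T(a_2,b_2)$), which is a closed subset of the product of two causal diamonds of $\mathbb{R}^1_1\times_T X$; those diamonds are compact by Theorem~\ref{LLSRxX}, and closedness of the extra constraint follows from local causal closedness, so the set is compact, and the parametrizing homeomorphism transports this to compactness of the causal diamond in $\mathcal{D}(\mathbb{R}^1_1\times_T X)$.

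\textbf{Main obstacle.} The routine parts are the length estimates, which are direct transcriptions of the argument in Theorem~\ref{LLSRxX}. The delicate point is the construction and verification of \emph{localizing} neighborhoods: one must choose $\Omega_D$ so that \emph{every} causal curve of diamonds inside it is realized as the curve of Proposition~\ref{PathConnected} for its endpoints and has uniformly bounded $d_H$-length, and one must check the nonempty-chronological-intersection condition (2) of the localizing definition, which amounts to producing, for any diamond $D$ near $D_0$, a strictly earlier and strictly later diamond still inside $\Omega_{D_0}$ --- this requires the midpoint property of $(X,d)$ (Lemma~\ref{TubularCompact}(5)) exactly as in the proof of Theorem~\ref{LLSRxX}, applied simultaneously to both defining vertices of the diamond.
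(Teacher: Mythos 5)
Your proposal is correct and follows essentially the same route as the paper: maximal causal curves from Proposition~\ref{PathConnected}, localizing neighborhoods of the form $I^+_H(D^-_r)\cap I^-_H(D^+_r)$ with the telescoping $d_H$-length bound and the midpoint construction, local causal closedness via Lemma~\ref{DiamondsRelated}, and compactness of $J^+_H(D_1)\cap J^-_H(D_2)$ through the vertex-quadruple parametrization. The only difference is that you spell out non-total imprisonment explicitly (which the paper leaves implicit in the length estimate), a welcome bit of extra care rather than a different approach.
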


\begin{proof}
Local causal closedness in $\mathcal{D}(\mathbb{R}^{1}_{1}\times_T X)$ follows easily from the corresponding property of $\mathbb{R}^{1}_{1}\times_T X$ and Lemma \ref{DiamondsRelated}. Moreover,   Proposition \ref{PathConnected} guarantees that $\mathcal{D}(\mathbb{R}^{1}_{1}\times_T X)$ is causally path connected and intrinsic. Thus, to prove that the hyperspace of causal diamonds is a Lorentzian length space  it only remains to prove that $\mathcal{D}(\mathbb{R}^{1}_{1}\times_T X)$ is localizable. Indeed, we will show that 
\[
\Omega^{r}_{D} = I^{+}_{H}(D^{-}_{r})\cap I^{-}_{H}(D^{+}_{r})
\]
is a localizing neighborhood, where $D=J^{+}(t,x)\cap J^{-}(s,y)$, $r>0$, $D^{-}_{r}=J^{+}(t-r,x)\cap J^{-}(s-r,y)$ and $D^{+}_{r}=J^{+}(t+r,x)\cap J^{-}(s+r,y)$. First, let $\gamma:[a,b]\to \mathcal{D}(\mathbb{R}^{1}_{1}\times_T X)$ to be a causal curve contained in $\Omega^{r}_{D}$, where 
\[
\gamma(u)=J^{+}(\alpha(u),\beta(u))\cap J^{-}(\tilde{\alpha}(u),\tilde{\beta}(u))
\]
and $\alpha, \tilde{\alpha}:[a,b]\to \mathbb{R}$, $\beta, \tilde{\beta}:[a,b]\to X$. Now, for a partition $a=u_0<u_1<\cdots < u_n=b$ observe that $\alpha(u_{i+1})-\alpha(u_{i})\geq   d(\beta(u_{i+1}),\beta(u_{i}))$ and $\tilde{\alpha}(u_{i+1})-\tilde{\alpha}(u_{i}) \geq d(\tilde{\beta}(u_{i+1}),\tilde{\beta}(u_{i}))$ by Lemma \ref{DiamondsRelated}. Then, applying Lemma \ref{HausdorffDiamonds} we obtain that
\[
\begin{array}{rcl}
\displaystyle{\sum_{i=0}^{n-1} d^{T}_{H}(\gamma(u_i),\gamma(u_{i+1}))} &=& \displaystyle{\sum_{i=0}^{n-1} \max\{\vert\alpha(u_{i+1})-\alpha(u_{i})\vert + d(\beta(u_{i+1}),\beta(u_{i}))}, \\
&& \quad\quad\quad\quad \vert\tilde{\alpha}(u_{i+1})-\tilde{\alpha}(u_{i})\vert + d(\tilde{\beta}(u_{i+1}),\tilde{\beta}(u_{i}))\} \\
&\leq& \displaystyle{\sum_{i=0}^{n-1} 2(\alpha(u_{i+1})-\alpha(u_{i})) + 2(\tilde{\alpha}(u_{i+1})-\tilde{\alpha}(u_{i}))} \\
&=& 2\left( \alpha(b)-\alpha(a) + \tilde{\alpha}(b)-\tilde{\alpha}(a) \right).
\end{array}
\]
On the other hand, since $D^{+}_{r}\leq_{H} \gamma(a)\leq_{H} \gamma(b) \leq_{H} D^{+}_{r}$, it follows that 
\[
\alpha(b)-\alpha(a) + \tilde{\alpha}(b)-\tilde{\alpha}(a) \leq 4r.
\]
Therefore $L_{d^{T}_{H}}(\gamma)\leq 8r$ for every causal curve contained in $\Omega^{r}_{D}$. It is clear that $(\Omega^{r}_{D}, d^{T}_{H},\ll_{H},\leq_{H}, \omega^{r}_{D},\tau_H\vert_{\Omega^{r}_{D}})$ is a Lorentzian pre-length space. If $D_0=J^{+}(t_0,x_0)\cap J^{-}(s_0,y_0)\in \Omega^{r}_{D}$ we need to see that $I^{+}_{H}(D_0)\cap \Omega^{r}_{D}\neq \varnothing$. In order to do this, observe that if $m_1,m_2 \in X$ are some middle points for $x, x_0$ and $y, y_0$, respectively, then $D'=J^{+}\left(\frac{t+r+t_0}{2},m_1\right)\cap J^{-}\left(\frac{s+r+s_0}{2},m_2\right)$ belongs to $I^{+}_{H}(D_0)\cap \Omega^{r}_{D}$. Similarly we can prove that $I^{-}_{H}(D_0)\cap \Omega^{r}_{D}\neq \varnothing$ for every $D_0\in \Omega^{r}_{D}$. 

First, we need to prove that every causal diamond $J^{+}_{H}(D_1)\cap J^{-}_{H}(D_2)$ is compact with respect to the topology induced by $d_H$ on $\mathcal{D}(\mathbb{R}^{1}_{1}\times X)$, where $D_1=J^{+}(t_1,x_1)\cap J^{-}(s_1,y_1)$ and $D_2=J^{+}(t_2,x_2)\cap J^{-}(s_2,y_2)$. Let $\{F_n=J^{+}(p_n,z_n)\cap J^{-}(q_n,w_n)\}$ a sequence of diamonds contained in $J^{+}_{H}(D_1)\cap J^{-}_{H}(D_2)$ with $F_n\neq \varnothing$ for every $n\geq 1$. Then $D_1\leq_H F_n \leq D_2$ and by applying Lemma \ref{DiamondsRelated} we obtain $(t_1,x_1)\leq_T (p_n,z_n)\leq (t_2,x_2)$ and $(s_1,y_1)\leq_T (q_n,w_n)\leq (s_2,y_2)$. It follows that $(p_n,z_n)\in J^{+}(t_1,x_1)\cap J^{-}(t_2,x_2)$ and $(q_n,w_n)\in J^{+}(s_1,y_1)\cap J^{-}(s_2,y_2)$. By using the compactness of $J^{+}(t_1,x_1)\cap J^{-}(t_2,x_2)$ and $J^{+}(s_1,y_1)\cap J^{-}(s_2,y_2)$ and a diagonal Cantor's process we are able to find two convergent subsequences $\{(p_{n_k},z_{n_k})\}_{k\in\mathbb{N}}$ and $\{(q_{n_k},w_{n_k})\}_{k\in\mathbb{N}}$ converging to $(p,z)$ and $(q,w)$, respectively. Moreover, the sequence $\{F_{n_k}\}_{k\in\mathbb{N}}$ converges to $F=J^{+}(p,z)\cap J^{-}(q,w)$. Since $\leq_T$ is closed for every diamond in $\mathbb{R}^{1}_{1}\times_T X$ we have that $(p,z)\in J^{+}(t_1,x_1)\cap J^{-}(t_2,x_2)$ and $(q,w)\in J^{+}(s_1,y_1)\cap J^{-}(s_2,y_2)$. This last statement implies that $F\in J_H^{+}(D_1)\cap J_H^{-}(D_2)$.
\end{proof}


We end this section by providing a nice realization of $\mathcal{D}(\mathbb{R}^{1}_{1}\times_T X)$, that in passing reveals a close connection between the Lorentzian taxicab and uniform products.

\begin{theorem}
Let $f: \mathcal{D}(\mathbb{R}^{1}_{1}\times_{T} X) \to X\times_{\infty} X$ defined as $f(D)=((t,x),(s,y))$, where $D=J^{+}(t,x)\cap J^{-}(s,y)$. Then $f$ is an isometry in $f(D)$ and a $\tau$-preserving map.
\end{theorem}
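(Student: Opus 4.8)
The plan is to show that $f$ has two separate but related properties: it is distance-preserving (as an isometry onto its image, with respect to $d_H^T$ on the domain and $d_\infty$ on $X\times_\infty X$) and it is a $\tau$-preserving map (intertwining $\tau_H$ with $\tau_\infty$), and along the way one sees it also respects the causal and chronological relations. The key observation that makes everything work is that a causal diamond $D=J^+(t,x)\cap J^-(s,y)$ in $\mathbb{R}^1_1\times_T X$ is completely determined by its pair of ``tips'' $((t,x),(s,y))$ once we restrict attention to actual diamonds with $(t,x)\leq_T (s,y)$; thus $f$ is a well-defined bijection onto the subset $\{((t,x),(s,y)) : (t,x)\leq_T (s,y)\}\subset X\times_\infty X$, and the content is that the three structures on $\mathcal{D}(\mathbb{R}^1_1\times_T X)$ are transported correctly.

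First I would verify the distance claim. Writing $f(D_1)=((t_1,x_1),(s_1,y_1))$ and $f(D_2)=((t_2,x_2),(s_2,y_2))$, Proposition \ref{HausdorffDiamonds} gives
\[
d_H^T(D_1,D_2)=\max\{\vert t_2-t_1\vert + d(x_1,x_2),\ \vert s_2-s_1\vert + d(y_1,y_2)\},
\]
while by definition of the uniform metric and the taxicab metric on $\mathbb{R}^1_1\times_T X$,
\[
d_\infty(f(D_1),f(D_2))=\max\{d_T((t_1,x_1),(t_2,x_2)),\ d_T((s_1,y_1),(s_2,y_2))\},
\]
and $d_T((t_i,x_i),(t_j,x_j))=\vert t_j-t_i\vert+d(x_i,x_j)$. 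The two right-hand sides coincide, so $f$ is distance-preserving onto $f(\mathcal{D}(\mathbb{R}^1_1\times_T X))$. Next I would record the causal claim: by Lemma \ref{DiamondsRelated}, $D_1\leq_H D_2$ if and only if $(t_1,x_1)\leq_T (t_2,x_2)$ and $(s_1,y_1)\leq_T (s_2,y_2)$, which is exactly $f(D_1)\leq_\infty f(D_2)$; the analogous equivalence for $\ll_H$ versus $\ll_\infty$ can be extracted from the same style of argument (or deduced from the fact that in this setting $\tau_H>0$ characterizes $\ll_H$).

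Then comes the $\tau$-preserving part. When $D_1\leq_H D_2$, I would compute $\tau_H(D_1,D_2)$ directly from Lemma \ref{TauH}(2): for a fixed tip-pair one evaluates $\dist^-_L(a,D_2)$ for $a\in D_1$ and $\dist^+_L(D_1,b)$ for $b\in D_2$ using the explicit form of $\tau_T$, and the infima are attained at the extreme tips, yielding
\[
\tau_H(D_1,D_2)=\min\{\tau_T((t_1,x_1),(t_2,x_2)),\ \tau_T((s_1,y_1),(s_2,y_2))\},
\]
which is precisely $\tau_\infty(f(D_1),f(D_2))$. When $D_1\not\leq_H D_2$ both sides vanish. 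I expect the main obstacle to be this last computation: carefully justifying that the infimum over $a\in D_1$ of $\dist^-_L(a,D_2)=\sup_{c\in D_2}\tau_T(a,c)$ is realized precisely at the past tip of $D_1$ (and the outer $\sup$ at the future tip of $D_2$), which requires using the geodesic structure of $(X,d)$ together with the reverse triangle inequality and the explicit shape of causal diamonds established in the earlier lemmas, rather than a generic argument; everything else is bookkeeping with the formulas already proved. Finally I would remark that, since $f$ is a distance-preserving bijection onto its image that intertwines both the order relations and the time separation functions, it exhibits $\mathcal{D}(\mathbb{R}^1_1\times_T X)$ as a sub-Lorentzian-pre-length-space of the uniform product $X\times_\infty X$.
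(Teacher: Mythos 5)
Your overall route is the same as the paper's: the metric claim is read off from Proposition \ref{HausdorffDiamonds} together with the definitions of $d_T$ and $d_\infty$, the causal claim from Lemma \ref{DiamondsRelated}, and the $\tau$-claim from the computation of $\tau_H(D_1,D_2)$ via Lemma \ref{TauH}(2) by locating the extremizers at the tips of the diamonds (this is exactly the content of the paper's Proposition \ref{HausdorffLorentzian}). The metric and causal parts of your argument are correct as written, and your final formula $\tau_H(D_1,D_2)=\min\{\tau_T((t_1,x_1),(t_2,x_2)),\tau_T((s_1,y_1),(s_2,y_2))\}$ is the right one.

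However, the one step you single out as the main obstacle is described incorrectly. You assert that $\inf_{a\in D_1}\dist^-_L(a,D_2)$ is realized at the \emph{past} tip of $D_1$. In fact the past tip \emph{maximizes} $\dist^-_L(\cdot,D_2)$: by the reverse triangle inequality, for $(t_1,x_1)\le_T a\le_T (s_1,y_1)\le_T (s_2,y_2)$ one has $\tau_T(a,(s_2,y_2))\ge \tau_T(a,(s_1,y_1))+\tau_T((s_1,y_1),(s_2,y_2))\ge \tau_T((s_1,y_1),(s_2,y_2))$, so pushing $a$ to the past only increases $\tau_T(a,\cdot)$, and the infimum is attained at the \emph{future} tip $(s_1,y_1)$, giving $s_2-s_1-d(y_1,y_2)$. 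The past tips enter only in the other branch: $\dist^+_L(D_1,b)=\sup_{a\in D_1}\tau_T(a,b)$ is attained at the past tip of $D_1$, and its infimum over $b\in D_2$ at the past tip of $D_2$, giving $t_2-t_1-d(x_1,x_2)$. Had you evaluated the first branch at the past tip of $D_1$ as described, you would obtain $\tau_T((t_1,x_1),(s_2,y_2))=s_2-t_1-d(x_1,y_2)$, which is neither of the two quantities appearing in the minimum, so the computation as you set it up would not yield the formula you state. With the extremizers corrected as above, the argument closes and coincides with the paper's proof.
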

\begin{proof}
That $f$ is a metric isometry follows directly from its definition and Proposition \ref{HausdorffDiamonds}. Moreover, $f$ is $\tau$ preserving as the following result shows.
\end{proof}

\begin{proposition}\label{HausdorffLorentzian}
Let $D_1=J^{+}(t_1,x_1)\cap J^{-}(s_1,y_1)$ and $D_2=J^{+}(t_2,x_2)\cap J^{-}(s_2,y_2)$ two diamonds with $(t_1,x_1)\leq_T (s_1,y_1)$, $(t_2,x_2)\leq_T (s_2,y_2)$ and $D_1\leq_{H} D_2$, then
\[
\tau_H(D_1,D_2) = \min\{ t_2-t_1-d(x_1,x_2) , s_2-s_1-d(y_1,y_2) \}.
\]
\end{proposition}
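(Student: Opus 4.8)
The plan is to reduce everything to the closed formula of Lemma~\ref{TauH}(2),
\[
\tau_H(D_1,D_2)=\min\Bigl\{\ \inf_{a\in D_1}\dist^{-}_{L}(a,D_2),\ \ \inf_{b\in D_2}\dist^{+}_{L}(D_1,b)\ \Bigr\},
\]
and then to evaluate the two infima explicitly. First I would record the structural consequences of the hypothesis $D_1\leq_H D_2$: by Lemma~\ref{DiamondsRelated} it gives $(t_1,x_1)\leq_T(t_2,x_2)$ and $(s_1,y_1)\leq_T(s_2,y_2)$. In particular, every $a\in D_1$ satisfies $a\leq_T(s_1,y_1)\leq_T(s_2,y_2)$, and every $b\in D_2$ satisfies $(t_1,x_1)\leq_T(t_2,x_2)\leq_T b$; both facts are used to guarantee that the $\tau_T$-values appearing below are genuinely positive rather than set to $0$ by definition.

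Next I would show that for each $a\in D_1$ the supremum defining $\dist^{-}_{L}(a,D_2)$ is attained at the ``top corner'' $(s_2,y_2)$ of $D_2$. Indeed, any $c\in D_2$ satisfies $c\leq_T(s_2,y_2)$: if $a\leq_T c$, the reverse triangle inequality for $\tau_T$ yields $\tau_T(a,c)\leq\tau_T(a,c)+\tau_T(c,(s_2,y_2))\leq\tau_T(a,(s_2,y_2))$, and if $a\not\leq_T c$ then $\tau_T(a,c)=0\leq\tau_T(a,(s_2,y_2))$. Writing $a=(u,z)$, this gives $\dist^{-}_{L}(a,D_2)=\tau_T\bigl((u,z),(s_2,y_2)\bigr)=s_2-u-d(z,y_2)$. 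A mirror argument shows that for $b=(v,w)\in D_2$ the supremum defining $\dist^{+}_{L}(D_1,b)$ is attained at the ``bottom corner'' $(t_1,x_1)$ of $D_1$, so $\dist^{+}_{L}(D_1,b)=v-t_1-d(x_1,w)$.

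Then I would minimize these quantities over the respective diamonds. For $a=(u,z)\in D_1$ the defining constraint $(u,z)\leq_T(s_1,y_1)$ reads $s_1-u\geq d(z,y_1)$, hence
\[
s_2-u-d(z,y_2)=(s_2-s_1)+(s_1-u)-d(z,y_2)\ \geq\ (s_2-s_1)+d(z,y_1)-d(z,y_2)\ \geq\ (s_2-s_1)-d(y_1,y_2),
\]
where the last step uses $d(z,y_1)-d(z,y_2)\geq-d(y_1,y_2)$; equality holds at $a=(s_1,y_1)\in D_1$, so $\inf_{a\in D_1}\dist^{-}_{L}(a,D_2)=s_2-s_1-d(y_1,y_2)$. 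The symmetric computation, using the constraint $(t_2,x_2)\leq_T(v,w)$ for $b=(v,w)\in D_2$, i.e.\ $v-t_2\geq d(x_2,w)$, gives $\inf_{b\in D_2}\dist^{+}_{L}(D_1,b)=t_2-t_1-d(x_1,x_2)$, attained at $b=(t_2,x_2)$. Substituting both infima into the formula for $\tau_H$ finishes the proof.

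The main obstacle is the second step: one must argue carefully that the supremum in $\dist^{\pm}_L$ over a whole diamond is controlled by a single vertex, which requires separating the case in which the two points are not causally related (where $\tau_T$ vanishes) from the case in which they are (where the reverse triangle inequality applies). Once that identification is in place, the remaining work is just two applications of the metric triangle inequality together with the causal inequalities defining $D_1$ and $D_2$; note also that both infima are automatically nonnegative, since $D_1\leq_H D_2$ forces $(t_1,x_1)\leq_T(t_2,x_2)$ and $(s_1,y_1)\leq_T(s_2,y_2)$.
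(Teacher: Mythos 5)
Your proposal is correct and follows essentially the same route as the paper: reduce to the formula of Lemma~\ref{TauH}(2), show via the reverse triangle inequality that $\dist^{-}_{L}(\cdot,D_2)$ is realized at the top vertex $(s_2,y_2)$ and $\dist^{+}_{L}(D_1,\cdot)$ at the bottom vertex $(t_1,x_1)$, and then minimize over the opposite diamond to land on $\tau_T((s_1,y_1),(s_2,y_2))$ and $\tau_T((t_1,x_1),(t_2,x_2))$. The only (harmless) difference is that you carry out the final minimization with the explicit coordinate formula and the metric triangle inequality, where the paper invokes the reverse triangle inequality once more; your version actually spells out a step the paper leaves as ``similarly.''
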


\begin{proof}
Let $(u_1,z_1)\in D_1$, then 
\[
\begin{array}{rcl}
\dist_{L}^{-}((u_1,z_1),D_2) &=& \sup\{\tau((u_1,z_1),(u_2,z_2)): (u_2,z_2)\in D_2\} \\
&=& \sup\{\tau((u_1,z_1),(u_2,z_2)): (u_1,z_1) \leq (u_2,z_2)\in D_2\}.
\end{array}
\]
Thus, for $ (u_1,z_1) \leq (u_2,z_2)\leq (s_2,y_2)$ we obtain
\[
\begin{array}{rcl}
\tau((u_1,z_1),(u_2,z_2)) &\leq& \tau((u_1,z_1),(u_2,z_2))+ \tau((u_2,z_2), (s_2,y_2)) \\
&\leq& \tau((u_1,z_1),(s_2,y_2)).
\end{array}
\]
In conclusion we must have 
\[
\dist_{L}^{-}((u_1,z_1),D_2)=  \tau((u_1,z_1),(s_2,y_2)).
\]
Since $(u_1,z_1)\leq (s_1,y_1)\leq (s_2,y_2)$, we can show similarly that 
\[
\displaystyle{\inf_{(u_1,z_1)\in D_1}\dist_{L}^{-}((u_1,z_1),D_2) = \tau((s_1,y_1),(s_2,y_2)) = s_2-s_1-d(y_1,y_2).}
\]
Also, a similar process lead us
\[
\displaystyle{\inf_{(u_2,z_2)\in D_2}\dist_{L}^{+}(D_1,(u_2,z_2)) = \tau((t_1,x_1),(t_2,x_2)) = t_2-t_1-d(x_1,x_2).}
\]
Then, by the way we defined $\tau_H$ we have
\[
\tau_H(D_1,D_2) = \min\{t_2-t_1-d(x_1,x_2), s_2-s_1-d(y_1,y_2)\}.
\] 
\end{proof}

\section*{Acknowledgments}

W. Barrera acknowledges the support of Conacyt under grants SNI 45382 and Ciencia de Frontera 21100. L. Montes recognizes the support of Conacyt under the Becas Nacionales program (783177). D. Solis was partially supported by Conacyt SNI 38368 and UADY-FMAT PTA 2023.

\begin{enumerate}
\item \textsc{Waldemar Barrera. Facultad de Matem\'aticas. Universidad Aut\'onoma de Yuct\'an, Perif\'erico Norte 13615. M\'erida, M\'exico. } bvargas@correo.uady.mx

\item \textsc{Luis Montes de Oca. Facultad de Matem\'aticas. Universidad Aut\'onoma de Yuct\'an, Perif\'erico Norte 13615. M\'erida, M\'exico. } mauricio.montes@alumnos.uady.mx

\item \textsc{Didier A. Solis. Facultad de Matem\'aticas. Universidad Aut\'onoma de Yuct\'an, Perif\'erico Norte 13615. M\'erida, M\'exico. }  didier.solis@correo.uady.mx

\end{enumerate}

\end{document}